\definecolor{Green}  {rgb}{0.10,0.70,0.10} 
\definecolor{Orange} {rgb}{1.00,0.50,0.15} 
\definecolor{Red}    {rgb}{0.90,0.00,0.12} 
\definecolor{Purple} {rgb}{0.50,0.25,0.55} 
\definecolor{Turque} {rgb}{0.00,0.65,0.85} 
\definecolor{Blue}   {rgb}{0.00,0.00,1.00} 
\definecolor{Magenta}{rgb}{1.00,0.00,1.00} 
\definecolor{Gold}   {rgb}{1.00,0.75,0.25} 
\definecolor{Seaweed}{rgb}{0.01,0.24,0.09} 
\definecolor{Brown}  {rgb}{0.43,0.26,0.32} 
\definecolor{grey1}  {rgb}{0.20,0.20,0.20} 
\definecolor{grey2}  {rgb}{0.40,0.40,0.40} 
\definecolor{grey3}  {rgb}{0.60,0.60,0.60} 
\definecolor{grey4}  {rgb}{0.80,0.80,0.80} 
\definecolor{grey5}  {rgb}{0.90,0.90,0.90} 
\def\C#1#2{{\ifcase#1\or
             \color{Green}\or \color{Orange}\or \color{Red}\or
              \color{Purple}\or \color{Turque}\or \color{Blue}\or
               \color{Magenta}\or \color{Gold}\or \color{Seaweed}\or
                \color{Brown}\or\color{grey1}\or\color{grey2}\or
                 \color{grey3}\else\color{grey4}\fi#2}}
\definecolor{Slate} {rgb}{0.00,0.45,0.55}
\def\be{\begin{equation}}
\def\ee{\end{equation}}
\newcommand{\bea}{\begin{eqnarray}}
\newcommand{\eea}{\end{eqnarray}}
\newcommand{\ena}{\end{eqnarray}}
\def\pp{{\mathchoice
              %
          {
              \kern 1pt%
              \raise 1pt
              \vbox{\hrule width5pt height0.4pt depth0pt
                    \kern -2pt
                    \hbox{\kern 2.3pt
                          \vrule width0.4pt height6pt depth0pt
                          }
                    \kern -2pt
                    \hrule width5pt height0.4pt depth0pt}%
                    \kern 1pt
           }
            {
              \kern 1pt%
              \raise 1pt
              \vbox{\hrule width4.3pt height0.4pt depth0pt
                    \kern -1.8pt
                    \hbox{\kern 1.95pt
                          \vrule width0.4pt height5.4pt depth0pt
                          }
                    \kern -1.8pt
                    \hrule width4.3pt height0.4pt depth0pt}%
                    \kern 1pt
            }
            {
              \kern 0.5pt%
              \raise 1pt
              \vbox{\hrule width4.0pt height0.3pt depth0pt
                    \kern -1.9pt  
                    \hbox{\kern 1.85pt
                          \vrule width0.3pt height5.7pt depth0pt
                          }
                    \kern -1.9pt
                    \hrule width4.0pt height0.3pt depth0pt}%
                    \kern 0.5pt
            }
            {
              \kern 0.5pt%
              \raise 1pt
              \vbox{\hrule width3.6pt height0.3pt depth0pt
                    \kern -1.5pt
                    \hbox{\kern 1.65pt
                          \vrule width0.3pt height4.5pt depth0pt
                          }
                    \kern -1.5pt
                    \hrule width3.6pt height0.3pt depth0pt}%
                    \kern 0.5pt
            }
        }}
\def\mm{{\mathchoice
   %
                  %
                       {
                             \kern 1pt
               \raise 1pt    \vbox{\hrule width5pt height0.4pt depth0pt
                                  \kern 2pt
                                  \hrule width5pt height0.4pt depth0pt}
                             \kern 1pt}
                       {
                            \kern 1pt
               \raise 1pt \vbox{\hrule width4.3pt height0.4pt depth0pt
                                  \kern 1.8pt
                                  \hrule width4.3pt height0.4pt depth0pt}
                             \kern 1pt}
                       {
                            \kern 0.5pt
               \raise 1pt
                            \vbox{\hrule width4.0pt height0.3pt depth0pt
                                  \kern 1.9pt
                                  \hrule width4.0pt height0.3pt depth0pt}
                            \kern 1pt}
                       {
                           \kern 0.5pt
             \raise 1pt  \vbox{\hrule width3.6pt height0.3pt depth0pt
                                  \kern 1.5pt
                                  \hrule width3.6pt height0.3pt depth0pt}
                           \kern 0.5pt}
                       }}
\def\ad{{\kern0.5pt
                   \alpha \kern-5.05pt \raise5.8pt\hbox{$\textstyle.$}\kern
0.5pt}}
\def\bd{{\kern0.5pt
                   \beta \kern-5.05pt \raise5.8pt\hbox{$\textstyle.$}\kern
0.5pt}}
\def\qd{{\kern0.5pt
                   q \kern-5.05pt \raise5.8pt\hbox{$\textstyle.$}\kern
0.5pt}}
\def\Dot#1{{\kern0.5pt
     {#1} \kern-5.05pt \raise5.8pt\hbox{$\textstyle.$}\kern
0.5pt}}
\def\un#1{\relax\ifmmode\@@underline#1\else
        $\@@underline{\hbox{#1}}$\relax\fi}
\def\dslash{\not{\hbox{\kern-2pt $\partial$}}}
\def\Dslash{\not{\hbox{\kern-4pt $D$}}}
\def\pslash{\not{\hbox{\kern-2.3pt $p$}}}
 \newtoks\slashfraction
 \def\slash#1{\setbox0\hbox{$ #1 $}
 \setbox0\hbox to \the\slashfraction\wd0{\hss \box0}/\box0 }
\def\kcr{{\hbox{\ro \char'170}}}                
\def\ktl{{\hbox{\ro \char'170}}}        
\def\ktr{{\hbox{\ro \char'170}}}        
\def\kbl{{\hbox{\ro \char'170}}}        
\def\kbr{{\hbox{\ro \char'170}}}        
\def\plpl{\raise-2pt\hbox{$\raise3pt\hbox{$_+$}\hskip-6.67pt\raise0.0pt
\hbox{$^+$}\hskip 0.01pt$}}
\def\mimi{\raise-2pt\hbox{$\raise3pt\hbox{$_-$}\hskip-6.67pt\raise0.0pt
\hbox{$^-$}\hskip 0.01pt$}} 
\def\bo{{\raise.15ex\hbox{\large$\Box$}}}               
\def\TH{{\raise.2ex\hbox{$\displaystyle \bigodot$}\mskip-4.7mu \llap H \;}}
\def\face{{\raise.2ex\hbox{$\displaystyle \bigodot$}\mskip-2.2mu \llap {$\ddot
        \smile$}}}                                      
\def\dt#1{\on{\hbox{\bf .}}{#1}}                
\def\Dot#1{\dt{#1}}
\def\leftrightarrowfill{$\mathsurround=0pt \mathord\leftarrow \mkern-6mu
        \cleaders\hbox{$\mkern-2mu \mathord- \mkern-2mu$}\hfill
        \mkern-6mu \mathord\rightarrow$}
\def\dvec#1{\vbox{\ialign{##\crcr
        \leftrightarrowfill\crcr\noalign{\kern-1pt\nointerlineskip}
        $\hfil\displaystyle{#1}\hfil$\crcr}}}           
\def\dt#1{{\buildrel {\hbox{\LARGE .}} \over {#1}}}     
\def\sfrac#1#2{{\vphantom1\smash{\lower.5ex\hbox{\small$#1$}}\over
        \vphantom1\smash{\raise.4ex\hbox{\small$#2$}}}} 
\def\bfrac#1#2{{\vphantom1\smash{\lower.5ex\hbox{$#1$}}\over
        \vphantom1\smash{\raise.3ex\hbox{$#2$}}}}       
\def\afrac#1#2{{\vphantom1\smash{\lower.5ex\hbox{$#1$}}\over#2}}    
\def\ad{{\dot{\alpha}}}
\def\bd{{\dot{\beta}}}
 \font\rOpe=cmsy10                        
 \def\ktl{{\hbox{\rOpe\char'170}}}        
 \def\kbl{{\hbox{\rOpe\char'170}}}        
 \def\kcr{{\reflectbox{\rOpe\char'170}}}        
 \def\ktr{{\reflectbox{\rOpe\char'170}}}        
 \def\kbr{{\reflectbox{\rOpe\char'170}}}        
 \def\Border{\vbox{\hsize0pt
        \setlength{\unitlength}{1mm}
        \newcount\xco
        \newcount\yco
        \xco=-21
        \yco=12
        \begin{picture}(0,0)(-7.5,0)
        \put(\xco,\yco){$\ktl$}
        \advance\yco by-1
        {\loop
        \put(\xco,\yco){$\kcr$}
        \advance\yco by-2
        \ifnum\yco>-240
        \repeat
        \put(\xco,\yco){$\kbl$}}
        \xco=170
        \yco=12
        \put(\xco,\yco){$\ktr$}
        \advance\yco by-1
        {\loop
        \put(\xco,\yco){$\kcr$}
        \advance\yco by-2
        \ifnum\yco>-240
        \repeat
        \put(\xco,\yco){$\kbr$}}
        \put(-19.5,13){\scalebox{.6065}{%
         University of Maryland Center for String and Particle  Theory \&\ Physics Department%
        |University of Maryland Center for String and Particle  Theory \&\ Physics Department}}
        \put(-19.5,-241.5){\scalebox{.5835}{%
         ****University of Maryland * Center for String and
         Particle  Theory* Physics Department****University of Maryland *Center
        for String and Particle  Theory* Physics Department}}
        \end{picture}
        \par\vskip-8mm}}
\definecolor{UMred}{rgb}{.9,.05,.2}
\definecolor{HUblue}{rgb}{.0,.3,.7}
\newdimen\parshift\parshift=\parindent
 \long\def\@footnotetext#1{\insert\footins{\reset@font\footnotesize
           \interlinepenalty\interfootnotelinepenalty\splittopskip%
            \footnotesep\splitmaxdepth\dp\strutbox\floatingpenalty\@MM%
             \hsize\columnwidth\addtolength{\hsize}{-2\parindent}
              \@parboxrestore\protected@edef\@currentlabel%
              {\csname p@footnote\endcsname\@thefnmark}%
                \color@begingroup%
                 \@makefntext{\rule\z@\footnotesep\ignorespaces#1%
                  \@finalstrut\strutbox}%
                \color@endgroup}}
 \long\def\@makefntext#1{\hglue\parshift%
           \vbox{\noindent\baselineskip=11pt plus.5pt minus.5pt\hb@xt@0em{\hss\@makefnmark\kern1pt}#1}}
\newskip\humongous \humongous=0pt plus 1000pt minus 1000pt
\newif\ifdtup
\def\section{\@startsection{section}{1}{\z@}
        {3ex plus-1ex minus-.2ex}{1pt plus1pt}{\large\sf\bfseries\boldmath}}
\def\subsection{\@startsection{subsection}{2}{\z@}
         {1.5ex plus-1ex minus-.2ex}{0.01pt plus1pt}{\sf\slshape}}
\def\subsubsection{\@startsection{subsubsection}{3}{\z@}
          {1.5ex plus-1ex minus-.2ex}{0.01pt plus0.2pt}{\sf\boldmath}}
\def\paragraph{\@startsection{paragraph}{4}{\z@}
           {.75ex \@plus.5ex \@minus.2ex}{-2mm}{\sf\bfseries\boldmath}}
\newcommand{\cmark}{\ding{51}}
\newcommand{\xmark}{\ding{55}}
\newcommand\numberthis{\addtocounter{equation}{1}\tag{\theequation}}
\newcommand{\Mod}[1]{\ (\mathrm{mod}\ #1)}
\numberwithin{equation}{section}
\newlength{\leftside}
\newlength{\rightside}
\newcommand*{\leftterm}{}
\newcommand*{\rightterm}{}
\newcommand*{\term}[1]{$\displaystyle#1$}
\newtheorem{theorem}{Theorem}[section]
\theoremstyle{definition}
\newtheorem{corollary}[theorem]{Corollary}
\newtheorem{definition}[theorem]{Definition}
\newtheorem{proposition}[theorem]{Proposition}
\DeclareMathOperator{\tr}{tr}
\DeclarePairedDelimiter\floor{\lfloor}{\rfloor}
\DeclarePairedDelimiter\ceil{\lceil}{\rceil}
\definecolor{LightGray}{HTML}{f6f8fa}
\definecolor{DarkGreen}{RGB}{46, 125, 50}
\algrenewcommand\ALG@beginalgorithmic{\scriptsize}
\algrenewcommand\alglinenumber[1]{\scriptsize #1:}
\begin{document}

\thispagestyle{empty}
\noindent{\small
\hfill{  \\ 
$~~~~~~~~~~~~~~~~~~~~~~~~~~~~~~~~~~~~~~~~~~~~~~~~~~~~~~~~~~~~~~~~~$
$~~~~~~~~~~~~~~~~~~~~~~~~~~~~~~~~~~~~~~~~~~~~~~~~~~~~~~~~~~~~~~~~~$
{}
}}
\vspace*{0mm}
\begin{center}
{\large \bf
Modern Tensor-Spinor Symbolic Algebra Algorithms and
Computing  \\[2pt]
Non-Closure Geometry
\& Holoraumy in 11D,
\(\mathcal{N} = 1\) Supergravity  \\[2pt]
}   \vskip0.3in
{\large {
S. James Gates, Jr.,\footnote{gatess@umd.edu}\(^{,a}\)
Isaiah B. Hilsenrath,\footnote{ibh@sas.upenn.edu}\(^{,b}\) and 
Saul Hilsenrath\footnote{saulhils@sas.upenn.edu}\(^{,b}\)
}}
\\*[8mm]
{\normalsize \emph{
\centering
$^{a} $Department of Physics, University of Maryland,
\\[1pt]
College Park, MD 20742, USA
 \\[10pt]
$^{b} $Department of Mathematics, University of Pennsylvania,
\\[1pt]
Philadelphia, PA 19104, USA
\\[10mm]
}}
{\textbf{ABSTRACT}}\\[4mm]
\parbox{142mm}{\parindent=2pc\indent\baselineskip=14pt plus1pt
The Supersymmetry Genomics project aims to classify supermultiplets by properties like adinkras and holoraumy. The project's protocol is: 1) set up the SUSY transformation rules (and action) with unknown numerical coefficients, 2) solve those coefficients, and 3) compute desired properties, e.g., non-closure geometry (the non-closure functions in the anticommutator of supercovariant derivatives) and holoraumy (the commutator of supercovariant derivatives). This paper provides the first broadly applicable computer algorithms for completing these computations, comprising a new Cadabra module we call ``SusyPy.'' We provide a significant extension of the available tensor-arithmetic/canonicalization to include spinor-indexed expressions and NW-SE convention, and we provide a new Fierz expansion algorithm for spinor-indexed expressions. On top of this new tensor-spinor symbolic algebra, we have built algorithms for solving for the coefficients of a multiplet and its action, and for computing holoraumy, for both off-shell and on-shell \(\mathcal{N} = 1\) multiplets of any dimension. We apply our tools to assimilate linearized 11D, \(\mathcal{N} = 1\) supergravity into the Genomics project. We demonstrate solving the 11D, \(\mathcal{N} = 1\) multiplet, as in Cremmer, Julia, and Scherk, in a few lines of code, and we provide a comprehensive picture of the multiplet's non-closure geometry. Further, we provide the first-ever computation of the holoraumy of 11D, \(\mathcal{N} = 1\) supergravity.}
\end{center}
\vfill
\noindent
\textbf{PACS:} 11.30.Pb, 12.60.Jv \\
\textbf{Keywords:} holoraumy, supergravity, supersymmetry, symbolic computation, tensor algebra
\vfill
\clearpage
%

%

{
    \hypersetup{hidelinks}
}

\section{Introduction}
For roughly 15 years, the Supersymmetry Genomics \cite{Gates2009,Gates2011,Chappell2013,Mak2019} project has aimed to classify supermultiplets according to their adinkras, and more recently, according to their holoraumy. Throughout this project, linearized multiplets, typically characterized by their sets of fields, have been studied systematically according to the following procedure. We use index/dimension comparison to abstractly write supersymmetry transformation rules for the multiplet, each of the general form
\begin{equation}\label{susy_rule_ex}
    {\rm D}_{\alpha} A = \sum_{j} u_j B_{\alpha}^j,
\end{equation}
where \({\rm D}_{\alpha}\) is a supercovariant derivative, \(A\) is a field, the \(B_{\alpha}^j\) are terms in the fields of opposite type (e.g., fermions if \(A\) is a boson), and the \(u_j\) are a priori unknown scalar coefficients. We also set up an action consisting of a Lagrangian density which is a linear combination of terms in the fields with unknown scalar coefficients. Then, we grind through tensor-spinor algebra and Fierz identities to obtain constraints on the scalar coefficients from closure of the multiplet, supersymmetry-invariance of the action, and normalization of the action, and we use these constraints to establish numerical values of the coefficients (usually up to an overall scaling and phase). We call this process ``solving the multiplet.''

Once a multiplet is solved, we can consider its properties (its ``genetics''), including non-closure geometry (in the on-shell case) and holoraumy. Non-closure geometry \cite{Gates2020a} is the set of terms by which the anticommutator \(\{{\rm D}_{\alpha}, {\rm D}_{\beta}\}\) of supercovariant derivatives differs from a standard translation. These non-closure terms are a reflection of fields' gauge symmetry and equations of motion. The problem of eliminating equation-of-motion terms by adjoining auxiliary fields to get an off-shell multiplet (the so-called ``off-shell SUSY problem'') was dubbed by supersymmetry's fundamental challenge by the first author \cite{Gates2002} and was a key motivation for the Genomics project \cite{Gates2009}. On the other hand, holoraumy \cite{Gates2015} is an analogue of holonomy, in the sense that holoraumy is related to the commutator of supercovariant derivatives, like holonomy is related to the commutator of covariant derivatives. Holoraumy was established as an empirically holography-related tensor when for a variety of off-shell 4D, \(\mathcal{N} = 1\) multiplets, it was shown that the holoraumy associated with the 1D, \(\mathcal{N} = 4\) adinkras obtained by dimensional reduction carried manifestations of 3D spatial rotations and extended \(R\)-symmetry \cite{Gates2015}. More recently, it was discovered \cite{Gates2020a} that for on-shell multiplets, the holoraumy featured a natural rotational symmetry similar to the Maxwellian symmetry between the electric and magnetic fields, although the rotations turn out not to be present in the holoraumy for 10D, \(\mathcal{N} = 1\) super Maxwell theory \cite{Gates2022}.

As the Genomics project has ventured to treat more complicated multiplets, the need for software to handle elements of the aforementioned process has become apparent. Multiplet-specific Mathematica code has been used to solve certain on-shell 4D, \(\mathcal{N} = 1\) multiplets and compute their holoraumy \cite{Gates2020a}, and Adinkra.m has been demonstrated for the computation of adinkras and holoraumy of off-shell 4D, \(\mathcal{N} = 1\) multiplets \cite{Mak2019}. At the same time, new software have appeared that handle tensor arithmetic/canonicalization and the algebra of gamma matrices without spinor indices, including (in roughly chronological order) GAMMA \cite{Gran2001}, Cadabra \cite{Peeters2007}, Redberry \cite{Poslavsky2015}, GammaMaP \cite{Kuusela2019}, and FieldsX \cite{Frob2021}. However, no tools currently exist for handling the arithmetic/canonicalization of spinor-indexed expressions. No broadly applicable tools exist to solve new multiplets or compute non-closure geometry. No general tools exist to compute non-closure geometry for arbitrary dimension, off-shell and on-shell.

The aim of the present work is to fill these gaps via a new Cadabra module we have named ``SusyPy.'' SusyPy includes a robust procedure for automatically canonicalizing tensor-spinor expressions, augmenting existing algorithms which effectively only treat canonicalization of pure Lorentz-tensor arithmetic. This point merits a bit of clarification: by tensor-spinor canonicalization, we do not mean the mere sorting of spinor indices on a tensor via the exploitation of the tensor's symmetries, which functions the same whether the indices are vectorial or spinorial. Rather, we refer to the correct manipulation of NW-SE convention, the spinor metric, and dimension-specific dualities and symmetries of spinor-indexed expressions to bring a complicated tensor-spinor formula expressed using spinor-indices (rather than spinor bilinears), as is ubiquitous in the genomics papers and subsequent work, to a standard (i.e., canonical) simplified form. For example, reducing \(C_{\eta \rho} (\gamma^{a b})_{\alpha \beta} C^{\gamma \beta} (\gamma_{c d})^{\alpha}{}_{\gamma} (\gamma^{c d})^{\rho}{}_{\sigma}\) to \(64 (\gamma^{a b})_{\eta \sigma}\) in 11D involves 1) sorting factors so that factors sharing dummy indices are adjacent, 2) exploiting the dimension-specific spinor-index symmetries of gamma matrices and the spinor metric to manipulate the expression into NW-SE convention, and 3) combining tensor-spinors and contracting the spinor metric appropriately. Previous tensor algebra systems have not handled these tasks, but our tensor-spinor algebra system does. We also develop a tool for Fierz-expanding spinor-indexed expressions using knowledge of only two of the expressions' spinor-indices, as is necessary when considering Fierz identities of expressions obtained from applying two supercovariant derivatives.

Building on this tensor-spinor symbolic algebra, we develop algorithms for solving general \(\mathcal{N} = 1\) multiplets and computing their properties. In particular, we provide two ``multiplet solvers'' that take a multiplet's supersymmetry transformation rules (in the form \eqref{susy_rule_ex}) and provide constraints on the coefficients obtained by requiring closure of the algebra. One follows a gamma-splitting procedure that filters out non-closure functions by analyzing gauge-invariance of expressions, and it provides a complete closure structure that can be used to easily compute non-closure geometry. The other uses Feynman propagators to filter out non-closure functions and is more widely applicable (succeeding on every tested multiplet). In addition, we provide a ``SUSY-invariant action solver'' that provides additional constraints on the SUSY-rules' and action's coefficients obtained by requiring supersymmetry-invariance of the action. Finally, we adapt the procedure of the first multiplet-solver to provide an algorithm that computes holoraumy. Our tools are expressly flexible, enabling the user to define a symbolic algebra workspace adapted to the dimension of interest and (when applicable) the choice of Clifford algebra representation, as well as to solve a multiplet whether off-shell or on-shell. Importantly, no choice of matrix representation of tensors is used at all in our system: we treat all tensor-spinor procedures completely symbolically.

As an application, we consider linearized (on-shell) 11D, \(\mathcal{N} = 1\) supergravity. Since the early superspace formulations of 11D supergravity \cite{Cremmer1980,Brink1980}, work on the multiplet has been focused on the off-shell SUSY problem, exploring how we can introduce a finite and minimal number of auxiliary fields into 11D supergravity so that the supersymmetry algebra closes off-shell, i.e., without application of the Bianchi identities and classical dynamics. Recent work on this multiplet \cite{Gates2019, Gates2020c} has made progress on the problem. It was found that a scalar superfield in 11D superspace contains the conformal graviton at the sixteenth level of the $\theta$-expansion of the superfield. In particular, this superfield could act as the supergravity prepotential leading to an off-shell 11D supergravity, and it places an upper bound on the number of component fields required: the maximum number of bosonic plus fermionic degrees is capped at 4,294,967,296. Since the Genomics project shares the broad aim of studying the off-shell problem, it is natural to assimilate 11D supergravity into the project. We accomplish just that in this paper. Using our module, we demonstrate how to solve the linearized 11D, \(\mathcal{N} = 1\) supergravity multiplet in a few lines of code, recasting the solution of \cite{Cremmer1978} in the formalism typical of the Genomics project and computing the non-closure geometry. We build on this solution to provide the first-ever computation of the holoraumy of 11D, \(\mathcal{N} = 1\) supergravity.

The plan of the rest of the paper is as follows. We begin with the physics results: in \S\ref{closure_sect}, we detail the aforementioned application to 11D supergravity. Then, in \S\ref{susypy}, we provide a comprehensive overview of our Cadabra module SusyPy for tensor-spinor symbolic algebra and the solution of multiplets, complete with explanations and pseudo-code for our algorithms, as well as examples to aid further work.

\numberwithin{equation}{subsection}
\section{11D Supergravity}\label{closure_sect}
\subsection{Solving the Multiplet}
In order to study non-closure geometry in the manner of \cite{Gates2020a}, we cast the linearized 11D, \(\mathcal{N} = 1\) supergravity multiplet in the following form akin to that of on-shell 4D supergravity as formulated in that paper.

\begin{definition}\label{11d_sugra_def}
The on-shell 11D, \(\mathcal{N} = 1\) \textit{supergravity multiplet} consists of a symmetric boson field \(h_{a b}\) -- the ``graviton'' -- a fermion field \(\Psi_{a}{}^{\alpha}\) -- the ``gravitino'' -- and an antisymmetric 3-tensor boson field \(A_{a b c}\) -- the ``three-form'' (or ``3-form''). The action of the supercovariant derivative (i.e., the linearized set of supersymmetry transformation rules) is given as follows.
\begin{subequations}\label{11d_sugra}
    \begin{align}
        {\rm D}_{\alpha}h_{a b} &= u \left(\gamma_{(a}\right)_{\alpha \beta} \Psi_{b)}{}^{\beta} \\
        {\rm D}_{\alpha}\Psi_{b}{}^{\beta} &= v ~ \omega_{b d e} \left(\gamma^{d e}\right)_{\alpha}{}^{\beta} + x \left(\gamma_{b} \gamma^{c d e f}\right)_{\alpha}{}^{\beta} \partial_{c}A_{d e f} + y \left(\gamma^{c d e f} \gamma_{b}\right)_{\alpha}{}^{\beta} \partial_{c}A_{d e f} \label{11d_sugra:2} \\
        {\rm D}_{\alpha}A_{b c d} &= z \left(\gamma_{[b c}\right)_{\alpha \beta} \Psi_{d]}{}^{\beta}
    \end{align}
\end{subequations}
One can check that these rules follow from index, symmetry, and engineering dimension considerations. Here, \({\rm D}_{\alpha}\) is the supercovariant derivative which effects the supersymmetry transformation, \(u, v, x, y, z\) are a priori unknown parameters,
\begin{equation}\label{spin_conn_def}
    \omega_{a b c} = \frac12 (c_{a b c} - c_{a c b} - c_{b c a})
\end{equation}
is the spin connection, and
\begin{equation}\label{anholonomy_def}
    c_{a b c} = \partial_{a}h_{b c} - \partial_{b}h_{a c}
\end{equation}
is the anholonomy. The corresponding (free-field) action takes the form
\begin{equation}\label{action}
    \begin{split}
        S &= \ell \int d^{11}x \left[-\frac14 c^{a b c} c_{a b c} + \frac12 c^{a b c} c_{c a b} + (c^{a}{}_{b}{}^{b})^2\right] 
        + \frac{m}{12} \int d^{11}x ~ \Psi_{a}{}^{\alpha} \left(\gamma^{a b c}\right)_{\alpha \beta}\partial_{b}\Psi_{c}{}^{\beta} \\
        &\quad+ \frac{n}{48} \int d^{11}x \left[\partial_{[a} A_{b c d]}\right] \left[\partial^{[a}A^{b c d]}\right],
    \end{split}
\end{equation}
where the first term (expressed quadratically using the anholonomy) corresponds to the graviton, the second to the gravitino, and the third to the three-form. Here, \(\ell, m, n\) are again a priori unknown parameters.
\end{definition}

Notice that up to a total derivative, the graviton term in our action \eqref{action} equals the fully contracted Riemann curvature scalar. This aligns with superfield considerations. The search for the supergravity prepotential superfield for 11D, \({\cal N} = 1\) superspace \cite{Gates2020c} uncovered that the scalar superfield contains the conformal graviton at its \(\theta\)-expansion's sixteenth level. If one wishes to include the degrees of freedom associated with Lorentz-rotations, the simplest route is to introduce a spinorial superfield \({\cal U}^{\a}\), which can include all of the graviton at the seventeenth level. The 4D analogue of the 11D supergravity superfield \({\cal U}^{\a}\) is not a spinor, but a bosonic vector superfield \({\cal U}^{a}\) \cite{Siegel1979}, and when one calculates the determinant of the supervielbein, the graviton portion of the action is exactly the three-term integrand seen in the first term of \eqref{action}, the reason being that \({\cal U}^{a}\) contains no field corresponding to the spin-connection. It turns out that the 4D, \({\cal N} = 2\) case is similar \cite{Grisaru1985,Belluci1989}, and it seems reasonable to conjecture that this extends to all minimal and irreducible higher-dimensional superfield supergravity (SFSG) theories and their SFSG prepotentials.

Now, we apply the multiplet solver tools to compute the various coefficients in definition \ref{11d_sugra_def}. The multiplet from \eqref{11d_sugra} can be entered in code as shown below.\footnote{We include at the end of the SUSY rules ``susy'' a substitution rule which records \eqref{anholonomy_def}.}
\begin{minted}{python}
>>> susy = r'D_{\alpha}(h_{a b}) -> u ((\Gamma_{a})_{\alpha \beta} (\Psi_{b})^{\beta} + (\Gamma_{b})_{\alpha \beta} (\Psi_{a})^{\beta}), D_{\alpha}((\Psi_{b})^{\beta}) -> 2 v \partial_{e}(h_{b d}) (\Gamma^{d e})_{\alpha}^{\beta} + x (\Gamma_{b} \Gamma^{c d e f})_{\alpha}^{\beta} \partial_{c}(A_{d e f}) + y (\Gamma^{c d e f} \Gamma_{b})_{\alpha}^{\beta} \partial_{c}(A_{d e f}), D_{\alpha}(A_{b c d}) -> 2 z (\Gamma_{b c})_{\alpha \beta} (\Psi_{d})^{\beta} - 2 z (\Gamma_{b d})_{\alpha \beta} (\Psi_{c})^{\beta} + 2 z (\Gamma_{c d})_{\alpha \beta} (\Psi_{b})^{\beta}, c_{a b c} -> \partial_{a}(h_{b c}) - \partial_{b}(h_{a c})'
\end{minted}
On-shell closure, i.e., the requirement that upon enforcing the equations of motion, we have\footnote{In 11D, our convention is to take \(c = 1\) in \eqref{closure_condition}.}
\begin{equation}\label{desired}
    \{{\rm D}_{\alpha}, {\rm D}_{\beta}\} = i \left(\gamma^{a}\right)_{\alpha \beta} \partial_{a}
\end{equation}
up to gauge transformations, gives us a first set of constraints on the coefficients. Note that the gauge transformation of the gravitino is of the form
\begin{equation}
    \delta_{G} \Psi_{a}{}^{\gamma} = \partial_{a} \zeta^{\gamma}
\end{equation}
for a gauge parameter \(\zeta^{\gamma}\), and the chosen basis for the Clifford algebra is that in \eqref{11d_basis_redux}. The code below applies ``SUSYSolve'' from \S\ref{multiplet_solver1}.
\begin{minted}{python}
>>> bosons = [Ex('h_{a b}'), Ex('A_{a b c}')]
>>> fermions = [Ex(r'(\Psi_{a})^{\gamma}')]
>>> gauge_transs = [Ex(r'\partial_{a}((\zeta)^{\gamma})')]
>>> basis = [Ex(r'C_{\alpha \beta}'), Ex(r'(\Gamma^{a})_{\alpha \beta}'), Ex(r'(\Gamma^{a b})_{\alpha \beta}'), Ex(r'(\Gamma^{a b c})_{\alpha \beta}'), Ex(r'(\Gamma^{a b c d})_{\alpha \beta}'), Ex(r'(\Gamma^{a b c d e})_{\alpha \beta}')]
>>> consts = ['u', 'v', 'x', 'y', 'z']
>>> indices = [r'_{\alpha}', r'_{\beta}']
>>> susy_solve(bosons, fermions, gauge_transs, susy, basis, consts, indices)
\end{minted}
Alternatively, we can use ``SUSYSolvePropagator'' from \S\ref{multiplet_solver2}, with the 11D Rarita-Schwinger propagator.
\begin{minted}{python}
>>> fermion_propagators = [rarita_schwinger_prop()]
>>> sol = susy_solve_propagator(bosons, fermions, fermion_propagators, susy, basis, consts, indices)
\end{minted}
A second set of constraints comes from SUSY-invariance of the action, obtained via ``MakeActionSUSYInv'' (\S\ref{action_solver}).
\begin{minted}{python}
>>> L = Ex(r'D_{\gamma}(l (-1/4 c^{a b c} c_{a b c} + 1/2 c^{a b c} c_{c a b} + c^{a}_{b}^{b} c_{a c}^{c}) + m (1/12) (\Psi_{a})^{\alpha} (\Gamma^{a b c})_{\alpha \beta} \partial_{b}((\Psi_c)^{\beta}) + n (3/4) (\partial_{a}(A_{b c d}) - \partial_{b}(A_{a c d}) + \partial_{c}(A_{a b d}) - \partial_{d}(A_{a b c})) (\partial^{a}(A^{b c d}) - \partial^{b}(A^{a c d}) + \partial^{c}(A^{a b d}) - \partial^{d}(A^{a b c})))')
>>> make_action_susy_inv(L, susy, ['u', 'v', 'x', 'y', 'z', 'l', 'm', 'n'], [r'\Psi'])
\end{minted}
The remaining constraints come from action normalization; see appendix \ref{11d_sugra_norm}. Finally, we introduce an arbitrary choice of overall scaling of the multiplet. The ultimate constraints are as follows.
\[
\renewcommand*{\leftterm}{m y}
\renewcommand*{\rightterm}{-12n z \quad}
\settowidth{\leftside}{\term{\leftterm}}
\settowidth{\rightside}{\term{\rightterm}}
\begin{array}{l}
    \left.
    \begin{aligned}
        uv &= -\frac{i}{8} \\[0.2em]
        xz &= \frac{i}{96} \\[0.2em]
        \makebox[\leftside][r]{\term{yz}} &= \makebox[\rightside][l]{\term{-\frac{i}{288}}}
    \end{aligned}
    \right\} \quad \text{from closure of the superalgebra} \\[3.8em]
    \left.
    \begin{aligned}
        \ell u &= \frac{1}{12} m v \\[1em]
        m x &= 36n z \\[1em]
        \leftterm &= \rightterm
    \end{aligned}
    \right\} \quad \text{from SUSY-invariance of the action} \\[4em]
    \left.
    \begin{aligned}
        \ell &= \frac14 \\[0.2em]
        \makebox[\leftside][r]{\term{n}} &= \makebox[\rightside][l]{\term{-\frac{1}{36}}}
    \end{aligned}
    \right\} \quad \text{from normalization of the action} \\[2.1em]
    \left.
    \begin{aligned}
        \makebox[\leftside][r]{\term{z}} &= \makebox[\rightside][l]{\term{\frac12}} \\
    \end{aligned}
    \right\} \quad \text{from a convenient (but arbitrary) scaling}
\end{array}
\]
It is an easy algebra exercise to confirm that (modulo the assumption that \(u,z\) are positive real) we find
\begin{equation}\label{coef_sol}
    \boxed{u = 1, \quad v = -\frac{i}{8}, \quad x = \frac{i}{48}, \quad y = -\frac{i}{144}, \quad z = \frac{1}{2}, \quad \ell = \frac14, \quad m = 24i, \quad n = -\frac{1}{36}.}
\end{equation}

\subsection{Non-Closure Geometry of 11D Supergravity}
Here, we summarize the non-closure geometry obtained by substituting \eqref{coef_sol} into the second output of ``SUSYSolve.'' The anticommutator of supercovariant derivatives applied to the supergravity fields takes the following form.
\begin{subequations}
    \begin{align}
        \{{\rm D}_\alpha, {\rm D}_\beta\}h_{a b} &= i \left(\gamma^{c}\right)_{\alpha \beta} \partial_{c}h_{a b} - \partial_{(a}\xi_{b) \alpha \beta} \\
        \{{\rm D}_\alpha, {\rm D}_\beta\}A_{a b c} &= i \left(\gamma^{d}\right)_{\alpha \beta} \partial_{d}A_{a b c} - \partial_{[a}\zeta_{b c] \alpha \beta} \\
        \{{\rm D}_\alpha, {\rm D}_\beta\}\Psi_{a}{}^{\gamma} &= i \left(\gamma^{b}\right)_{\alpha \beta} \partial_{b}\Psi_{a}{}^{\gamma} - \partial_{a}\epsilon_{\alpha \beta}{}^{\gamma} - Z_{a \alpha \beta}{}^{\gamma}
    \end{align}
\end{subequations}
The gauge transformations are
\begin{subequations}
    \begin{align}
        \xi_{b \alpha \beta} &= \frac{i}{2} \left(\gamma^{[1]}\right)_{\alpha \beta} h_{[1] b} \\
        \zeta_{b c \alpha \beta} &= \frac{i}{2} \left(\gamma^{[1]}\right)_{\alpha \beta} A_{[1] b c} + \frac{i}{2} \left(\gamma^{[1]}{}_{b}\right)_{\alpha \beta} h_{c [1]} \\
        \begin{split}
            \epsilon_{\alpha \beta}{}^{\gamma} &= \frac{27i}{32} \left(\gamma^{[1]}\right)_{\alpha \beta} \Psi_{[1]}{}^{\gamma} - \frac{i}{8} \left(\gamma_{[1]}\right)_{\alpha \beta} \left(\gamma^{[1] [\bar{1}]}\right)_{\eta}{}^{\gamma} \Psi_{[\bar{1}]}{}^{\eta} \\
            &\quad- \frac{i}{8} \left(\gamma^{[1] [\bar{1}]}\right)_{\alpha \beta} \left(\gamma_{[1]}\right)_{\eta}{}^{\gamma} \Psi_{[\bar{1}]}{}^{\eta} - \frac{3i}{64} \left(\gamma_{[2]}\right)_{\alpha \beta} \left(\gamma^{[2] [1]}\right)_{\eta}{}^{\gamma} \Psi_{[1]}{}^{\eta} \\
            &\quad- \frac{i}{768} \left(\gamma^{[4] [1]}\right)_{\alpha \beta} \left(\gamma^{[4]}\right)_{\eta}{}^{\gamma} \Psi_{[1]}{}^{\eta}.
        \end{split}
    \end{align}
\end{subequations}
Finally, and most importantly, the only non-closure function is fermionic and consists of the equation-of-motion terms
\begin{equation}
    Z_{a \alpha \beta}{}^{\gamma} = Z^{(1)}_{a \alpha \beta}{}^{\gamma \zeta} R_{\zeta} + Z^{(2)}_{a}{}^{b}{}_{\alpha \beta}{}^{\gamma \zeta} E_{b \zeta},
\end{equation}
where
\begin{subequations}\label{rarita_schwinger}
    \begin{align}
        R_{\alpha} &= \left(\gamma^{b c}\right)_{\alpha \eta} \partial_{b} \Psi_{c}{}^{\eta} \\
        E_{c \alpha} &= \left(\gamma^{b}\right)_{\alpha \eta} \partial_{[b} \Psi_{c]}{}^{\eta}.
    \end{align}
\end{subequations}
are forms of the Rarita-Schwinger equation and\footnote{Here, \(C_{\alpha \beta}\) is the spinor metric, i.e., the charge-conjugation matrix expressed with spinor indices. Note that \(C_{\alpha}{}^{\beta} = \delta_{\alpha}{}^{\beta}\).}
\begin{subequations}
    \begin{align}
        \begin{split}
            Z^{(1)}_{a \alpha \beta}{}^{\gamma \zeta} &= \frac{41i}{192} \left(\gamma_{a}\right)_{\alpha \beta} C^{\gamma \zeta} - \frac{5i}{192} \left(\gamma^{[1]}\right)_{\alpha \beta} \left(\gamma_{[1] a}\right)^{\gamma \zeta} \\
            &\quad- \frac{29i}{192} \left(\gamma_{a [1]}\right)_{\alpha \beta} \left(\gamma^{[1]}\right)^{\gamma \zeta} + \frac{7i}{384} \left(\gamma^{[2]}\right)_{\alpha \beta} \left(\gamma_{[2] a}\right)^{\gamma \zeta} \\
            &\quad- \frac{7i}{4608} \left(\gamma_{a [4]}\right)_{\alpha \beta} \left(\gamma^{[4]}\right)^{\gamma \zeta} \\
            &\quad+ \frac{i}{552960} \epsilon_{a [5] [\bar{5}]}\left(\gamma^{[5]}\right)_{\alpha \beta} \left(\gamma^{[\bar{5}]}\right)^{\gamma \zeta}
        \end{split} \\
        \begin{split}
            Z^{(2)}_{a}{}^{b}{}_{\alpha \beta}{}^{\gamma \zeta} &= \frac{5i}{48} \left(\gamma^{b}\right)_{\alpha \beta} \left(\gamma_{a}\right)^{\gamma \zeta} - \frac{3i}{8} \left(\gamma^{[1]}\right)_{\alpha \beta} \left(\gamma_{[1]}\right)^{\gamma \zeta} \delta_{a}{}^{b} \\
            &\quad- \frac{19i}{48} \left(\gamma_{a}{}^{b}\right)_{\alpha \beta} C^{\gamma \zeta} + \frac{i}{24} \left(\gamma^{[1] b}\right)_{\alpha \beta} \left(\gamma_{a [1]}\right)^{\gamma \zeta} \\
            &\quad+ \frac{i}{32} \left(\gamma^{[2]}\right)_{\alpha \beta} \left(\gamma_{[2]}\right)^{\gamma \zeta} \delta_{a}{}^{b} + \frac{i}{144} \left(\gamma_{a}{}^{b [3]}\right)_{\alpha \beta} \left(\gamma_{[3]}\right)^{\gamma \zeta} \\
            &\quad+ \frac{i}{1152} \left(\gamma^{[4] b}\right)_{\alpha \beta} \left(\gamma_{a [4]}\right)^{\gamma \zeta}.
        \end{split}
    \end{align}
\end{subequations}

\subsection{Holoraumy of 11D Supergravity}
Finally, we apply ``Holoraumy'' from \S\ref{holoraumy_alg} to find the holomoraumy.
\begin{minted}{python}
>>> fields = [Ex('h_{a b}'), Ex('A_{a b c}'), Ex(r'(\Psi_{a})^{\gamma}')]
>>> subs = Ex('u -> 1, v -> (-1/8) I, x -> (1/48) I, y -> (-1/144) I, z -> 1/2, l -> 1/4, m -> 24 I, n -> -1/36', False)
>>> holoraumy(fields, susy, basis, subs, indices)
\end{minted}
We find the bosonic holoraumy to be
\begin{align}
    \begin{split}
        [{\rm D}_\alpha, {\rm D}_\beta]h_{a b} &= -\frac{i}{4} \left(\gamma^{[1] [2]}{}_{(a|}\right)_{\alpha \beta} \partial_{[1]}A_{|b) [2]} + \frac{i}{18} \eta_{a b} \left(\gamma^{[1] [3]}\right)_{\alpha \beta} \partial_{[1]}A_{[3]} \\
        &\quad+ \frac{i}{2} \left(\gamma^{[1] [\bar{1}]}{}_{(a|}\right)_{\alpha \beta} \partial_{[1]}h_{|b) [\bar{1}]} - \partial_{(a}\xi'_{b) \alpha \beta}
    \end{split} \\
    \begin{split}
        [{\rm D}_\alpha, {\rm D}_\beta]A_{a b c} &= \frac{i}{4} \left(\gamma^{[1] [\bar{1}]}{}_{[a b|}\right)_{\alpha \beta} \partial_{[1]}h_{|c] [\bar{1}]} + \frac{i}{4} \left(\gamma^{[1] [\bar{1}]}{}_{[a|}\right)_{\alpha \beta} \partial_{[1]}A_{|b c] [\bar{1}]} \\
        &\quad+ \frac{i}{288} \epsilon_{a b c}{}^{[4] [1] [3]} \left(\gamma_{[4]}\right)_{\alpha \beta} \partial_{[1]}A_{[3]} - \partial_{[a}\zeta'_{b c] \alpha \beta}
    \end{split}
\end{align}
and the fermionic holoraumy to be
\begin{equation}
    \begin{split}
        [{\rm D}_\alpha, {\rm D}_\beta]\Psi_{a}{}^{\gamma} &= \frac{i}{2} \left(\gamma_{a}{}^{[1] [\bar{1}]}\right)_{\alpha \beta} \partial_{[1]} \Psi_{[\bar{1}]}{}^{\gamma} + \frac{i}{3} \left(\gamma_{a}{}^{[1] [\bar{1}] [\bar{\bar{1}}]}\right)_{\alpha \beta} \left(\gamma_{[1]}\right)_{\eta}{}^{\gamma} \partial_{[\bar{1}]} \Psi_{[\bar{\bar{1}}]}{}^{\eta} \\
        &\quad+ \frac{i}{8} \left(\gamma^{[2] [1]}\right)_{\alpha \beta} \left(\gamma_{[2]}\right)_{\eta}{}^{\gamma} \partial_{[1]} \Psi_{a}{}^{\eta} + \frac{i}{12} \left(\gamma^{[2] [1] [\bar{1}]}\right)_{\alpha \beta} \left(\gamma_{a [2]}\right)_{\eta}{}^{\gamma} \partial_{[1]} \Psi_{[\bar{1}]}{}^{\eta} \\
        &\quad+ \frac{i}{24} \left(\gamma^{[3] [1]}\right)_{\alpha \beta} \left(\gamma_{[3]}\right)_{\eta}{}^{\gamma} \partial_{[1]} \Psi_{a}{}^{\eta} - \partial_{a} \epsilon'_{\alpha \beta}{}^{\gamma} - \mathscr{Z}_{a \alpha \beta}{}^{\gamma}.
    \end{split}
\end{equation}
The gauge transformations are
\begin{align}
    \xi'_{b \alpha \beta} &= \frac{i}{12} \left(\gamma_{b}{}^{[3]}\right)_{\alpha \beta} A_{[3]} \\
    \zeta'_{b c \alpha \beta} &= \frac{i}{4} \left(\gamma^{[2]}{}_{b}\right)_{\alpha \beta} A_{c [2]} \\
    \begin{split}
        \epsilon'_{\alpha \beta}{}^{\gamma} &= -\frac{5i}{32} C_{\alpha \beta} \left(\gamma^{[1]}\right)_{\eta}{}^{\gamma} \Psi_{[1]}{}^{\eta} + \frac{5i}{64} \left(\gamma^{[2] [1]}\right)_{\alpha \beta} \left(\gamma_{[2]}\right)_{\eta}{}^{\gamma} \Psi_{[1]}{}^{\eta} \\
        &\quad- \frac{i}{96} \left(\gamma_{[3]}\right)_{\alpha \beta} \left(\gamma^{[3] [1]}\right)_{\eta}{}^{\gamma} \Psi_{[1]}{}^{\eta} + \frac{i}{32} \left(\gamma^{[3] [1]}\right)_{\alpha \beta} \left(\gamma_{[3]}\right)_{\eta}{}^{\gamma} \Psi_{[1]}{}^{\eta} \\
        &\quad- \frac{i}{768} \left(\gamma_{[4]}\right)_{\alpha \beta} \left(\gamma^{[4] [1]}\right)_{\eta}{}^{\gamma} \Psi_{[1]}{}^{\eta}.
    \end{split}
\end{align}
Finally, the off-shell equation-of-motion terms are
\begin{equation}
    \mathscr{Z}_{a \alpha \beta}{}^{\gamma} = \mathscr{Z}^{(1)}_{a \alpha \beta}{}^{\gamma \zeta} R_{\zeta} + \mathscr{Z}^{(2)}_{a}{}^{b}{}_{\alpha \beta}{}^{\gamma \zeta} E_{b \zeta},
\end{equation}
where
\begin{align}
    \begin{split}
        \mathscr{Z}^{(1)}_{a \alpha \beta}{}^{\gamma \zeta} &= \frac{3i}{64} C_{\alpha \beta} \left(\gamma_{a}\right)^{\gamma \zeta} - \frac{3i}{128} \left(\gamma_{a [2]}\right)_{\alpha \beta} \left(\gamma^{[2]}\right)^{\gamma \zeta} \\
        &\quad- \frac{i}{384} \left(\gamma^{[3]}\right)_{\alpha \beta} \left(\gamma_{[3] a}\right)^{\gamma \zeta} + \frac{5i}{1152} \left(\gamma_{a [3]}\right)_{\alpha \beta} \left(\gamma^{[3]}\right)^{\gamma \zeta} \\
        &\quad- \frac{7i}{4608} \left(\gamma^{[4]}\right)_{\alpha \beta} \left(\gamma_{[4] a}\right)^{\gamma \zeta}
    \end{split} \\
    \begin{split}
        \mathscr{Z}^{(2)}_{a}{}^{b}{}_{\alpha \beta}{}^{\gamma \zeta} &= \frac{i}{4} C_{\alpha \beta} C^{\gamma \zeta} \delta_{a}{}^{b} + \frac{i}{6} \left(\gamma_{a}{}^{b [1]}\right)_{\alpha \beta} \left(\gamma_{[1]}\right)^{\gamma \zeta} \\
        &\quad+ \frac{i}{96} \left(\gamma^{[2] b}\right)_{\alpha \beta} \left(\gamma_{a [2]}\right)^{\gamma \zeta} + \frac{5i}{96} \left(\gamma_{a}{}^{b [2]}\right)_{\alpha \beta} \left(\gamma_{[2]}\right)^{\gamma \zeta} \\
        &\quad- \frac{i}{72} \left(\gamma^{[3] b}\right)_{\alpha \beta} \left(\gamma_{a [3]}\right)^{\gamma \zeta} - \frac{i}{192} \left(\gamma^{[4]}\right)_{\alpha \beta} \left(\gamma_{[4]}\right)^{\gamma \zeta} \delta_{a}{}^{b} \\
        &\quad+ \frac{i}{96} \left(\gamma^{[3]}\right)_{\alpha \beta} \left(\gamma_{[3]}\right)^{\gamma \zeta} \delta_{a}{}^{b}.
    \end{split}
\end{align}
Notice that no analogue to the electromagnetic-duality rotations of \cite{Gates2020a} is present in the 11D, \(\mathcal{N} = 1\) supergravity holoraumy above, providing the second counterexample (after 10D, \(\mathcal{N} = 1\) super Maxwell theory \cite{Gates2022}) to the conjectured ubiquity of these rotations.

\section{SusyPy: A Supersymmetry Multiplet Solver}\label{susypy}
\subsection{Basic Use}\label{using_susypy}
In this chapter, we expound and demonstrate the symbolic algebra algorithms which we utilize for solving the 11D supergravity multiplet; we have christened these algorithms collectively ``SusyPy.''\footnote{SusyPy = \textbf{Su}per\textbf{sy}mmetry \textbf{Py}thon.} SusyPy is a Python module built on Cadabra for the purpose of handling explicit spinor-index arithmetic/canonicalization and crucial supersymmetry calculations. Cadabra has its own interpreted language ``cdb,'' but its functions are also accessible via Python. Since the authors are devout Python-ers, we have elected to continue purely in Python. However, documentation on Cadabra's pure Python interface is rather scanty, so before moving forward, we briefly consider its use in Python concomitantly with SusyPy. After downloading both software, Cadabra and SusyPy can be imported via the code below.
\begin{minted}{python}
>>> import cadabra2 as cdb
>>> import susypy as susy
\end{minted}
Input into Cadabra is in the form of pseudo-\LaTeX{} strings into \mintinline{python}{Ex()} wrappers. Just like in \LaTeX, sums are written via ``+'' and products via proximity, and Greek letters can even be entered by writing their name after a backslash as in \LaTeX. In addition, covariant indices are written with a simple ``\_'' and contravariant indices with a ``\^{}''. The \mintinline{python}{Ex()} wrappers themselves can also be added and multiplied after they are defined. The code below creates two Cadabra expressions and adds them. Cadabra's output is printed as unicode.
\begin{minted}{python}
>>> ex1 = cdb.Ex(r'(A_{a} B_{b} + E_{a b}) \Psi^{c}') 
>>> ex2 = cdb.Ex(r'G_{a} \Theta_{b}^{c}')
>>> ex = ex1 + ex2
>>> ex
'(A_{a} B_{b} + E_{a b}) Ψ^{c} + G_{a} ϴ_{b}^{c}'
\end{minted}
An important Cadabra tool is substitution, which we will employ frequently in the algorithms later in this chapter. A series of substitution rules is entered as a comma-separated string of individual tensor substitution rules, with each substitution indicated by an arrow ``->'', inputted into an \mintinline{python}{Ex()} wrapper with the optional parameter ``False'' signifying that it is not an ordinary tensor expression. Entering an expression together with a substitution rule into Cadabra's function \mintinline{python}{substitute()} applies the latter to the former, as in the below example.
\begin{minted}{python}
>>> sub = cdb.Ex(r'A_{a} -> G_{a}^{d} H_{d}, \Theta_{e}^{f} -> \Psi_{e}^{f}', False)
>>> cdb.substitute(ex, sub)
'(G_{a}^{d} H_{d} B_{b} + E_{a b}) Ψ^{c} + G_{a} Ψ_{b}^{c}'
\end{minted}
Notice that indices in the substitution rule dynamically adapt to the context of the expression. Having briefly summarized the use of Cadabra, in the rest of this section, we illustrate the basic use of SusyPy.

\subsubsection{Setting up the Environment}
One fixes a computational environment in SusyPy via the function \mintinline{python}{susy_env()}, which takes as input the following information:\footnote{A call of \mintinline{python}{susy_env()} with none of the optional parameters entered yields the 11D Clifford algebra with the conventions in Appendix \ref{conventions}, since this is the environment needed for the 11D supergravity calculations in the previous chapter.} the dimension \(D\), the array ``lorentz\_indices'' of Lorentz indices with which one intends to work (which must be of length at least the dimension),\footnote{These Lorentz indices are also used for internal purposes. The requirement that the length of ``lorentz\_indices'' be at least \(D\) is in order to allow the construction of Levi-Civita tensors.} the array ``spinor\_indices'' of spinor indices with which one intends to work, a length-2 array ``desired\_syms'' which determines the spinor-index symmetries of the gamma matrices, and a representation sign ``rep'' which determines the sign of the gamma-matrix reduction formula.\footnote{``desired\_syms'' is only a relevant input in even dimensions, for the symmetries have fixed values in odd dimensions. Similarly, ``rep'' is only relevant in odd dimensions, for there is only a single representation in even dimensions.} (See \S\ref{traces_and_dimension} for explanations of these parameters.) The line of code below creates a 4D environment with the Lorentz indices \(a, b, c, d\) and spinor indices \(\alpha, \beta, \gamma, \eta\), values \((t_0, t_1) = (1, 1)\) which make the spinor metric and 1-index gamma matrix antisymmetric, and the negative Clifford algebra representation (an input which is ignored because of the dimension but included for completeness). Note that whenever \mintinline{python}{susy_env()} is called, its output must be stored in a variable named \mintinline{python}{__cdbkernel__} in order for subsequent arithmetic to be run; this derives from an idiosyncracy of Cadabra.
\begin{minted}{python}
>>> __cdbkernel__ = susy.susy_env(D = 4, lorentz_indices=['a', 'b', 'c', 'd'], spinor_indices=[r'\alpha', r'\beta', r'\gamma', r'\eta'], desired_syms=[1,1], rep=-1)
\end{minted}
The environment created by \mintinline{python}{susy_env()} defines a variety of Cadabra objects, tabulated in Table \ref{tab:table1}, including two new tensors corresponding to the spinor metric \(C_{\alpha \beta}\) and highest-rank element \(\gamma_{*}\) of the Clifford algebra.
\begin{table}[ht!]
    \begin{center}
        \begin{tabular}{| c | c | c |}
            \hline
            \textbf{SusyPy Name} & \textbf{Cadabra Class} & \textbf{Use} \\
            \hline
            \hline
            ``\textbackslash Gamma'' & GammaMatrix & Dirac Gamma Matrix \\ \hline
            ``\textbackslash delta'' & KroneckerDelta & Kronecker Delta \\  \hline
            ``\textbackslash epsilon'' & EpsilonTensor & Levi-Civita Symbol \\  \hline
            ``D'' & Derivative & Supercovariant Derivative \\ \hline
            ``\textbackslash partial'' & PartialDerivative & Partial Derivative \\ \hline
            ``I'' & ImaginaryI & Imaginary Unit \\ \hline
            ``Tr'' & Trace & Trace Function \\ \hline
            ``C'' & N/A & Spinor Metric \\ \hline
            ``\textbackslash Gamma'\,'' & N/A & Highest-Rank Element \\ [0.5ex]
            \hline
        \end{tabular}
        \caption{Tensor objects created by \mintinline{python}{susy_env()}.}
        \label{tab:table1}
    \end{center}
\end{table}

The below code verifies that the dimension, gamma-matrix symmetry, and pools of indices are as selected.
\begin{minted}{python}
>>> ex1 = cdb.Ex(r'(\Gamma_{a})_{\alpha}^{\beta} (\Gamma^{b})_{\beta}^{\alpha}')
>>> susy.evaluate(ex1)
'4δ_{a}^{b}'
>>> ex2 = cdb.Ex(r'(\Gamma^{a})_{\beta \alpha}')
>>> cdb.canonicalise(ex2)
'-\indexbracket(Γ^{a})_{α β}'
>>> ex3 = cdb.Ex(r'(A_{d} B^{d})_{\gamma}^{\gamma}')
>>> cdb.rename_dummies(ex3)
'\indexbracket(A_{a} B^{a})_{α}^{α}'
\end{minted}

\subsubsection{Simplifying Tensor Expressions}
At its foundation, SusyPy is a spinor-index arithmetic tool; all of the procedures which form the multiplet solvers are constructed on the base of explicit spinor-index canonicalization and multiplication adhering to the NW-SE convention. To distinguish spinor and vector indices, SusyPy latches on a Cadabra feature which interprets parentheses carrying indices as objects called ``indexbrackets''; the object inside that indexbracket is interpreted as an argument.\footnote{It was possibly the hope for Cadabra that indexbrackets would eventually be used for this purpose. A very rudimentary Cadabra2 \cite{Peeters2018b} function \mintinline{python}{combine()} enables the combination of indexbrackets, but in a na\"ive fashion which pays no attention to NW-SE convention and has no utility for actual spinor-index arithmetic.} We write a tensor-spinor with the underlying Lorentz tensor surrounded by parentheses which carry the spinor indices, e.g., SusyPy will interpret the Cadabra object below as a tensor-spinor \(\Psi_{a \eta}\), with \(a\) a Lorentz index and \(\eta\) a spinor index.
\begin{minted}{python}
>>> cdb.Ex(r'''(\Psi_{a})_{\eta}''')
\end{minted}
The critical function powering SusyPy's handling of spinor indices is \mintinline{python}{spinor_combine()}, which maximally combines and canonicalizes indexbrackets in a manner adherent to the NW-SE convention, as seen in the below example.\footnote{N.B. The spinor metric with its first index covariant and its second contravariant is often written simply as \(\delta_{\alpha}{}^{\beta}\). This is not permissable when using SusyPy, as Kronecker deltas are considered pure Lorentz tensors by the algorithm. One must use \(C_{\alpha}{}^{\beta}\). Incidentally, the spinor metric is the only tensor in SusyPy whose indices are interpreted as spinor indices without the tensor having an indexbracket.}
\begin{minted}{python}
>>> ex = cdb.Ex(r'(\Gamma_{a})_{\alpha \beta} (\Gamma^{b})^{\beta \gamma} C_{\gamma \eta}')
>>> susy.spinor_combine(ex)
'-\indexbracket(Γ_{a} Γ^{b})_{α η}'
\end{minted}
We now consider our broad simplification function \mintinline{python}{evaluate()}. This function is built to accomplish several goals with respect to an inputted expression:
\begin{enumerate}
    \item apply \mintinline{python}{spinor_combine()} in order to maximally combine indexbrackets and order chains of multiplied indexbrackets in lexicographic order (i.e., canonicalize the spinor-index structure) to the greatest extent permissable by symmetries,
    \item multiply all gamma matrices using Cadabra's Lorentz-tensor simplification sequence,
    \item simplify all traces implicit in products of tensor-spinors, all derivatives, and all Kronecker-delta contractions,
    \item draw all constants and pure Lorentz tensors (e.g., Kronecker deltas and Levi-Civita tensors) out of derivatives, Fourier transforms, and indexbrackets, and
    \item flatten the expression into the sum of products of single-term factors.
\end{enumerate}
Below is an example of its application.
\begin{minted}{python}
>>> ex = Ex(r'(\Gamma_{a})_{\alpha \beta} (I \Gamma^{b})^{\beta \gamma} C_{\gamma \eta} (5 \delta_{b}{}^{a} \Psi_{d})_{\zeta}')
>>> susy.evaluate(ex)
'-55I C_{α η} \indexbracket(Ψ_{d})_{ζ}'
\end{minted}

\subsubsection{Compatibility with Cadabra Functions}
Virtually all functions of SusyPy take in expression input as Cadabra \mintinline{python}{Ex} objects, as well as give expression output as Cadabra \mintinline{python}{Ex} objects.\footnote{Also, like Cadabra functions, virtually all functions of SusyPy actually modify the inputted expressions rather than simply outputting the result. This means that the output of a SusyPy function need not be stored in a new variable.} This means that expressions outputted by Cadabra functions can be inputted into SusyPy functions and vice versa without the need to reenter the expression. In addition, SusyPy's algorithms work well with Cadabra's ordinary means of defining new tensors and imparting them properties. For example, in the 4D code below, three tensors \(A_{a}, \lambda_{\gamma}, d\) (the last one is actually a pseudo-scalar) are defined via Cadabra's \mintinline{python}{Depends} to be ``differentiable'' by the supercovariant derivative and partial derivative defined in \mintinline{python}{susy_env()}, or more precisely, to have nonzero derivatives. In addition, \(\lambda_{\gamma}\) is defined to anticommute with the supercovariant derivative, reflecting that both are fermionic. Notice the Cadabra idiosyncrasy that properties assigned to a tensor with spinor indices must be explicitly assigned both to the underlying Lorentz tensor and to the tensor within indexbrackets; this arises from the fact that indexbrackets are viewed as somewhat distinct objects. Also, notice that we must explicitly write out \mintinline{python}{'\indexbracket'} rather than simply use parentheses.
\begin{minted}{python}
>>> __cdbkernel__ = susy.susy_env(D = 4, lorentz_indices=['a', 'b', 'c', 'd'], spinor_indices=[r'\alpha', r'\beta', r'\gamma'])
>>> cdb.Depends(Ex(r'''A{#}'''), Ex(r'''D{#}, \partial{#}'''))
>>> cdb.Depends(Ex(r'''\lambda{#}'''), Ex(r'''D{#}, \partial{#}'''))
>>> cdb.Depends(Ex(r'''\indexbracket{\lambda{#}}{#}'''), Ex(r'''D{#}, \partial{#}'''))
>>> cdb.AntiCommuting(Ex(r'''\lambda, D{#}'''))
>>> cdb.AntiCommuting(Ex(r'''\indexbracket{\lambda{#}}{#}, D{#}'''))
>>> cdb.Depends(Ex(r'''d'''), Ex(r'''D{#}, \partial{#}'''))
\end{minted}

Now that we have adumbrated the elementary use of SusyPy, the remainder of the chapter will explain the core algorithms. Principal among these are the algorithms obtaining constraints from SUSY algebra closure (\S\ref{multiplet_solver1} and \S\ref{multiplet_solver2}), the algorithm obtaining constraints from SUSY-invariance of the action will be presented in \S\ref{action_solver}, and the algorithm for computing holoraumy (\S\ref{holoraumy_alg}).

From here on out, the results of all code will be typeset, and occasionally modifications will be done implicitly where such modifications offer a tangible improvement in clarity and only a small algebraic jump from the code's actual output. In addition, for brevity, we omit calls to \mintinline{python}{susy_env()} and explicit tensor definitions Nevertheless, we \textit{do} offer the full code corresponding to each computational calculation on GitHub at \href{https://github.com/IsaiahBHilz/susypy/tree/main/paper}{https://github.com/IsaiahBHilz/susypy/tree/main/paper}.

\subsection{Procedures for Dimension-Specific Symmetries and Substitutions}\label{traces_and_dimension}
The first algorithms which must be considered in setting the foundation for spinor-index arithmetic are those that treat dimension-dependent gamma-matrix symmetries. (3.63) in \cite{Freedman2013} gives that
\begin{equation}\label{t_sym_rel}
    \left(\gamma_{a_1 a_2 \cdots a_r}\right)_{\alpha \beta} = -t_r \left(\gamma_{a_1 a_2 \cdots a_r}\right)_{\beta \alpha},
\end{equation}
where \(t_r\) is the \(r\)th element of a sequence defined by fixing \(t_0, t_1 \in \{-1, 1\}\) and setting \(t_{r+2} = -t_r\). \(t_0, t_1\) depend on the dimension \(D\) and satisfy\footnote{Note that (3.106) of \cite{Freedman2013} erroneously includes a factor of \(2^{m-1}\) in the \(t_1\) factor of \eqref{sym_trig}.}
\begin{equation}\label{sym_trig}
    t_0 \cos\left(\frac{m \pi}{2}\right) + t_1 \sin\left(\frac{m \pi}{2}\right) = -1,
\end{equation}
where \(m = \lfloor D/2 \rfloor\). Hence, either \(m \pi/2\) is an integer multiple of \(\pi\) and \(t_0 = -\cos(m \pi/2)\), or \(m \pi/2\) is a half-integer multiple of \(\pi\) and \(t_1 = -\sin(m \pi/2)\). That is, one of \(t_0, t_1\) is fixed by \eqref{sym_trig}. If \(D\) is odd, then the other can be fixed by the fact that \(\gamma^{2m}\) is constructed by adjoining the highest-rank element \(\pm \gamma_{*}\) in the (\(D-1\))-dimensional Clifford algebra. Since \(\gamma^{2m}\) has the symmetry \(t_1\) while \(\pm \gamma_{*}\) has the symmetry \(t_{2m}\) (\(\gamma_{*}\) is roughly proportional to a \(2m\)-indexed gamma matrix), it follows that
\begin{equation}\label{1eq2m}
    t_1 = t_{2m} = (-1)^m t_0.
\end{equation}
Hence, if \(D\) is odd and \(t_0\) is fixed by \eqref{sym_trig}, then \(t_1 = (-1)^m t_0\), while if \(t_1\) is fixed by \eqref{sym_trig}, then \(t_0 = (-1)^m t_1\). If, on the other hand, \(D\) is even, then \eqref{1eq2m} does not hold and the value of the unfixed symmetry is arbitrary. In general, though, for even \(D\), one typically chooses \(t_0, t_1\) equal if \(D \equiv 0 \Mod 8\) and opposite otherwise.

The algorithm ``GenSyms'' (see Algorithm \ref{alg:gensyms}) uses the foregoing considerations to determine \(t_0, t_1\).\footnote{In the program, ``GenSyms'' is manifested as a function \mintinline{python}{gen_syms()}. In this paper, the convention is that inline code represents the Python function while double-quoted camel-script text represents the algorithm abstractly.} Optional desired values for \(t_0, t_1\) can be provided as input. If the desired values are impossible, an error is raised. Note that \(\textbf{int}(D \% 8 \ne 0)\) represents the boolean condition \(D \not\equiv 0 \Mod 8\) viewed as a numeric value 0 or 1, so that \((-1)^{\textbf{int}(D \% 8 \ne 0)}\) is short-hand for
\begin{equation}
    (-1)^{\textbf{int}(D \% 8 \ne 0)} = \begin{cases} -1, & \textnormal{if } D \not\equiv 0 \Mod 8 \\ 1, & \textnormal{if } D \equiv 0 \Mod 8. \end{cases}
\end{equation}

\begin{algorithm}[ht!]
    \caption{Find Values of \(t_0, t_1\) per Dimension}\label{alg:gensyms}
    \begin{algorithmic}[1]
        \Require \textit{desired\_syms} equals [0,0] (i.e., null), [1,1], [1,-1], [-1,1], or [-1,-1]
        \Function{GenSyms}{\(D\), \textit{desired\_syms}[0..1]}

            \State \(m \gets \lfloor D/2 \rfloor\)
            \State \textit{syms} \(\gets\) [\(-\cos(m \pi/2)\), \(-\sin(m \pi/2)\)]

            \If{\(D\) is odd \& \textit{syms}[0] = 0} \Comment{If \(D\) is odd, use \eqref{1eq2m}.}
                \State \textit{syms}[0] \(\gets (-1)^m~\cdot\) \textit{syms}[1]
            \ElsIf{\(D\) is odd \& \textit{syms}[1] = 0}
                \State \textit{syms}[1] \(\gets (-1)^m~\cdot\) \textit{syms}[0]
            \ElsIf{\(D\) is even \& \textit{syms}[0] = 0 \& \textit{syms}[1] = \textit{desired\_syms}[1]} \Comment{If \(D\) is even, try desired values.}
                \State \textit{syms}[0] \(\gets\) \text{desired\_syms}[0]
            \ElsIf{\(D\) is even \& \textit{syms}[1] = 0 \& \textit{syms}[0] = \textit{desired\_syms}[0]}
                \State \textit{syms}[1] \(\gets\) \text{desired\_syms}[1]
            \ElsIf{\(D\) is even \& \textit{syms}[0] = 0 \& \textit{desired\_syms} = [0,0]} \Comment{If desired values null, use conventions.}
                \State \textit{syms}[0] \(\gets (-1)^{\textbf{int}(D \% 8 \ne 0)}~\cdot\) \textit{syms}[1]
            \ElsIf{\(D\) is even \& \textit{syms}[1] = 0 \& \textit{desired\_syms} = [0,0]}
                \State \textit{syms}[1] \(\gets (-1)^{\textbf{int}(D \% 8 \ne 0)}~\cdot\)\textit{syms}[0]
            \EndIf

            \If{\textit{syms} \(\ne\) \textit{desired\_syms} \& \textit{desired\_syms} \(\ne\) [0,0]} \Comment{If desired symmetries impossible, raise error.}
                \State \textbf{raise error} because \textit{desired\_syms} must equal \textit{syms} if \textit{desired\_syms} is not null
            \EndIf

            \State \Return \textit{syms}
        \EndFunction
    \end{algorithmic}
\end{algorithm}

The values \(t_r\), \(r > 1\), can be determined via \(t_{r+2} = -t_r\), and \eqref{t_sym_rel} shows that this determines the gamma-matrix symmetries. The algorithm ``AntisymmetricGammas'' (see Algorithm \ref{alg:agamma}) leverages these results to list all of the values \(r\) from 1 to \(2 \lfloor D/2 \rfloor\)\footnote{Gamma matrices can have at most \(2 \lfloor D/2 \rfloor\) indices in dimension \(D\) (while maintaining properties like tracelessness).} for which \(r\)-index gamma matrices are antisymmetric, given the dimension \(D\) and a length-two array ``syms'' containing \(t_0, t_1\) (in that order).

\begin{algorithm}[ht!]
    \caption{Tabulate Antisymmetric Gamma Matrices}\label{alg:agamma}
    \begin{algorithmic}[1]
        \Function{AntisymmetricGammas}{\(D\), \textit{syms}[0..1]}

            \State \textit{antisym\_gammas} \(\gets\) [ ]
            \State \(n \gets 2 \lfloor D/2 \rfloor\)

            \For{\(i \in 1..n\)}
                \If{(\(i \equiv 0 \Mod{4}\) \& \textit{syms}[0] = 1) or (\(i \equiv 1 \Mod{4}\) \& \textit{syms}[1] = 1) or (\(i \equiv 2 \Mod{4}\) \& \textit{syms}[0] = -1) or (\(i \equiv 3 \Mod{4}\) \& \textit{syms}[1] = -1)}
                    \State \textbf{append} \(i\) to \textit{antisym\_gammas}
                \EndIf
            \EndFor

            \State \Return \textit{antisym\_gammas}

        \EndFunction
    \end{algorithmic}
\end{algorithm}

Besides the gamma matrices, two other tensors of import are the highest-rank element \(\gamma_{*}\) of the Clifford algebra (for \(D\) even) and the spinor metric (charge conjugation matrix) \(C_{\alpha \beta}\) which is used to raise and lower spinor indices. \(\gamma_{*}\) is antisymmetric if and only if \(t_{2m} = 1\), and \(C_{\alpha \beta}\) is antisymmetric if and only if \(t_0 = 1\). The symmetries of these two tensors and those of the gamma matrices can be employed to discern the symmetry of \(\gamma^{a_1 a_2 \cdots a_r} \gamma_{*}\) for a particular \(r\), which proves necessary in spinor-index canonicalization for even \(D\). It is easy to verify that
\begin{equation}
    \left(\gamma^{a_1 a_2 \cdots a_r} \gamma_{*}\right)_{\alpha \beta} = (-t_{2m}) (-t_0) (-t_r) (-1)^r \left(\gamma^{a_1 a_2 \cdots a_r} \gamma_{*}\right)_{\beta \alpha},
\end{equation}
so a negative is introduced in the symmetry equation above for each of the following conditions: \(t_{2m} = 1\), \(t_0 = 1\), \(t_r = 1\), and \(r \equiv 1 \Mod 2\). The algorithm ``isGammaGammaStarProdSym'' (see Algorithm \ref{alg:iggsps}) uses this to evaluate the boolean expression that \(\gamma^{a_1 a_2 \cdots a_r} \gamma_{*}\) is symmetric rather than antisymmetric.\footnote{We take into account the symmetry of \(\gamma_{*}\) via \(t_{2m}\) even if \(D\) is odd, as \(\gamma_{*} = \pm \gamma^{2m}\) by convention for odd \(D\), so it is well-defined and carries the same symmetry.}

\begin{algorithm}[ht!]
    \caption{Check whether the Product of an \(r\)-Indexed Gamma Matrix with \(\gamma_{*}\) is Symmetric}\label{alg:iggsps}
    \begin{algorithmic}[1]
        \Function{isGammaGammaStarProdSym}{\(r\), \(D\), \textit{syms}[0..1]}

            \State \textit{antisym\_gammas} = \textbf{AntisymmetricGammas}(\(D\), \textit{syms})
            \State \(C \gets 1\)

            \If{\(2\lfloor D/2 \rfloor \in\) \textit{antisym\_gammas}} \Comment{i.e., if \(\gamma_*\) is antisymmetric}
                \State \(C \gets -C\)
            \EndIf
            \If{\textit{syms}[0] = 1} \Comment{i.e., if C is antisymmetric}
                \State \(C \gets -C\)
            \EndIf
            \If{\(r \in\) \textit{antisym\_gammas}} \Comment{i.e., if the \(r\)-index \(\gamma\) is antisymmetric}
                \State \(C \gets -C\)
            \EndIf
            \If{\(r \equiv 1 \Mod{2}\)} \Comment{i.e., if \(\gamma_*\) and the \(r\)-index \(\gamma\) anticommute}
                \State \(C \gets -C\)
            \EndIf

            \State \Return \(C = 1\)

        \EndFunction
    \end{algorithmic}
\end{algorithm}

Besides symmetries, other dimension-specific identities of import are the reductions of gamma matrices with more than \( m = \lfloor D/2 \rfloor\) indices to those with at most \(m\) indices (multiplied by \(\gamma_{*}\) if \(D\) is even). In particular, it follows from (3.41) and (3.42) in \cite{Freedman2013} that
\begin{subequations}
    \begin{align}
        \gamma_{\pm}^{a_1 \cdots a_r} &= \pm \frac{i^{m+1}}{(D-r)!} \epsilon^{a_1 \cdots a_D} \gamma_{\pm~a_D \cdots a_{r+1}}, && D~\textnormal{odd} \label{gamma_red:1} \\
        \gamma^{a_1 \cdots a_r} &= (-1)^r \frac{i^{m+1}}{(D-r)!} \epsilon^{a_1 \cdots a_D} \gamma_{a_D \cdots a_{r+1}} \gamma_{*}, \quad D~\textnormal{even} \label{gamma_red:2},
    \end{align}
\end{subequations}
where the \(\pm\) in the identity for odd \(D\) reflects that there are two choices of representation for the Clifford algebra in dimension \(D\) odd. The algorithm ``GenSubs'' (see Algorithm \ref{alg:gensubs}) takes the dimension and the representation \(+1\) or \(-1\) (only relevant for \(D\) odd) as input and generates the substitution rule for each \(r\) (strictly, if \(D\) is odd) greater than \(m\).\footnote{These identities are intended for use in reducing gamma matrices via repeated substitutions, so it is desirable that the substitutions be carried out in descending number of indices from \(2m\) to \(m+1\). Hence, it is preferable to prepend rather than append the substitution rules to the list as \(r\) increases.}

\begin{algorithm}[ht!]
    \caption{Generate Substitution Rules for Gamma Matrices with More than \(\lfloor D/2 \rfloor\) Indices}\label{alg:gensubs}
    \begin{algorithmic}[1]
        \Function{GenSubs}{\(D\), \textit{rep}}

            \State \(m \gets \lfloor D/2 \rfloor\)
            \State \textit{sub\_exs} \(\gets\) [ ]

            \For{\(r \in (m+1)..2m\)}
                \State \textit{exstr} \(\gets\) \(\gamma^{a_1 \cdots a_r}\)
                \State \textit{res\_str\_base} \(\gets\) \(\frac{i^{m+1}}{(D-r)!} \epsilon^{a_1 \cdots a_r a_{r+1} \cdots a_D} \gamma_{a_D \cdots a_{r+1}}\)
                \If{\(D \equiv 0 \Mod{2}\)}:
                    \State \textit{sub\_rule} \(\gets\) ``\textit{exstr} \(\rightarrow (-1)^r~\cdot\) \textit{res\_str\_base} \(\cdot~\gamma_*\)'' \Comment{i.e., \eqref{gamma_red:2}}
                \Else
                    \State \textit{sub\_rule} \(\gets\) ``\textit{exstr} \(\rightarrow rep~\cdot\) \textit{res\_str\_base}'' \Comment{i.e., \eqref{gamma_red:1}}
                \EndIf
                \State \textbf{prepend} \textit{sub\_rule} to \textit{sub\_exs}
            \EndFor

            \If{\(D \equiv 0 \Mod{2}\)}
                \State \textit{exstr} \(\gets\) \(\gamma^{a_1 \cdots a_m} \gamma_*\)
                \State \textit{res\_str\_base} \(\gets\) \((-1)^m \frac{i^{m+1}}{m!} \epsilon^{a_1 \cdots a_m a_{m+1} \cdots a_D} \gamma_{a_D \cdots a_{m+1}}\)
                \State \textit{sub\_rule} \(\gets\) ``\textit{exstr} \(\rightarrow\) \textit{res\_str\_base}'' \Comment{i.e., \eqref{gamma_red:2} with \(r = m\)\footnote{Substitution rules in pseudo-code in this paper will be written in quotes.}}
                \State \textbf{append} \textit{sub\_rule} to \textit{sub\_exs}
            \EndIf

            \State \Return \textit{sub\_exs}
        \EndFunction
    \end{algorithmic}
\end{algorithm}

The SusyPy function \mintinline{python}{susy_env()} introduced in \S\ref{using_susypy} employs the foregoing procedures to determine fix symmetries and identities for the chosen dimension and representation. Since the \mintinline{python}{susy_env()} procedure is long and highly technical, dealing largely with Cadabra idiosyncrasies, detailed pseudo-code for it is omitted here.

\subsection{Algorithm for Spinor-Index Canonicalization}\label{spinor_canon}
The algorithm ``SpinorCombine,'' previewed earlier, transforms an expression \(\mathcal{E}\), involving terms which are products of elements with spinor indices, into a canonical expression which maximally combines factors by exploiting (monoterm) symmetries, the spinor metric, and sorting. ``Combining'' refers to multiplication of the form \((B)_{\alpha}{}^{\beta} (C)_{\beta}{}^{\gamma} = (BC)_{\alpha}{}^{\gamma}\) which attaches spinor indices to the product of pure Lorentz tensors following the NW-SE convention\footnote{Strictly speaking, NW-SE convention is only necessary if the spinor metric is antisymmetric, i.e., in dimensions with \(t_0 = 1\), but there is no error and relatively little computational overhead in forcing NW-SE convention anyway, and the spinor metric is antisymmetric in most dimensions of interest for supersymmetry.} that this combination can only occur given a dummy index pair in which the first index is contravariant and the second covariant. ``SpinorCombine'' loops through the terms in \(\mathcal{E}\) and breaks each down into a coefficient \(c_0\), a set \(S\) of spinor-indexed factors (i.e., indexbrackets or spinor metrics in the term), a set \(I\) of underlying Lorentz tensors of the latter (e.g., if \((B)_{\alpha \beta} \in S\), then \(B \in I\)), and the remaining factors \(S^c\) (the complement of \(S\) in the set of factors of the term). Each element \(s \in S\) has at most two spinor indices,\footnote{\(s \in S\) with two spinor indices (and \(s \in S^c\)) is bosonic and \(s \in S\) with one spinor index is fermionic. Hence, an element with \(3\) or more indices cannot carry any additional physical significance and is absent in any calculation of interest.} so the set \(O\) of indices of the \(s \in S\) is recorded as an array of 2-tuples \(o(s)\) for all \(s \in S\), where the second entry of \(o(s)\) is left null if \(s\) carries only one spinor index. For example, if the term is \(2 (A)_{\alpha \beta} (B)_{\gamma}\), then \(c_0 = 2\), \(S = \{(A)_{\alpha \beta}, (B)_{\gamma}\}\), \(I = \{A, B\}\), and \(O = \{({}_{\alpha}, {}_{\beta}), ({}_{\gamma}, \textnormal{None})\}\).\footnote{In the next few sections, even when indices are given in isolation, we will frequently impart them the appropriate parity.}

If \(|S| \leq 1\), then the term is already reduced, but if \(|S| \geq 2\), then the algorithm proceeds by augmenting every 2-tuple \(o \in O\) with the boolean proposition that the factor \(s(o) \in S\) corresponding to \(o\) has an (anti)symmetry. The set of \(3\)-tuples is inputted into ``FindChain'' (discussed later) to retrieve the ordering of the elements \(s \in S\) which leads to maximal chains of consecutive spinor-indexed factors which can be combined. \(S, I, O\) are imparted this ordering, and then the program loops through the consecutive pairs of factors in this ordering.\footnote{Here, pairs \((s_1, s_2), (s_3, s_4)\) of consecutive spinor-indexed factors are considered consecutive if \(s_2 = s_3\).} This first part of the procedure is shown in pseudo-code as Algorithm \ref{alg:spinorcomb1}.

\begin{algorithm}[ht!]
    \caption{Spinor Index Canonicalization Part 1}\label{alg:spinorcomb1}
    \begin{algorithmic}[1]
        \Function{SpinorCombine}{$\mathcal{E}$}
            \For{\textit{term} in \(\mathcal{E}\)}
                \State \(c_0 \gets \) multiplier of term
                \State \(S \gets\) factors of \textit{term} that are a spinor metric or a Cadabra ``indexbracket''
                \State \(S^c \gets\) other factors in \textit{term}
                \State \(I \gets \{s \textnormal{ without spinor indices} ~ | ~ s \in S\}\)
                \State \(O \gets \) [ ]

                \For{\(s \in S\)}
                    \State \(o \gets\) spinor indices of \(s\)
                    \If{\(o\) has only one element}
                        \State \textbf{append} [\(o\)[0], None] to \(O\)
                    \Else
                        \State \textbf{append} \(o\) to \(O\)
                    \EndIf
                \EndFor

                \State \(n \gets |S|\)

                \If{\(n \geq 2\)}
                    \State \textit{indices\_with\_syms} \(\gets \{(o[0], o[1], \textnormal{symmetry of } s(o)) ~ | ~ o \in O\}\)
                    \State \textit{order} \(\gets\) \textbf{FindChain}(\textit{indices\_with\_syms})
                    \State \(S \gets\) reorder \(S\) based on \(order\)
                    \State \(I \gets\) reorder \(I\) based on \(order\)
                    \State \(O \gets\) reorder \(O\) based on \(order\)

                    \For{\(i \in\) 1..\(n-1\)}

                        \State continued...
        \algstore{bkbreak}
    \end{algorithmic}
\end{algorithm}

Next is an accounting of all minus signs introduced when combining the pair of factors, i.e., when casting the pair in NW-SE convention. Let \((s_{i-1}, s_{i})\) be the current pair and \(o_{i-1} = o(s_{i-1}), o_{i} = o(s_{i})\). There are only two ways to get a minus sign, namely, by swapping indices on a spinorially antisymmetric tensor-spinor and (if \(t_0 = 1\)) by swapping parities of dummy indices. (This follows from antisymmetry of the spinor metric, as can be easily verified.) Hence, we deal with seven cases. 1) If one of \(s_{i-1}, s_{i}\) has only one index (\(\textnormal{None} \in o_{i-1} \cup o_{i}\)), then the pair is skipped over. (A tensor-spinor with a single spinor index is a fermionic field, which cannot be meaningfully contracted away.) 2) If each factor has two spinor indices, the second index in \(o_{i-1}\) matches the first in \(o_{i}\), and the former is contravariant while the latter is covariant, then NW-SE convention is satisfied, whence nothing is done. 3) If the conditions in (2) hold except for the dummy index parities, then the parities must be flipped. 4) If the first index in \(o_{i-1}\) matches the first index in \(o_{i}\) and \(s_{i-1}\) is (anti)symmetric, then the indices on \(s_{i-1}\) must be swapped and the parities of dummy indices flipped if necessary. For example, if \(s_{i-1} = (A)_{\beta \alpha}\) and the spinor metric are antisymmetric, then \((A)_{\beta \alpha} (B)^{\beta \gamma} \longmapsto -(A)_{\alpha \beta} (B)^{\beta \gamma} \longmapsto (A)_{\alpha}{}^{\beta} (B)_{\beta}{}^{\gamma}\), which is in NW-SE convention. (5) and (6) are analogous but exploiting (anti)symmetry of the second factor and both factors, respectively. 7) There is no exploitable symmetry or dummy indices. This second part of the procedure is shown in pseudo-code as Algorithm \ref{alg:spinorcomb2}.

\begin{algorithm}[ht!]
    \caption{Spinor Index Canonicalization Part 2}\label{alg:spinorcomb2}
    \begin{algorithmic}[1]
        \algrestore{bkbreak}
            \State \(c \gets 1\)

            \If{\(S[i-1], S[i]\) are Cadabra ``indexbrackets'' but None \(\in O[i-1] \cup O[i]\)} \Comment{Case (1)}
                \State \textbf{continue} to next iteration

            \ElsIf{\(O[i-1][1], O[i][0]\) match \& \(O[i-1][1]\) is covariant \& \(O[i][0]\) is contravariant} \Comment{Case (3)}
                \State \(c \gets -c\) \textbf{if} \(t_0 = 1\)
                \State \textbf{make} \(O[i-1][1]\) contravariant
                \State \textbf{make} \(O[i][0]\) covariant

            \ElsIf{\(O[i-1][0], O[i][0]\) match \& have opposite parity \& \(S[i-1]\) is (anti)sym} \Comment{Case (4)}
                \State\textbf{reverse} order of \(O[i-1]\)
                \State \(c \gets -c\) \textbf{if} \(S[i-1]\) is antisymmetric
                \State \(c \gets -c\) \textbf{if} \(O[i-1][1]\) is covariant \& \(O[i][0]\) is contravariant \& \(t_0 = 1\)
                \State \textbf{make} \(O[i-1][1]\) contravariant
                \State \textbf{make} \(O[i][0]\) covariant

            \ElsIf{\(O[i-1][1], O[i][1]\) match \& have opp. parity \& \(S[i]\) is (anti)sym} \Comment{Case (5)}
                \State\textbf{reverse} order of \(O[i]\)
                \State \(c \gets -c\) \textbf{if} \(S[i]\) is antisymmetric
                \State \(c \gets -c\) \textbf{if} \(O[i-1][1]\) is covariant \& \(O[i][0]\) is contravariant \& \(t_0 = 1\)
                \State \textbf{make} \(O[i-1][1]\) contravariant
                \State \textbf{make} \(O[i][0]\) covariant

            \ElsIf{\(O[i-1][0], O[i][1]\) match \& have opp. parity \& \(S[i-1], S[i]\) are (anti)sym} \Comment{Case (6)}
                \State\textbf{reverse} order of \(O[i-1]\)
                \State\textbf{reverse} order of \(O[i]\)
                \State \(c \gets -c\) \textbf{if} \(S[i-1]\) is antisymmetric
                \State \(c \gets -c\) \textbf{if} \(S[i]\) is antisymmetric
                \State \(c \gets -c\) \textbf{if} \(O[i-1][1]\) is covariant \& \(O[i][0]\) is contravariant \& \(t_0 = 1\)
                \State \textbf{make} \(O[i-1][1]\) contravariant
                \State \textbf{make} \(O[i][0]\) covariant
            \EndIf

            \State continued...
        \algstore{bkbreak}
    \end{algorithmic}
\end{algorithm}

Finally, if NW-SE convention was achieved, then the product tensor-spinor is constructed. This means wrapping the underlying Lorentz tensors in a common indexbracket with the dummy indices eliminated, multiplying by the original coefficients, replacing the old second factor with this new one, and deleting the first factor. (If one of the factors is a spinor metric, then it has no underlying Lorentz tensor, but this is handled in an analogous way.) After processing every consecutive pair of factors in this manner,\footnote{Notice that the result of an iteration is included in the next pair.} the term being considered in \(\mathcal{E}\) is replaced by the symbolic product of the modified and combined elements of \(S\), the elements of \(S^{c}\), and the coefficient \(c_0\) of the term recorded at the beginning. After looping through every term, the program returns the resulting maximally combined, spinor-index-canonicalized \(\mathcal{E}\). This third part of ``SpinorCombine'' is shown as pseudo-code in Algorithm \ref{alg:spinorcomb3}.

\begin{algorithm}[ht!]
    \caption{Spinor Index Canonicalization Part 3}\label{alg:spinorcomb3}
    \begin{algorithmic}[1]
        \algrestore{bkbreak}
                        \If{\(O[i-1][1], O[i][0]\) match \& \(O[i-1][1]\) is contravariant \& \(O[i][0]\) is covariant}
                            \State \(c_1 \gets\) multiplier of \(S[i-1]\)
                            \State \(c_2 \gets\) multiplier of \(S[i]\)

                            \If{\(S[i-1]\) and  \(S[i]\) are Cadabra ``indexbrackets''}
                                \State \(S[i] \gets c \cdot c_1 \cdot c_2 \cdot (I[i-1] \cdot I[i])_{O[i-1][0] O[i][1]}\)
                                \State \(I\)[i] \(\gets I\)[i-1]\(\cdot I\)[i]

                            \ElsIf{\(S[i-1]\) and  \(S[i]\) are spinor metrics}
                                \State \(S[i] \gets c \cdot c_1 \cdot c_2 \cdot C_{O[i-1][0] O[i][1]}\)
                                \State \(I[i] \gets \mathds{1}\)

                            \ElsIf{\(S[i-1]\) is a Cadabra ``indexbracket'' \& \(S[i]\) is a spinor metric}
                                \State \(S[i-1] \gets c \cdot c_1 \cdot c_2 \cdot (I[i-1])_{O[i-1][0] O[i][1]}\)
                                \State \(S[i] \gets S[i-1]\)
                                \State \(I[i] \gets I[i-1]\)

                            \ElsIf{\(S[i-1]\) is a spinor metric \& \(S[i]\) is a Cadabra ``indexbracket''}
                                \State \(S[i] \gets c \cdot c_1 \cdot c_2 \cdot (I[i])_{O[i-1][0] O[i][1]}\)

                            \EndIf
                            
                            \State \(O[i][0] \gets O[i-1][0]\)
                            \State \textbf{delete} \(S[i-1]\) from \(S\)
                            \State \textbf{delete} \(O[i-1]\) from \(O\)
                            \State \textbf{delete} \(I[i-1]\) from \(I\)
                        \EndIf
                    \EndFor

                    \State \textbf{replace} \textit{term} in \(\mathcal{E}\) with \(c_0~\cdot\) \textbf{prod}(\(S^c\)) \(\cdot\) \textbf{prod}(\(S\)) \Comment{\textbf{prod} is symbolic product}
                \EndIf
            \EndFor
            \State \Return \(\mathcal{E}\)
        \EndFunction
    \end{algorithmic}
\end{algorithm}

We now consider ``FindChain.'' Recall that its input is an array \(A\) of 3-tuples whose entries are the two indices of a factor (one of which may be ``None'') and the boolean proposition that the corresponding \(s_{i} \in S\) is (anti)symmetric with respect to these indices. ``FindChain'' augments each 3-tuple with a number \(i\): the location of the 3-tuple in \(A\).\footnote{In ordinary programming jargon, \(i\) would be called the \textit{index} of the 3-tuple in \(A\), but we refer to it as the \textit{location} to prevent confusion.} Let \(B\) be the resulting array of 4-tuples; the goal of the algorithm is essentially to order these 4-tuples into optimal chains and read off the indices to obtain the best reordering of the elements of \(S\). In particular, the goal will be to create chains of spinor factors ordered so that exploitation of symmetries and parity flips can combine the factors according to the NW-SE convention without commuting (reordering) the factors.

We start by sorting \(B\) so that elements with no symmetries appear first, then elements with symmetries, and finally elements with one spinor index. This sets up the forthcoming loop: elements without symmetries cannot be manipulated, so they form rigid chains; elements with symmetries can be manipulated to fit these chains, and at the end, elements with single spinor indices terminate ends of these chains (e.g., if a chain has beginning index \(\alpha\) and ending index \(\beta\), and an element with sole index \(\beta\) is added to the chain, then the resulting chain has only one end to which further elements may be joined, viz., \(\alpha\)). We consider an evolving array \(C\) of chains. Each chain is a length-5 array consisting of the array of locations of the chain's elements as elements of \(A\), the first spinor index in the chain, the last spinor index in the chain, the boolean proposition that exploitation of symmetries can enable the first and last indices to be swapped, and the location of the chain as an element of \(C\) (e.g., if the third chain in \(C\) consists of the second, eighth, and tenth elements of \(A\), has first index \(\alpha\), has last index \(\beta\), and these two indices may be swapped by exploiting symmetries of the components of the chain, then this chain is represented by \([[2,8,10], \alpha, \beta, \textnormal{True}, 2]\)). Notice that only the first and last indices in the chain are relevant, because all other indices are paired within the chain. Also, the boolean proposition that exploitation of symmetries can swap the first and last indices of the chain is the same as the boolean proposition that every component of the chain bears a symmetry in its spinor indices.

The first chain created is a seed for the procedure, viz., a length-5 array representing the 1-element chain consisting of only the first element \(B_0\) of \(B\). This length-5 array consists of the 1-element array with the location of \(B_0\) in \(A\), the first index of \(B_0\), the second index of \(B_0\), the boolean proposition that \(B_0\) has a symmetry,\footnote{To say that an element \(B_0\) of \(B\) has a symmetry is technically an abuse of notation, since it is the element of \(S\) corresponding to \(B_0\) which has a symmetry, and \(B_0\) simply carries the boolean of the existence of such a symmetry. Nevertheless, this minor misnomer improves brevity and should cause no confusion.} and the number 0 reflecting that this is the first chain.\footnote{We follow the programming convention that the first element of an array has location 0.} With this seed in place, the algorithm proceeds to loop through the remaining elements in \(B\). Let \(B_{j}\) be the \(j\)th element of \(B\), \(\alpha_{j}\) be its first index, \(\beta_{j}\) be its second index, \(\mathfrak{s}_{j}\) be the boolean proposition of symmetry, and \(i_{j}\) be the location of \(B_{j}\) in \(A\), so that \(B_{j} = (\alpha_{j}, \beta_{j}, \mathfrak{s}_{j}, i_{j})\). In round \(j\), the algorithm loops through all elements of \(C\), i.e., all stored chains. For each chain \(c \in C\), let ``chain'' be the array of components of the chain, \(\alpha_{c}\) be the first index, \(\beta_{c}\) be the last index, \(\mathfrak{s}_{c}\) be the boolean proposition of overall symmetry, and \(i_{c}\) be the location of \(c\) in \(C\), so that \(c = [\textnormal{chain}, \alpha_{c}, \beta_{c}, \mathfrak{s}_{c}, i_{c}]\). If \(B_{j}\) has not yet been connected to any chain by the time \(c\) is reached in the loop, then there are three possibilities. 1) If \(\alpha_{c}\) is not null and equals the second index \(\beta_{j}\) of \(B_{j}\),\footnote{By equality, we mean equality up to parity. Obviously, two spinor indices of the same parity cannot exist in the same product.} then \(B_{j}\) can be prepended to the chain by prepending \(i_{j}\) to ``chain'' (e.g., if \(c\) represents the chain \((D_1)_{\alpha}{}^{\gamma_1} \cdots (D_{r+1})_{\gamma_r}{}^{\beta}\) and \(B_{j}\) represents \((E)_{\eta}{}^{\alpha}\), then the latter can be prepended to the former to create the chain \((E)_{\eta}{}^{\alpha} (D_1)_{\alpha}{}^{\gamma_1} \cdots (D_{r+1})_{\gamma_r}{}^{\beta}\)). The first index and symmetry of the new chain derive from the first index of \(B_{j}\) and the symmetries of the \(c\) and \(B_{j}\), viz., \(\alpha_{c}\) is replaced by \(\alpha_{j}\) and \(\mathfrak{s}_{c}\) by the boolean \(\mathfrak{s}_{c} \& \mathfrak{s}_{j}\). An equivalent situation can occur if \(\alpha_{c}\) is equal to \(\alpha_{j}\) and \(B_{j}\) bears a symmetry in its indices (e.g., \(B_{j}\) represents \((E)^{\alpha}{}_{\eta}\)). Of course, the indices of \(B_{j}\) take opposite roles here. Either way, the number of the chain is recorded for later use (see the next paragraph). 2) If \(\beta_{c}\) is not null and equals \(\alpha_{j}\) (or  if \(\beta_{c}\) equals \(\beta_{j}\) and \(B_{j}\) bears a symmetry), then \(B_{j}\) can be appended to the chain in an analogous way. 3) If neither of the two foregoing conditions holds, then \(B_{j}\) cannot be linked to \(c\).

The loop through the chains \(c \in C\) does not terminate when \(B_{j}\) is connected to a chain, for it may have two spinor indices and therefore connect to an additional chain. The result would be combining the two chains to which \(B_{j}\) is linked, which entails a distinct set of operations from those executed when \(B_{j}\) is first linked to a chain. There are six ways that \(B_{j}\) can linked to another \(c \in C\) if \(B_{j}\) has already been connected to the \(k\)th chain \(C_{k}\). If the first index of \(C_{k}\) equals \(\alpha_{c}\), they are not null, and \(c\) bears a symmetry in its indices (\(\mathfrak{s}_{c} = \textnormal{True}\)), then \(c\) can be reversed and prepended to \(C_{k}\) by reversing ``chain'' and prepending it to the first entry in the length-5 array \(C_{k}\)\footnote{Of course, formally, attaching \(c \in C\) to an earlier \(C_k \in C\) requires deleting \(c\) from \(C\) as a separate chain.} (e.g., if \(C_{k}\) represents the chain \((D_1)_{\alpha}{}^{\gamma_1} \cdots (D_{r+1})_{\gamma_r}{}^{\beta}\) and \(c\) represents the chain \((E_1)^{\alpha \eta_1} \cdots (E_{s+1})_{\eta_s \rho}\), then the two are replaced by the chain \((E_{s+1})_{\rho}{}^{\eta_s} \cdots (E_1)_{\eta_1}{}^{\alpha} (D_1)_{\alpha}{}^{\gamma_1} \cdots (D_{r+1})_{\gamma_r}{}^{\beta}\), up to a sign). The indices of the combined chain are derived in the natural way. The remaining cases are analogous, except that \(C_{k}\) may be the one bearing a symmetry, or different indices may be equal. In all of cases, the symmetry boolean of the combined chain is the boolean \(\mathfrak{s}_{c} \& C_{k}[3]\) which is true if and only if both \(c\) and \(C_{k}\) bear symmetry in their indices.

After looping through every \(c \in C\), if \(B_{j}\) could not connect to any existing chain, it is rendered its own 1-element chain \([[i_{j}], \alpha_{j}, \beta_{j}, \mathfrak{s}_{j}, |C|]\) at the end of \(C\). After looping through every \(B_{j} \in B\), the only remaining task is to sort the resulting chains by indices in an extended lexicographic order in which null is greater than all other possibilities for the indices. In particular, chains are reversed when symmetry is available to ensure that the first index is lower lexicographically,\footnote{Notice that this puts any single-indexed factor at the end of the chain.} and then chains are ordered lexicographically by first index. From each chain, the first entry, containing the locations of the components of the chain in \(A\), is drawn, and the ordered union of these first entries is returned as the output of the function.\footnote{Specifically, the strict total order on the union is defined by setting \(i < j\) if either there exists \(c \in C\) such that \(i, j \in c[0]\) and \(i < j\), or given \(c \ni i, c' \ni j\), \(c\) precedes \(c'\) in lexicographic order of their first index, i.e., \(\alpha_{c}\) is lower lexicographically than \(\alpha_{c'}\).} From the foregoing, this outputted ordering of the elements in \(A\) is both canonical and representative of the maximum chaining of elements adhering to the NW-SE convention. Algorithm \ref{alg:findchain} shows detailed pseudo-code for the described procedure.

\begin{algorithm}[ht!]
    \caption{Sorting Objects with Spinor Indices}\label{alg:findchain}
    \begin{algorithmic}[1]

        \Function{FindChain}{\(A\)[\(0..n\)]}
            \State \(B \gets \{(A_i[0], A_i[1], A_i[2], i) \mid i \in 0..n\}\) \Comment{add the location \(i\) of every \(a \in A\) to the end of \(a\)}
            \State \textbf{move} all \(b \in B\) with symmetries to the end of \(B\)
            \State \textbf{move} all \(b \in B\) with only one spinor index to the end of \(B\)

            \State \(C \gets [[[B_0[3]], B_0[0], B_0[1], B_0[2], 0]]\) \Comment{initialize our chain set with the first element of \(B\)}

            \For{\(j \in 1..n\)}
                \State \(\alpha_j := B_j[0], ~ \beta_j := B_j[1], ~ \mathfrak{s}_j := B_j[2], ~ i_j := B_j[3]\)
                \State \textit{connected} \(\gets\) False
                \State \(k = -1\) \Comment{location of chain to which \(B_j\) is joined (N.B. \(-1\) means not joined)}
                
                \For{\(c \in C\)}
                    \State \textit{chain} \(:= c[0], ~ \alpha_c := c[1], ~ \beta_c := c[2], ~ \mathfrak{s}_c := c[3], ~ i_c := c[4]\)
                    
                    \If{not \textit{connected} \& \(\alpha_c \ne\) None \& (\(\alpha_c = \beta_j\) or (\(\mathfrak{s}_j\) \& \(\alpha_c = \alpha_j\)))} \Comment{\(B_j\) can join \(c\) on the left}
                        \State \textbf{prepend} \(i_j\) to \textit{chain}
                        \State \(\alpha_c \gets \alpha_j\) if \(\alpha_c \ne \alpha_j\) else \(\beta_j\)
                        \State \(\mathfrak{s}_c \gets \mathfrak{s}_c\) \& \(\mathfrak{s}_j\) \Comment{the combination has symmetry \(\Leftrightarrow\) both \(B_j\) \& \(c\) have symmetry}
                        \State \textit{connected} \(\gets\) True
                        \State \(k \gets i_c\)
                    \ElsIf{not \textit{connected} \& \(\beta_c \ne\) None \& (\(\beta_c = \alpha_j\) or (\(\mathfrak{s}_j\) \& \(\beta_c = \beta_j\)))} \Comment{\(B_j\) can join \(c\) on the right}
                        \State \textbf{append} \(i_j\) to \textit{chain}
                        \State \(\beta_c \gets \alpha_j\) if \(\beta_c \ne \alpha_j\) else \(\beta_j\)
                        \State \(\mathfrak{s}_c \gets \mathfrak{s}_c\) \& \(\mathfrak{s}_j\)
                        \State \textit{connected} \(\gets\) True
                        \State \(k \gets i_c\)
                    \ElsIf{\textit{connected} \& \(C_k[1] = \alpha_c\) \& \(C_k[1], \alpha_c \ne\) None \& \(\mathfrak{s}_c\)} \Comment{\(c\) flipped can join \(C_k\) on the left}
                        \State \(C_k \gets [\textbf{reverse}(chain) + C_k[0], \beta_c, C_k[2], \mathfrak{s}_c\) \& \(C_k[3]]\)
                        \State \textbf{delete} \(c\)
                    \ElsIf{\textit{connected} \& \(C_k[1] = \alpha_c\) \& \(C_k[1], \alpha_c \ne\) None \& \(C_k[3]\)} \Comment{\(C_k\) flipped can join \(c\) on the left}
                        \State \(C_k \gets [\textbf{reverse}(C_k[0]) + chain, C_k[2], \beta_c, \mathfrak{s}_c\) \& \(C_k[3]]\)
                        \State \textbf{delete} \(c\)
                    \ElsIf{\textit{connected} \& \(C_k[2] = \beta_c\) \& \(C_k[2], \beta_c \ne\) None \& \(\mathfrak{s}_c\)} \Comment{\(c\) flipped can join \(C_k\) on the right}
                        \State \(C_k \gets [C_k[0] + \textbf{reverse}(chain), C_k[1], \alpha_c, \mathfrak{s}_c\) \& \(C_k[3]]\)
                        \State \textbf{delete} \(c\)
                    \ElsIf{\textit{connected} \& \(C_k[2] = \beta_c\) \& \(C_k[2], \beta_c \ne\) None \& \(C_k[3]\)} \Comment{\(C_k\) flipped can join  \(c\) on the right}
                        \State \(C_k \gets [chain + \textbf{reverse}(C_k[0]), \alpha_c, C_k[1], \mathfrak{s}_c\) \& \(C_k[3]]\)
                        \State \textbf{delete} \(c\)
                    \ElsIf{\textit{connected} \& \(C_k[1] = \beta_c\) \& \(C_k[1], \beta_c \ne\) None} \Comment{\(c\) can join \(C_k\) on the left}
                        \State \(C_k \gets [chain + C_k[0], \alpha_c, C_k[2], \mathfrak{s}_c\) \& \(C_k[3]]\)
                        \State \textbf{delete} \(c\)
                    \ElsIf{\textit{connected} \& \(C_k[2] = \alpha_c\) \& \(C_k[2], \alpha_c \ne\) None} \Comment{\(c\) can join \(C_k\) on the right}
                        \State \(C_k \gets [C_k[0] + chain, C_k[1], \beta_c, \mathfrak{s}_c\) \& \(C_k[3]]\)
                        \State \textbf{delete} \(c\)
                    \EndIf

                \EndFor
                \If{not \textit{connected}}
                    \State \textbf{append} \([[i_j], \alpha_j, \beta_j, \mathfrak{s}_j, |C|]\) to \(C\)
                \EndIf
            \EndFor

            \State \textbf{reverse} \(c[0]\) and \textbf{swap} \(\alpha_c, \beta_c\) for any \(c \in C\) such that \(\mathfrak{s}_c\) and \(\beta_c < \alpha_c\) \Comment{by convention, ``None'' \(>\) any value}
            \State \textbf{sort} \(c \in C\) by \(\alpha_c\)
            \State \Return \(\bigcup_{c \in C} c[0]\)
        \EndFunction
    \end{algorithmic}
\end{algorithm}

We try to illuminate the foregoing via an illustrated example in Fig. \ref{fig:findchain}. Consider the ostensibly unwieldy expression
\begin{equation}
    A_{\alpha \beta} B_{\gamma}{}^{\delta} \bar{C}_{\epsilon}{}^{\zeta} D_{\eta}{}^{\theta} \bar{E}_{\zeta}{}^{\beta} \bar{F}_{\iota}{}^{\kappa} G^{\lambda \eta} \bar{H}_{\kappa}{}^{\gamma} I_{\theta \delta} J_{\lambda} \bar{K}^{\alpha \iota},
\end{equation}
where the tensors with bars, viz., \(\bar{C}_{\epsilon}{}^{\zeta}, \bar{E}_{\zeta}{}^{\beta}, \bar{F}_{\iota}{}^{\kappa}, \bar{H}_{\kappa}{}^{\gamma}, \bar{K}_{\alpha}{}^{\iota}\) lack symmetries; the tensors without bars, viz., \(A_{\alpha \beta}, B_{\gamma}{}^{\delta}, D_{\eta}{}^{\theta}, G^{\lambda \eta}, I_{\theta \delta}\), are symmetric;\footnote{Here, we mean that these tensor-spinors not only bear a symmetry, but that no negative is introduced by swapping spinor indices. This simplifies the discussion.} and only \(J_{\lambda}\) is single-indexed. In Fig. \ref{fig:findchain}(a), it is recognized the \(\bar{C}_{\epsilon}{}^{\zeta}, \bar{E}_{\zeta}{}^{\beta}\) share the index \(\zeta\) and are already in NW-SE form, so they can be combined into a chain. Similarly, \(\bar{K}^{\alpha \iota}, \bar{F}_{\iota}{}^{\kappa}\) share \(\iota\) and \(\bar{F}_{\iota}{}^{\kappa}, \bar{H}_{\kappa}{}^{\gamma}\) share \(\kappa\) in NW-SE form, so \(\bar{K}^{\alpha \iota}, \bar{F}_{\iota}{}^{\kappa},  \bar{H}_{\kappa}{}^{\gamma}\) can be combined into a chain. These two chains consist entirely of tensors without symmetries (indicated by the solid circles), so they are rigid and the positions of the outer indices cannot be flipped; rather, the remaining tensors with symmetries may need to be flipped to accommodate these fixed chains. In Fig. \ref{fig:findchain}(b), it is noticed that if one swaps the indices on \(A_{\alpha \beta}\), which is symmetric and therefore manipulable (indicated by the dashed circle), then it will share \(\beta\) with the \(\bar{C}\)-\(\bar{E}\) chain in NW-SE form. The swapping of the indices is indicated by the positions of \(\alpha, \beta\) to the right and left of \(A\), respectively, rather than their original positions to the left and right of \(A\), respectively. Notice that since the resulting chain still includes tensors without symmetry, i.e., solid circles, this chain is still rigid. In Fig. \ref{fig:findchain}(c), it is noticed that despite this rigidity, the \(\bar{C}\)-\(A\) chain already shares \(\alpha\) with the \(\bar{K}\)-\(\bar{H}\) chain in NW-SE form (up to parity), so these two chains are concatenated.\footnote{N.B. We just as well could have made \(\bar{C}\), or any other tensor, the top node. The choice of top node is purely aesthetic and has no interpretation with respect to the nodes' ordering in ``FindChain.''} In Fig. \ref{fig:findchain}(d), the algorithm moves on to the other 2-indexed symmetric tensors and finds that \(D_{\eta}{}^{\theta}, G^{\lambda \eta}\) share \(\eta\) in NW-SE form and form a chain. Since \(D_{\eta}{}^{\theta}, G^{\lambda \eta}\) are symmetric, so is the \(G\)-\(D\) chain. The algorithm also finds that \(B_{\gamma}{}^{\delta}\) shares \(\gamma\) with the \(\bar{C}\)-\(\bar{H}\) chain in NW-SE form, forming a \(\bar{C}\)-\(B\) chain, and that if one swaps the indices on \(I_{\theta \delta}\), then it will share \(\delta\) with the \(\bar{C}\)-\(B\) chain in NW-SE form, forming a \(\bar{C}\)-\(I\) chain. The algorithm discovers the latter before it finds the shared index with the smaller chain because the larger chain was defined earlier, and therefore appears earlier when looping through the array of chains. In Fig. \ref{fig:findchain}(e), the program continues looping through the array of chains while considering \(I\) and finds that if the \(G\)-\(D\) chain is flipped, then it will share \(\theta\) with the \(\bar{C}\)-\(I\) chain in NW-SE form. Finally, in Fig. \ref{fig:findchain}(f), it is recognized that the single-indexed tensor \(J_{\lambda}\) shares its one index \(\lambda\) with the \(\bar{C}\)-\(G\) chain, and the algorithm attaches the former to the latter, terminating one end of the chain. There are no symmetry issues in this attachment, as \(J\) has only a single spinor index and is therefore effectively symmetric. Reading off of Fig. \ref{fig:findchain}(f), the optimal ordering which will enable all factors to be combined while adhering to NW-SE convention is
\begin{equation}
    \bar{C}_{\epsilon}{}^{\zeta} \bar{E}_{\zeta}{}^{\beta} A_{\beta \alpha} \bar{K}^{\alpha\iota} \bar{F}_{\iota}{}^{\kappa} \bar{H}_{\kappa}{}^{\gamma} B_{\gamma}{}^{\delta}  I_{\delta \theta} D^{\theta}{}_{\eta} G^{\eta \lambda} J_{\lambda}.
\end{equation}
Notice that the expression is not yet in NW-SE convention in terms of parity. Since it is always possible to raise and lower the indices in a dummy-index pair at the cost of a minus sign (if \(t_0 = 1\)), this is not an ordering issue and is consigned to ``SpinorCombine,'' which incidentally gives the completed expression (assuming no underlying Lorentz tensors which would enable contraction of dummy spinor indices)
\begin{equation}
    -\bar{C}_{\epsilon}{}^{\zeta} \bar{E}_{\zeta}{}^{\beta} A_{\beta}{}^{\alpha} \bar{K}_{\alpha}{}^{\iota} \bar{F}_{\iota}{}^{\kappa} \bar{H}_{\kappa}{}^{\gamma} B_{\gamma}{}^{\delta}  I_{\delta}{}^{\theta} D_{\theta}{}^{\eta} G_{\eta}{}^{\lambda} J_{\lambda}.
\end{equation}

\begin{figure}[ht!]
    \centering
    \includegraphics[width=0.99\textwidth]{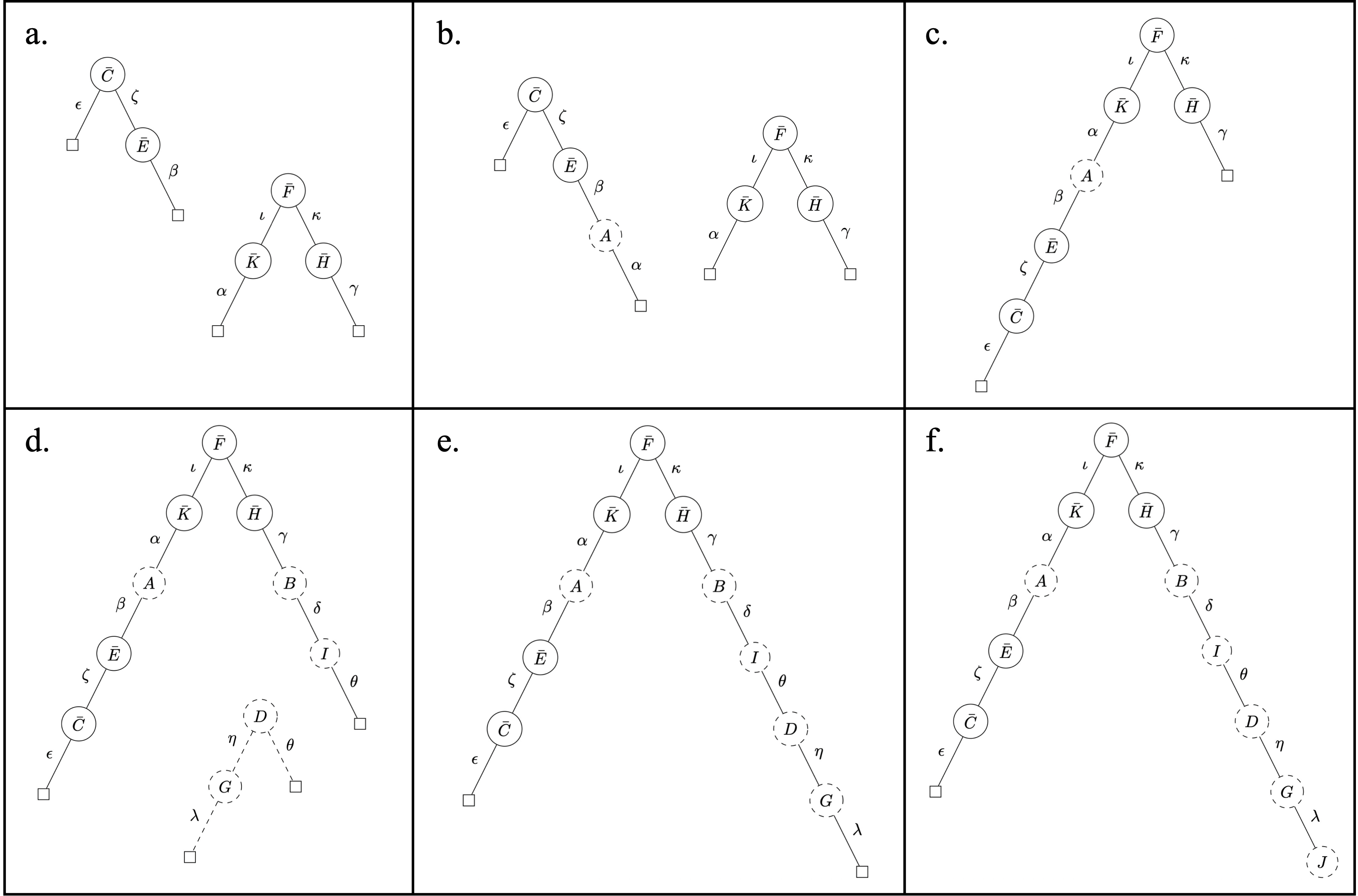}
    \caption{A graphical illustration of the algorithm ``FindChain'' applied to the ordering of a product \(A_{\alpha \beta} B_{\gamma}{}^{\delta} \bar{C}_{\epsilon}{}^{\zeta} D_{\eta}{}^{\theta} \bar{E}_{\zeta}{}^{\beta} \bar{F}_{\iota}{}^{\kappa} G^{\lambda \eta} \bar{H}_{\kappa}{}^{\gamma} I_{\theta \delta} J_{\lambda} \bar{K}^{\alpha \iota}\) of tensors, some with symmetries, some without, and one with only a single spinor index. a) Constructing chains from tensors without symmetries. b) Adding a tensor with symmetry. c) Combining chains. d) Introducing a separate chain of tensors with symmetry. e) Combining chains by reversing one. f) Adding a single-indexed tensor.}
    \label{fig:findchain}
\end{figure}

\subsection{Evaluating Expressions}\label{evaluate}
The main simplification algorithm ``Evaluate'' (see Algorithm \ref{alg:eval}) combines the typical post-processing Cadabra functions with ``SpinorCombine,'' ``GenSubs,'' and a few specialized subprocedures. We suppress the technical details of the subprocedures but briefly outline their principal functions here. ``EvaluateTraces'' evaluate all implicit traces \((B)_{\alpha}{}^{\alpha} = \tr{B}\) which arise after the operations of ``SpinorCombine.'' ``EpsilonToDelta'' computes products of Levi-Civita tensors.\footnote{Since Cadabra's native \mintinline{python}{epsilon_to_delta()} cannot handle products involving more than two Levi-Civita tensors, the wrapper ``EpsilonToDelta'' computes such products pairwise.} ``IndexBracketHex'' enables canonicalization of Lorentz indices through indexbrackets (which is typically impossible in Cadabra, since Cadabra's \mintinline{python}{canonicalise()} function cannot jump between branches of an expression's tree structure) by temporarily replacing a tensor-spinor factor with a pure Lorentz tensor whose spinor indices are encoded by a hexadecimal name. ``SubstituteSpinorZeros'' converts all terms involving a factor of \(0\) within indexbrackets (e.g., \((0)_{\alpha}{}^{\beta}\)) to zero. Finally, ``SpinorExpand'' and ``FourierExpand'' move coefficients, Kronecker deltas, and Levi-Civita tensors outside of indexbrackets and Fourier transforms, and the former also distributes indexbrackets among terms in a sum. The sequence is applied repeatedly until the result reached is no longer unique.\footnote{The simplification sequence is not quite stable. That is, letting \(\mathcal{E}_{n}\) be the outcome of the \(n\)th loop, the sequence \(\{\mathcal{E}_{n}\}_{n=1}^{\infty}\) might not be eventually constant, but rather eventually periodic with period greater than 1. Hence, ``Evaluate'' records all \(\mathcal{E}_{n}\) and terminates after loop \(n = r\) if \(\mathcal{E}_{r} \in \{\mathcal{E}_{i}~|~i < r\}\).} Note that gamma matrix multiplication\footnote{The gamma-splitting procedure used to identify off-shell equation-of-motion terms in \S\ref{multiplet_solver1} intentionally breaks apart gamma matrices into distinct factors which must not be multiplied in later steps. The optional parameter enables these split products to be left alone for the duration of the solver's procedure.} and reduction to lower-indexed gamma matrices\footnote{Applying substitution rules introduces dummy indices via \eqref{gamma_red:1} and \eqref{gamma_red:2}, which is undesirable in the multiplet solvers of \S\ref{multiplet_solver1} and \S\ref{multiplet_solver2}, as it increases canonicalization time. (These substitutions are reserved for the end of ``Evaluate'' for the same reason.) The optional parameter enables these substitutions to be withheld until virtually all canonicalization procedures in those solvers are complete.} may be prevented via optional parameters.

\begin{algorithm}[ht!]
    \caption{Evaluate}\label{alg:eval}
    \begin{algorithmic}[1]
        \Function{Evaluate}{\(\mathcal{E}\), \textit{to\_perform\_subs} = True, \textit{to\_join\_gamma} = True}

            \State \(\mathcal{E} \gets\) \textbf{distribute}(\(\mathcal{E}\))

            \State \textit{prior\_exs} \(\gets\) [ ]
            \While{\(\mathcal{E}\) not in \textit{prior\_exs}}
                \State \textbf{append} \(\mathcal{E}\) to \textit{prior\_exs}

                \State \(\mathcal{E} \gets\) \textbf{product\_rule}(\(\mathcal{E}\)) \Comment{Leibniz rule}
                \State \(\mathcal{E} \gets\) \textbf{SpinorCombine}(\(\mathcal{E}\))
                \State \(\mathcal{E} \gets\) \textbf{EvaluateTraces}(\(\mathcal{E}\))
                \If{\textit{to\_join\_gamma}}:
                    \State \(\mathcal{E} \gets\) \textbf{join\_gamma}(\(\mathcal{E}\)) \Comment{Multiply a pair of gamma matrices}
                \EndIf
                \State \(\mathcal{E} \gets\) \textbf{distribute}(\(\mathcal{E}\))
                \State \(\mathcal{E} \gets\) \textbf{unwrap}(\(\mathcal{E}\)) \Comment{Move coefficients out of derivatives}
                \State \(\mathcal{E} \gets\) \textbf{sort\_product}(\(\mathcal{E}\))
                \State \(\mathcal{E} \gets\) \textbf{sort\_sum}(\(\mathcal{E}\))
                \State \(\mathcal{E} \gets\) \textbf{EpsilonToDelta}(\(\mathcal{E}\))
                \State \(\mathcal{E} \gets\) \textbf{eliminate\_kronecker}(\(\mathcal{E}\)) \Comment{Contract out Kronecker deltas}
                \State \(\mathcal{E} \gets\) \textbf{rename\_dummies}(\(\mathcal{E}\))
                \State \(\mathcal{E} \gets\) \textbf{IndexBracketHex}(\(\mathcal{E}\))
                \State \(\mathcal{E} \gets\) \textbf{canonicalise}(\(\mathcal{E}\))
                \State \(\mathcal{E} \gets\) \textbf{SubstituteSpinorZeros}(\(\mathcal{E}\))
                \State \(\mathcal{E} \gets\) \textbf{substitute}(\(\mathcal{E}\), \(\gamma_* \gamma_* \rightarrow \mathds{1}\))
                \State \(\mathcal{E} \gets\) \textbf{collect\_terms}(\(\mathcal{E}\))
                \State \(\mathcal{E} \gets\) \textbf{SpinorExpand}(\(\mathcal{E}\)) \Comment{unwrap()/distribute() \textnormal{for indexbrackets}}
                \State \(\mathcal{E} \gets\) \textbf{FourierExpand}(\(\mathcal{E}\)) \Comment{unwrap() \textnormal{for Fourier transforms}}
            \EndWhile

            \If{\textit{to\_perform\_subs}}
                \State \(\mathcal{E} \gets\) \textbf{substitute}(\(\mathcal{E}\), \textbf{GenSubs}())
                \State \(\mathcal{E} \gets\) \textbf{SpinorExpand}(\(\mathcal{E}\))
                \State \(\mathcal{E} \gets\) \textbf{canonicalise}(\(\mathcal{E}\))
                \State \(\mathcal{E} \gets\) \textbf{SpinorCombine}(\(\mathcal{E}\))
                \State \(\mathcal{E} \gets\) \textbf{substitute}(\(\mathcal{E}\), \(\gamma_* \gamma_* \rightarrow \mathds{1}\))
            \EndIf

            \State \(\mathcal{E} \gets\) \textbf{collect\_factors}(\(\mathcal{E}\))

            \State \Return \(\mathcal{E}\)

        \EndFunction
    \end{algorithmic}
\end{algorithm}

\subsection{Algorithm for Two-Index Fierz Expansion}\label{fierz_2ind}
While there are a few Fierz transformation/expansion functions in the literature, e.g., Cadabra2's \cite{Peeters2018b} \mintinline{python}{fierz()} and FieldsX's \cite{Frob2021} \mintinline{python}{FierzExpand()}, we need a new algorithm for our purposes. Here, we present a Fierz expansion algorithm distinguished in three ways: 1) it operates on spinor-indexed expressions rather than spinor bilinears, 2) it automatically Fierz-expands every term in an inputted sum, and 3) it only requires knowledge of one pair of spinor indices (which are to be positioned on the same factors in the output). The third requirement is crucial for the multiplet-solving algorithms, as it is necessary to Fierz-expand an expression arising from the anticommutator of supercovariant derivatives, and only the spinor indices on the derivatives is known. (At least one other index is an unknown dummy spinor index.)

The algorithm ``FierzExpand2Index'' (see Algorithm \ref{alg:FierzExpand2Index}) we use is guided by the idea that the coefficients in the expansion of an expression as a linear combination of basis elements need not be scalars. Consider an orthogonal basis \(\Gamma^{A}\) for a Clifford algebra, and suppose that the expression \(\mathcal{E}_{\alpha \beta}\) that we wish to expand has only two known spinor indices (but typically has other unknown spinor indices). We wish to solve for a decomposition of the form
\begin{equation}\label{pseudo_fierz2}
    (\mathcal{E})_{\alpha \beta} = \sum_{B \in \Gamma^{A}} c_{B} (B^{A})_{\alpha \beta}.
\end{equation}
Restricting to one element \((B^{A})_{\alpha \beta} \in \Gamma^{A}\), we can consider the four variants \((B^{A})_{\alpha \beta}\), \((B^{A'})_{\alpha \beta}\), \((B_{A'})^{\beta \alpha}\), and \((B_{A})_{\alpha \beta}\). Multiplying the first and the third gives
\begin{equation}
    (B^{A})_{\alpha \beta} (B_{A'})^{\beta \alpha} = - (B_{A'} B^{A})_{\beta}{}^{\beta} = - \tr(B_{A'} B^{A}) = c'_{B} \delta_{A'}{}^{A}
\end{equation}
for some \(c'_{B}\). The sole purpose of the fourth variant \((B_{A})_{\alpha \beta}\), referred to as the dummy field in Algorithm \ref{alg:FierzExpand2Index}, is to multiply with the above trace so that \(c'_{B}\) can be isolated, for the product is \(c'_{B} (B_{A'})_{\alpha \beta}\).\footnote{\((B_{A})_{\alpha \beta}\) may include a coefficient by which the coefficient of \(c'_{B} (B_{A'})_{\alpha \beta}\) must be divided in order to truly reflect the value \(c'_{B}\) by which the basis element was multiplied.} It follows from the foregoing and orthogonality of the basis elements that multiplying the right-hand side of \eqref{pseudo_fierz2} by \((1/c'_{B}) (B_{A'})^{\beta \alpha}\) yields \(c_{B} \delta_{A'}{}^{A}\), so that multiplying by \((B^{A'})_{\alpha \beta}\) gives the desired projection \(c_{B} (B^{A})_{\alpha \beta}\). Hence, multiplying the left-hand side, viz., \((\mathcal{E})_{\alpha \beta}\), by the same factors yields the projection, and looping through the basis to add up the projections gives the Fierz expansion.

\begin{algorithm}[ht!]
  \caption{Two Spinor Index Fierz Expansion}\label{alg:FierzExpand2Index}
  \begin{algorithmic}[1]
    \Function{FierzExpand2Index}{\(\mathcal{E}\), \(\Gamma^A\), \textit{Inds}[\(1..2\)], \textit{to\_perform\_subs} = True}
      \State \textit{projections} \(\gets\) [ ]
      \For{\(B\in\Gamma^A\)}
        \State \(B\) \(\gets\) \textbf{Evaluate}(\(B\), \textit{to\_perform\_subs} = False)

        \State \textit{primed\_inds} \(\gets\) Lorentz indices of \(B\) primed \Comment{e.g., \({}^{A'}\) if \(B = B^{A}\)}
        \State \textit{unconj\_inds} \(\gets\) \textit{Inds} with parity as in \(\mathcal{E}\)  \Comment{e.g., \([{}_{\alpha}, {}_{\beta}]\) if \textit{Inds} = \([{}_{\alpha}, {}_{\beta}]\) and \(\mathcal{E} = \mathcal{E}_{\alpha \beta}\)}
        \State \textit{conj\_inds} \(\gets\) \textit{Inds} with opposite the parity in \(\mathcal{E}\) \Comment{e.g., \([{}^{\alpha}, {}^{\beta}]\) if \textit{Inds} = \([{}_{\alpha}, {}_{\beta}]\) and \(\mathcal{E} = \mathcal{E}_{\alpha \beta}\)}
        \State \textit{primed\_unconj\_indices} \(\gets\) \textit{unconj\_inds} primed

        \State \textit{element} \(\gets\) \(B\) with spinor indices replaced by \textit{unconj\_inds} \Comment{e.g., \((B^{A})_{\alpha \beta}\)}
        \State \textit{element\_primed} \(\gets\) \textit{element} with Lorentz indices replaced by \textit{primed\_inds} \Comment{e.g., \((B^{A'})_{\alpha \beta}\)}
        \State \textit{element\_primed\_conj} \(\gets\) \textit{element\_primed} with Lorentz indices lowered and spinor indices replaced with \textit{conj\_inds} in reverse order \Comment{e.g., \((B_{A'})^{\beta \alpha}\)}
        \State \textit{element\_squared} \(\gets\) \textit{element} \(\cdot\) \textit{element\_primed\_conj}
        \State \textit{dummy\_field} \(\gets\) \textit{element} with Lorentz indices lowered \Comment{e.g., \((B_{A})_{\alpha \beta}\)}

        \State \textit{element\_squared} \(\gets\) \textbf{SpinorCombine}(\textit{element\_squared}) \Comment{e.g., \(- (B_{A'} B^{A})_{\beta}{}^{\beta}\)}
        \State \textit{elsq\_trace} \(\gets\) \textbf{EvaluateTraces}(\textit{element\_squared}) \Comment{e.g., \(c'_{B} \delta_{A'}{}^{A}\)}
        \State \textit{rhs} \(\gets\) \textbf{Evaluate}(\textit{elsq\_trace} \(\cdot\) \textit{dummy\_field}, \textit{to\_perform\_subs} = False) \Comment{e.g., \(c'_{B} (B_{A'})_{\alpha \beta}\)}
        \State \textit{inv\_rhs\_const} \(\gets\) inverse, of constant coefficient of \textit{rhs} divided by initial coefficient of \(B\) \Comment{i.e., \(c'_{B}\)}

        \State \textit{lhs} \(\gets\) \textbf{Evaluate}(\textit{inv\_rhs\_const} \(\cdot\) \textit{element\_primed\_conj} \(\cdot~\mathcal{E}\), \textit{to\_perform\_subs} = False)
        \State \textit{lhs} \(\gets\) \textit{lhs} with dummy spinor indices replaced by \textit{primed\_unconj\_indices} \Comment{to prevent index name conflicts}
        \State \textit{result} \(\gets\) \textbf{Evaluate}(\textit{lhs} \(\cdot\) \textit{element\_primed}, \textit{to\_perform\_subs} = \textit{to\_perform\_subs})
        
            \State \textbf{append} \textit{result} to \textit{projections}
            \EndFor
            \State \Return symbolic sum of \textit{projections}
        \EndFunction
    \end{algorithmic}
\end{algorithm}

As an example, one can check that the below code (which uses the basis \eqref{11d_basis_redux})
\begin{minted}{python}
>>> fierz_expand_2index(Ex(r'(\Gamma_{a})_{\eta \alpha} (\Gamma^{b c})_{\beta}^{\gamma} + (\Gamma_{a})_{\eta \beta} (\Gamma^{b c})_{\alpha}^{\gamma}'), [Ex(r'C_{\alpha \beta}'), Ex(r'(\Gamma^{a})_{\alpha \beta}'), Ex(r'(\Gamma^{a b})_{\alpha \beta}'), Ex(r'(\Gamma^{a b c})_{\alpha \beta}'), Ex(r'(\Gamma^{a b c d})_{\alpha \beta}'), Ex(r'(\Gamma^{a b c d e})_{\alpha \beta}')], [r'_{\alpha}', r'_{\beta}'])
\end{minted}
verifies the 11D Fierz identity
\begin{align*}
    (\gamma_{a})_{\eta (\alpha} (\gamma^{b c})_{\beta)}{}^{\gamma} &= \frac{1}{16}\bigg\{\left(\gamma_{[1]}\right)_{\alpha \beta} \left(\gamma_{a}{}^{b c [1]}\right)_{\eta}{}^{\gamma} + \left(\gamma_{a}\right)_{\alpha \beta} \left(\gamma^{b c}\right)_{\eta}{}^{\gamma} - \left(\gamma^{[b}\right)_{\alpha \beta} \left(\gamma^{c]}{}_{a}\right)_{\eta}{}^{\gamma} \numberthis \\
    &\quad\quad\quad +\delta_{a}{}^{[b} \left(\gamma_{[1]}\right)_{\alpha \beta} \left(\gamma^{c] [1]}\right)_{\eta}{}^{\gamma} - \delta_{a}{}^{[b} \left(\gamma^{c]}\right)_{\alpha \beta} \delta_{\eta}{}^{\gamma} - \frac{1}{2}\left(\gamma_{[2]}\right)_{\alpha \beta} \left(\gamma_{a}{}^{b c [2]}\right)_{\eta}{}^{\gamma} \\
    &\quad\quad\quad +\left(\gamma_{[1] a}\right)_{\alpha \beta} \left(\gamma^{b c [1]}\right)_{\eta}{}^{\gamma} - \left(\gamma^{[1] [b}\right)_{\alpha \beta} \left(\gamma^{c]}{}_{a [1]}\right)_{\eta}{}^{\gamma} \\
    &\quad\quad\quad -\frac{1}{2}\delta_{a}{}^{[b} \left(\gamma_{[2]}\right)_{\alpha \beta} \left(\gamma^{c] [2]}\right)_{\eta}{}^{\gamma} + \delta_{a}{}^{[b} \left(\gamma^{c] [1]}\right)_{\alpha \beta} \left(\gamma_{[1]}\right)_{\eta}{}^{\gamma} + \left(\gamma^{b c}\right)_{\alpha \beta} \left(\gamma_{a}\right)_{\eta}{}^{\gamma} \\
    &\quad\quad\quad -\left(\gamma_{a}{}^{[b}\right)_{\alpha \beta} \left(\gamma^{c]}\right)_{\eta}{}^{\gamma} + \frac{1}{5!4!}\epsilon^{b c [5] [4]} \left(\gamma_{[5]}\right)_{\alpha \beta} \left(\gamma_{a [4]}\right)_{\eta}{}^{\gamma} \\
    &\quad\quad\quad -\frac{1}{6!}\epsilon_{a}{}^{b c [5] [3]} \left(\gamma_{[5]}\right)_{\alpha \beta} \left(\gamma_{[3]}\right)_{\eta}{}^{\gamma} - \frac{1}{4!}\delta_{a}{}^{[b} \left(\gamma^{c] [4]}\right)_{\alpha \beta} \left(\gamma_{[4]}\right)_{\eta}{}^{\gamma} \\
    &\quad\quad\quad -\frac{1}{3!}\left(\gamma^{[3] b c}\right)_{\alpha \beta} \left(\gamma_{a [3]}\right)_{\eta}{}^{\gamma} + \frac{1}{3!}\left(\gamma_{[3] a}{}^{[b}\right)_{\alpha \beta} \left(\gamma^{c] [3]}\right)_{\eta}{}^{\gamma} \\
    &\quad\quad\quad - \frac{1}{2}\left(\gamma_{a}{}^{b c [2]}\right)_{\alpha \beta} \left(\gamma_{[2]}\right)_{\eta}{}^{\gamma} - \frac{1}{5!4!} \epsilon^{[4] [5]}{}_{a}{}^{[b} \left(\gamma_{[5]}\right)_{\alpha \beta} \left(\gamma^{c]}{}_{[4]}\right)_{\eta}{}^{\gamma} \\
    &\quad\quad\quad - \frac{1}{5!5!}\delta_{a}{}^{[b} \epsilon^{c] [5] [\overline{5}]} \left(\gamma_{[5]}\right)_{\alpha \beta} \left(\gamma_{[\overline{5}]}\right)_{\eta}{}^{\gamma}\bigg\}.
\end{align*}

\subsection{Symbolic Fourier Transform}\label{symbolic_fourier}
It is necessary in the later algorithms of this chapter to transition to and from momentum-space, so we present here algorithms for computing symbolic Fourier and inverse Fourier transforms. Our procedure for the former is based on a draft transform sketched in \cite{Peeters2018a}. We assume that every field depends on an independent spatial parameter, unless several fields lie within the operand of the same partial derivative, in which case the fields share the spatial parameter with respect to which the operand is being differentiated.\footnote{Note that the Fourier transform algorithm does not convert a field \(A\) to a marked Fourier transform \(\mathcal{F}_{N}(A)\) if \(A\) is not differentiated, for there is little purpose for the Fourier transform in that situation, and \(A\) might as well be taken to refer both to the field and its Fourier transform.} The algorithm ``Fourier'' (see Algorithm \ref{alg:fourier}) replaces every term in an inputted position-space expression \(\mathcal{E}\) with a new term, consisting of the coefficients of the old term, a momentum parameter \(i k_{N_{a}}\) for every partial derivative \(\partial_{a}\) in the old term, and a Fourier-transformed field \(\mathcal{F}_{N}(A)\) for every field \(A\) in the old term. Here, \(N\) is an integer identifying the corresponding partial derivative, and it identifies every momentum parameter with the Fourier transforms of the fields in the operand of the corresponding partial derivative.

\begin{algorithm}[ht!]
    \caption{Symbolic Fourier Transform}\label{alg:fourier}
    \begin{algorithmic}[1]
        \Function{Fourier}{\(\mathcal{E}\), \textit{fields}[\(1..n\)]}
            \For{\textit{term} in \(\mathcal{E}\)}
                \State \(N \gets 1\)

                \For{\textit{factor} in \textit{term}}

                    \If{\textit{factor} is a partial derivative}

                        \For{\textit{A} in factors of partial's operand in \textit{factor}}
                            \If{\textit{A} in \textit{fields}}
                                \State replace \textit{A} in \textit{factor} with \(\mathcal{F}_N\)(\textit{A})
                            \EndIf
                        \EndFor

                        \State \textit{ks} \(\gets\) [ ]
                        \For{index \(a\) on partial of \textit{factor}}
                            \If{\(a\) is contravariant}
                                \State \textbf{append} \(i k_N{}^a\) to \textit{ks}
                            \Else
                                \State \textbf{append} \(i k_{N_a}\) to \textit{ks}
                            \EndIf   
                        \EndFor

                        \State \textit{new\_prod} = coefficient of factor \(\cdot\) \textbf{prod}(\textit{ks}) \(\cdot\) partial's new operand
                        \State \textbf{insert} \textit{new\_prod} after \textit{factor} \Comment{Introduce \textit{new\_prod} as a factor of \textit{term}}
                        \State \textbf{erase} \textit{factor}
                        \State \(N \gets N + 1\)
                    \EndIf

                \EndFor
            \EndFor

            \State \(\mathcal{E} \gets\) \textbf{FlattenProd}(\(\mathcal{E}\)) \Comment{Remove unnecessary nesting of products}
            \State \Return \(\mathcal{E}\)

        \EndFunction
    \end{algorithmic}
\end{algorithm}

The use of this numbering enables inversion in ``InverseFourier'' (see Algorithm \ref{alg:invfourier}). If a set of Fourier-transformed fields have an identifier \(N\), then the (untransformed) fields are wrapped in a partial derivative with indices the same as those of the momentum parameters with identifier \(N\), those momentum parameters are deleted, and we multiply by \(-i\) for each such parameter.

\begin{algorithm}[ht!]
    \caption{Symbolic Inverse Fourier Transform}\label{alg:invfourier}
    \begin{algorithmic}[1]
        \Function{InverseFourier}{$\hat{\mathcal{E}}$}

            \For{\textit{term} in \(\hat{\mathcal{E}}\)}
                \State \(N \gets 1\)

                \While{term has \(k_N{}^\bullet\) or \(k_{N_\bullet}\) factors}
                    \State \(c \gets 1\)
                    \State \textit{operand} \(\gets 1\)
                    \State \textit{indices} \(\gets\) [ ]

                    \For{\textit{factor} in \textit{term}}
                        \If{factor has form \(k_N{}^\bullet\) or \(k_{N_\bullet}\)}
                            \State \(c \gets c \cdot (-i)~\cdot\) coefficient of \textit{factor}
                            \State \textbf{append} Lorentz index of \textit{factor} to \textit{indices}

                            \State \textbf{erase} \textit{factor}
                        \ElsIf{factor has form \(\mathcal{F}_N(\cdot)\)}
                            \State \(c \gets c~\cdot\) coefficient of \textit{factor}
                            \State \textit{operand} \(\gets\) \textit{operand} \(\cdot\) \(factor\)'s operand
                            \State \textbf{erase} \textit{factor}
                        \EndIf
                    \EndFor

                    \State \textbf{insert} \(c\cdot\partial_{\textnormal{\textit{indices}}}(\textnormal{\textit{operand}})\) to \textit{term}'s factors \Comment{individual indices can be co- or contravariant}

                    \State \(N \gets N + 1\)
                \EndWhile
            \EndFor

            \State \Return \(\hat{\mathcal{E}}\)
        \EndFunction
    \end{algorithmic}
\end{algorithm}

As an example, consider the position-space expression \(\partial_{a}(A B) \partial_{b c}C + \partial_{a b c}(A B C)\), where \(A, B, C\) are fields. The code below runs ``Fourier'' to convert to momentum space and canonicalizes the result to clean up coefficients.
\begin{minted}{python}
>>> ex = Ex(r'''\partial_{a}(A B) \partial_{b c}(C) + \partial_{a b c}(A B C)''')
>>> fourier(ex, [Ex('A'), Ex('B'), Ex('C')])
>>> canonicalise(ex)
\end{minted}
The result is
\begin{equation}
    -i k_{1_{a}} \mathcal{F}_{1}(A) \mathcal{F}_{1}(B) k_{2_{b}} k_{2_{c}} \mathcal{F}_{2}(C) - i k_{1_{a}} k_{1_{b}} k_{1_{c}} \mathcal{F}_{1}(A) \mathcal{F}_{1}(B) \mathcal{F}_{1}(C).
\end{equation}
One can check that the below code reverts back to the original expression.
\begin{minted}{python}
>>> inverse_fourier(ex)
>>> canonicalise(ex)
\end{minted}

\subsection{Identifying Non-Gauge-Invariant Quantities}\label{non_gauge_inv}
In the next section, we will find it necessary to eliminate gauge-invariant quantities from an expression. We present here an algorithm ``FindNonGaugeInv'' that, given an expression \(\mathcal{E}\), returns an expression, no combination of terms of which is gauge-invariant, that is gauge-equivalent\footnote{We say that an expression \(A\) is ``gauge-equivalent'' to another expression \(B\) if \(A - B\) is gauge-invariant.} to the set of non-gauge-invariant terms originally appearing in \(\mathcal{E}\) and \textit{with the same coefficients}. For example, consider the Kalb-Ramond field \(B_{a b}\) from \cite{Kalb1974} with gauge transformation\footnote{See \cite{Heisenberg2020} for a modern reference.}
\begin{equation}\label{kalb_ramond_gauge_trans}
    \delta_{G} B_{a b} = \partial_{a} \xi_{b} - \partial_{b} \xi_{a}.
\end{equation}
It is well-known and easy to verify that the quantity
\begin{equation}\label{kalb_ramond_field_strength}
    H_{a b c} = \frac12 \partial_{[a} B_{b c]},
\end{equation}
known as the Kalb-Ramond field strength, is gauge-invariant. Let
\begin{equation}\label{find_non_gauge_inv_ex:2}
    \begin{split}
        \mathcal{E} &= (u + v) \left(\gamma^{a}\right)_{\alpha \beta} \partial_{a} B_{b c} - v \left(\gamma^{a}\right)_{\alpha \beta} \partial_{b} B_{a c} + (u + v) \left(\gamma^{a}\right)_{\alpha \beta} \partial_{c} B_{a b} \\
        &= u \left(\gamma^{a}\right)_{\alpha \beta} \partial_{a} B_{b c} + u \left(\gamma^{a}\right)_{\alpha \beta} \partial_{c} B_{a b} + v H_{a b c}.
    \end{split}
\end{equation}
Then an acceptable answer is\footnote{In general, there may be several gauge-equivalent expressions satisfying the given desiderata. It will suffice for our purposes to output any one of these.}
\begin{equation}\label{find_non_gauge_inv_ex_sol:2}
    \mathcal{E}' = u \left(\gamma^{a}\right)_{\alpha \beta} \partial_{b} B_{a c}.
\end{equation}
Indeed, this is gauge-equivalent to the non-gauge-invariant part of \(\mathcal{E}\) in \eqref{find_non_gauge_inv_ex:2} with the same coefficient \(-u\),\footnote{In general, by ``having the same coefficient,'' we mean in the sense that the coefficients are alike, so that the difference between \(\mathcal{E}'\) and the non-gauge-invariant part of \(\mathcal{E}\) is the multiple of a gauge-invariant expressions, in this case \(H_{a b c}\).} for
\begin{equation}
    u \left(\gamma^{a}\right)_{\alpha \beta} \partial_{b} B_{a c} - \left[u \left(\gamma^{a}\right)_{\alpha \beta} \partial_{a} B_{b c} + u \left(\gamma^{a}\right)_{\alpha \beta} \partial_{c} B_{a b}\right] = -u H_{a b c}.
\end{equation}

How does one go about computing \(\mathcal{E}'\)? We can eliminate gauge-invariant combinations of terms by merely replacing every instance of our desired field with its gauge transformation.\footnote{``FindNonGaugeInv'' also takes as input a full list of fields in the expression, in addition to the desired field. This is merely in order to apply the Fourier transform correctly.} It is convenient to move to momentum space to obviate subtleties concerning differentiation. (This is really more of a technical programmatic matter than a theoretical one.) In the Kalb-Ramond example, one finds
\begin{equation}\label{fourier_gauge_E}
    \mathcal{F}(\delta_{G} \mathcal{E}) = -u \left(\gamma^{a}\right)_{\alpha \beta} k_{a} k_{b} \mathcal{F}(\xi_{c}) + u \left(\gamma^{a}\right)_{\alpha \beta} k_{b} k_{c} \mathcal{F}(\xi_{a}).
\end{equation} Programmatically reversing the substitution of the gauge transformation is a bit subtle and comprises much of the algorithm. Let \(\mathcal{B}\) be the set of expressions obtained by taking the momentum-space equivalent of \(-i\) times the partial derivative of the desired field and giving it each permutation of Lorentz indices of each term in \(\mathcal{E}\), and let \(\hat{\mathcal{B}}\) be the analogous set of expressions obtained from negative the partial derivative of the gauge transformation of the desired field. In the Kalb-Ramond example,
\begin{equation}
    \begin{split}
        \mathcal{B} &= \{k_{\sigma_1} \mathcal{F}(B_{\sigma_2 \sigma_3}) ~|~ \sigma \in \mathfrak{S}(\{a,b,c\})\} \\
        \hat{\mathcal{B}} &= \{k_{\sigma_1} k_{\sigma_2} \mathcal{F}(\xi_{\sigma_3}) - k_{\sigma_1} k_{\sigma_3} \mathcal{F}(\xi_{\sigma_2}) ~|~ \sigma \in \mathfrak{S}(\{a, b, c\})\},
    \end{split}
\end{equation}
where \(\mathfrak{S}(\cdot)\) is the set of permutations. We may enumerate \(\mathcal{B}, \hat{\mathcal{B}}\) as \(\mathcal{B} = \{\mathcal{G}'_j\}_{j \in J}\), \(\hat{\mathcal{B}} = \{\mathcal{H}'_j\}_{j \in J}\). By construction, \(\mathcal{F}(\delta_G \mathcal{E})\) is in the span of \(\hat{\mathcal{B}}\), i.e., there are coefficients \(c_j\), \(j \in J\), in the Clifford algebra such that
\begin{equation}\label{zero_ex}
    \mathcal{F}(\delta_G \mathcal{E}) + \sum_{j \in J} c_j \mathcal{H}'_j = 0.
\end{equation}
``FindNonGaugeInv'' begins (see Algorithm \ref{alg:fngi1}) by computing \(\mathcal{B},\hat{\mathcal{B}}\) and constructing the left-hand side of \eqref{zero_ex} with appropriate indices on the coefficients.

\begin{algorithm}[ht!]
    \caption{Find Non Gauge Invariant Part of \(\mathcal{E}\) Part 1}\label{alg:fngi1}
    \begin{algorithmic}[1]
        \Function{FindNonGaugeInv}{\(\mathcal{E}\), \(A\), \textit{fields}[\(1..n\)], $\delta_G \mathcal{A}$}

            \State \(\mathcal{E}\gets\) \textbf{substitute}(\(\mathcal{E}\), \(A \rightarrow \delta_G \mathcal{A}\))
            \State \(\mathcal{E}\gets\) \textbf{Evaluate}(\(\mathcal{E}\), \textit{to\_perform\_subs} = False)
            \State \(\zeta \gets\) gauge parameter in \(\delta_G A\) \Comment{e.g., \(\xi\) in \(\delta_G B_{a b}\)}
            \State \(\mathcal{E}\gets\) \textbf{Fourier}(\(\mathcal{E}\), \(fields + [\zeta]\))

            \State \textit{Inds}\({}_{kf} \gets\) [ ]
            \For{\textit{term} in \(\mathcal{E}\)}
                \State \textit{LInds}\({}_t \gets\) lexicographically sorted list of Lorentz indices in \textit{term}'s ``k1''s and \(\zeta\)s
                \State \textit{SInds}\({}_t \gets\) spinor index on \(\zeta\)
                \State \textbf{append} (\textit{LInds}\({}_t\), \textit{SInds}\({}_t\)) to \textit{Inds}\({}_{kf}\)
            \EndFor

            \State \textbf{remove} duplicates in \textit{Inds}\({}_{kf}\)

            \State \(\mathcal{G} \gets -i \partial_{a'} A\) \Comment{e.g., \(-i \partial_{a'} B_{a b}\)}
            \State \(\mathcal{H} \gets -\partial_{a'}(\delta_{G} A)\) \Comment{e.g., \(-\partial_{a'} (\partial_{a} \xi_{b} - \partial_{b} \xi_{a})\)}
            \State \textbf{distribute}(\(\mathcal{E}\))

            \State \(\mathcal{G} \gets\) \textbf{Fourier}(\(\mathcal{G}\), [\(A\)]) \Comment{i.e., \(k_{1_{a'}} \mathcal{F}_1 (A)\), e.g., \(k_{1_{a'}} \mathcal{F}_1(B_{a b})\)}
            \State \(\mathcal{H} \gets\) \textbf{Fourier}(\(\mathcal{H}\), [\(\zeta\)]) \Comment{e.g., \(k_{1_{a'}} k_{1_{a}} \mathcal{F}_1(\xi_{b}) - k_{1_{a'}} k_{1_{b}} \mathcal{F}_1(\xi_{a})\)}

            \State \(\mathcal{E'} \gets 0\)
            \State \textit{zero\_ex} \(\gets \mathcal{E}\)
            \State \(C \gets\) [ ]

            \State \(j \gets 0\)
            \For{\textit{LInds}\({}_t\), \textit{SInds}\({}_t\) in \textit{\textit{Inds}}\({}_{kf}\)}
                \State \textit{c\_linds} \(\gets\) Lorentz indices in \(\mathcal{E}\) that are not in \textit{LInds}\({}_t\)
                \State \textit{c\_sinds} \(\gets\) spinor indices in \(\mathcal{E}\) that are not in \textit{SInds}\({}_t\)

                \For{\(\sigma \in \mathfrak{S}(\)\textit{LInds}\({}_t)\)}
                    \State \(\mathcal{G}'_{j} \gets \mathcal{G}\) with its Lorentz indices replaced by \(\sigma\) and its spinor index replaced by \textit{SInds}\({}_t\)
                    \State \(\mathcal{H}'_{j} \gets \mathcal{H}\) with its Lorentz indices replaced by \(\sigma\) and its spinor index replaced by \textit{SInds}\({}_t\)

                    \State \(c_j \gets (c_{j_{\textit{LInds}_{t}}})_{\textit{SInds}_{t}}\)
                    \State \(\mathcal{E}' \gets \mathcal{E}'+c_j \mathcal{G}'_{j}\)
                    \State \textit{zero\_ex} \(\gets\) \textit{zero\_ex} \(+c_j \mathcal{H}'_{j}\) \Comment{Introduce variable linear combination of the \(\mathcal{H}'_{j}\)s}
                    \State \textbf{append} \(c_j\) to \(C\) \Comment{Store the variables for which to solve}
                    
                    \State \(j \gets j + 1\)
                \EndFor
            \EndFor
            \State continued...
            \algstore{bkbreak}
    \end{algorithmic}
\end{algorithm}

It remains to solve for the coefficients. We set up \eqref{zero_ex} as a linear system \(A\mathbf{x} = \mathbf{b}\). For example, a term\footnote{Note that we must factor out every instance of the Fourier transform of the gauge parameter and every corresponding momentum parameter to obtain terms of this form.}
\begin{equation}
    \left(5c_0 + c_2 - \left(\gamma^{a}\right)_{\alpha \beta}\right) k_{a} k_{b} \mathcal{F}(\xi)
\end{equation}
contributes a row \((5,0,1,0,\ldots,0)\) to \(A\) and an element \(\left(\gamma^{a}\right)_{\alpha \beta}\) to \(\mathbf{b}\). Notice that although \(\mathbf{b}\) has entries that are general Clifford algebra elements, \(A\), which consists of scalar multipliers of the coefficients \(c_j\), has entries in \(\mathbb{Q}\). Hence, we may proceed largely as in ordinary linear algebra: if \([A ~|~ I]\) (\(I\) the appropriate identity matrix) row-reduces to \([R ~|~ P]\), then we may find a solution for the coefficients by the standard pivot analysis of \(R\mathbf{x} = P\mathbf{b}\). It is then easy to verify that our desired \(\mathcal{E}'\) satisfies
\begin{equation}
    -i \mathcal{F}(\mathcal{E}') = \sum_{j \in J} c_j \mathcal{G}'_j.
\end{equation}
``FindNonGaugeInv'' computes \(\mathcal{E}'\) via this linear algebra and an inverse Fourier transform (see Algorithm \ref{alg:fngi2}).

\begin{algorithm}[ht!]
    \caption{Find Non Gauge Invariant Part of \(\mathcal{E}\) Part 2}\label{alg:fngi2}
    \begin{algorithmic}[1]
            \algrestore{bkbreak}
            \State \textit{zero\_ex} \(\gets\) \textbf{distribute}(\textit{zero\_ex})
            \State \textit{zero\_ex} \(\gets\) \textbf{canonicalise}(\textit{zero\_ex})
            \State \textit{zero\_ex} \(\gets\) \textbf{factor\_out}(\textit{zero\_ex}, all ``\(k_1\)''s \& ``\(\mathcal{F}_1\)''s, \textit{right} = True)

            \State \(A \gets\) [ ]
            \State \(\mathbf{b} \gets\) [ ]
            \For{\textit{term} in \textit{zero\_ex}}
                \For{\textit{factor} in \textit{term}}
                    \If{\textit{factor} is a sum or in \(C\)}
                        \State \(\textbf{a} \gets [0, \cdots, 0]\)
                        \For{\textit{sub\_term} in terms of \textit{factor}}
                            \If{\textit{sub\_term} \(\in C\)}
                                \State \(\mathbf{a}\)[index on \textit{sub\_term}] = multiplier of \textit{sub\_term} \Comment{e.g., \(\mathbf{a}[0] = 5\) if \textit{sub\_term} \(= 5c_0\)}
                                \State \textbf{erase} \textit{sub\_term} from \textit{factor} \Comment{Remove the discovered \(c_{j}\)}
                            \EndIf
                        \EndFor
                        \State \textbf{append} \(\mathbf{a}\) to \(A\) \Comment{e.g., \(\mathbf{a} = (5, 0, 1, 0, \dotsc, 0)\) if \(5c_0, c_2\) are among the subterms}
                        \State \textbf{append} [\(-\)\textit{factor}] to \(\mathbf{b}\) \Comment{e.g., \((\gamma^{a})_{\alpha \beta}\) if \(factor = -(\gamma^{a})_{\alpha \beta}\) after removing the \(c_{j}\)s}
                    \EndIf
                \EndFor
            \EndFor

            \State \([R \mid P] \gets\) RREF of \([A \mid I]\)
            \State \(P\mathbf{b} \gets P * \mathbf{b}\) \Comment{\(*\) denotes matrix multiplication}

            \State \textit{subs} \(\gets\) [ ]
            \For{\textit{row} in \(R\)}
                \State \textbf{append} ``C[index of pivot in \textit{row}] \(\rightarrow P\mathbf{b}\)(\textit{row})''
            \EndFor
            \State \(\mathcal{E}'\gets\) \textbf{substitute}(\(\mathcal{E}'\), \textit{subs}) \Comment{Replace the \(c_{j}\)s in \(\mathcal{E}'\) with their values}
            \State \textbf{set} all remaining \(c_j\)s in \(\mathcal{E}'\) to \(0\)
            \State \(\mathcal{E}'\gets i\mathcal{E}'\) \Comment{The correction for the sign and one fewer partial derivative}
            \State \(\mathcal{E}'\gets\) \textbf{distribute}(\(\mathcal{E}'\))
            \State \(\mathcal{E}'\gets\) \textbf{InverseFourier}(\(\mathcal{E}'\))
            \State \(\mathcal{E}'\gets\) \textbf{Evaluate}(\(\mathcal{E}'\), \textit{to\_perform\_subs} = False)

            \State \Return \(\mathcal{E}'\)

        \EndFunction
    \end{algorithmic}
\end{algorithm}

As an example, the below code finds the non-gauge-invariant part of \eqref{find_non_gauge_inv_ex:2}.
\begin{minted}{python}
>>> field = Ex(r'B_{a b}')
>>> gauge_trans = Ex(r'\partial_{a}(\zeta_{b}) - \partial_{b}(\zeta_{a})')

>>> ex = Ex(r'(u + v) (\Gamma^{a})_{\alpha \beta} \partial_{a}(B_{b c}) - v (\Gamma^{a})_{\alpha \beta} \partial_{b}(B_{a c}) + (u + v) (\Gamma^{a})_{\alpha \beta} \partial_{c}(B_{a b})')
>>> find_non_gauge_inv(ex, field, [field], gauge_trans)
>>> canonicalise(ex)
\end{minted}
The result is precisely \eqref{find_non_gauge_inv_ex_sol:2}.

\subsection{Multiplet Solver 1: Brute-Force Gamma-Matrix Splitting}\label{multiplet_solver1}
We can now begin treating the solution of multiplets. A large part of this is obtaining constraints from closure of the algebra, for which we provide a first algorithm in this section. (An alternative algorithm is provided in \S\ref{multiplet_solver2}.) Recall that by ``closure of the algebra,'' we mean the requirement that the anticommutator of supercovariant derivatives\footnote{We occasionally call the anticommutator of supercovariant derivatives applied to a field the ``closure" on that field.} be equal to a special translation up to ignorable terms, viz.,\footnote{The coefficient of the translation varies with convention and dimension. In 4D multiplets, it is typical to choose \(c = 2\), while in 11D, we take \(c = 1\).}
\begin{equation}\label{closure_condition}
    \{{\rm D}_{\alpha}, {\rm D}_{\beta}\} = c \cdot i \left(\gamma^{a}\right)_{\alpha \beta} \partial_{a} + \textnormal{other terms}.
\end{equation}
If the field to which the supercovariant-derivative anticommutator is applied is a gauge field, ``other terms'' will generally include gauge terms, which can be identified by the presence of a free index on the partial derivative in the term. As summarized in \cite{Gates2002}, if the multiplet is off-shell, then these are the only additional terms (as auxiliary fields do away with the equations of motion), while if the multiplet is on-shell, then the extra terms may also include terms arising from the equations of motion of the field.\footnote{It is perhaps more sound to frame this in the opposite way, namely, that if non-closure terms exist in the anticommutator of supersymmetry transformations, then the requirement that they identically vanish imposes equations of motion on the field, rendering the theory on-shell. These equation-of-motion terms are occasionally referred to as ``central charges,'' as in \cite{Gates2020a}, reflecting in spirit the additional possible transformations proven (originally in 4D) permissible in \cite{Haag1975}. However, often in the literature, the term ``central charge'' is reserved for non-closure terms involving fields other than the field on which the anticommutator of supercovariant derivatives is being calculated. Hence, to eliminate any ambiguity, we simply refer to these non-closure terms as ``equation-of-motion terms.''} In the case of fermions, which we consider in this paper, equation-of-motion terms are precisely terms which vanish under the equations of motion, which we call ``off-shell equation-of-motion terms.''\footnote{In \cite{Gates2020a}, such terms are referred to as ``off-shell central charges.''}

In particular, we assume in the tools in this paper that any equation-of-motion terms are on the closure on Lorentz-symmetric fermions, which already offers flexibility much broader than is generally needed, since fermions of interest in \(\mathcal{N} = 1\) supersymmetry are generally spin-\(1/2\) particles and spin-\(3/2\) particles, which have zero and one vector indices, respectively, and are thus vacuously Lorentz-symmetric. The study of the equations of motion for (symmetric) fermions of arbitrary spin (i.e., arbitrary numbers of vector indices) can be traced to \cite{Fang1978}, but we rely on the more recent formulation in \cite{Miyamoto2011}. Let the rank-\(s\) tensor-spinor \(\Psi_{a_1 \cdots a_{s}}{}^{\gamma}\) be a fermion of spin \(s + 1/2\), and let \(U_{i} = \{a_1, \dotsc, a_{i-1}, a_{i+1}, \dotsc, a_{s}\}\), so that \(\Psi_{b U_{i}}{}^{\gamma} = \Psi_{b a_1 \cdots a_{i-1} a_{i+1} \cdots a_{s}}{}^{\gamma}\). The gauge transformation of the fermion is ((3.3.1) in \cite{Miyamoto2011})
\begin{equation}\label{sym_fermion_gauge}
    \delta_{G} \Psi_{a_1 \cdots a_{s}}{}^{\gamma} = \sum_{i=1}^{s} \partial_{a_{i}} \kappa_{U_{i}}{}^{\gamma},
\end{equation}
where the gauge parameter \(\kappa_{U_{i}}{}^{\gamma}\) is a rank-(\(s-1\)) tensor-spinor assumed to satisfy ((3.3.2) in \cite{Miyamoto2011})
\begin{equation}\label{gauge_param_trace}
    \left(\gamma^{b}\right)_{\eta}{}^{\gamma} \kappa_{b b_3 \cdots b_{s}}{}^{\eta} = 0.
\end{equation}
The equation of motion of the spin-(\(s + 1/2\)) fermion is ((3.3.4) in \cite{Miyamoto2011})
\begin{equation}\label{sym_fermion_eqom}
    \left(\gamma^{b}\right)_{\eta}{}^{\gamma} \partial_{b} \Psi_{a_1 \cdots a_{s}}{}^{\eta} - \sum_{i = 1}^{s} \left(\gamma^{b}\right)_{\eta}{}^{\gamma} \partial_{a_{i}} \Psi_{b U_{i}}{}^{\eta} = 0.
\end{equation}
Notice that the \(s = 0\) and \(s = 1\) cases are precisely the Dirac and Rarita-Schwinger equations, respectively. \eqref{sym_fermion_eqom} is, of course, gauge-invariant, and proposition \ref{arb_spin_ferm_prop:1} verifies that every gauge-invariant combination of terms of the form
\begin{equation}\label{lorentz_proper_form:1}
    \left(\gamma^{b}\right)_{\eta}{}^{\gamma} \partial_{a'} \Psi_{E}{}^{\eta}, \quad b \in \{a'\} \cup E,
\end{equation}
is proportional to the left-hand side of \eqref{sym_fermion_eqom}. Also, proposition \ref{arb_spin_ferm_prop:2} and corollary \ref{arb_spin_ferm_cor} show that
\begin{equation}\label{lorentz_proper_form:2}
    \left(\gamma^{a' E}\right)_{\eta}{}^{\gamma} \partial_{a'} \Psi_{E}{}^{\eta}
\end{equation}
can be considered an equation-of-motion term. While it could be reduced to a multiple of \eqref{sym_fermion_eqom}, its empirically frequent occurrence when considering spin-\(3/2\) fermions merits the efficiency gain of skipping the reduction and considering it separately.

The general idea of the multiplet solver is to use the foregoing considerations to sift through the terms in the Fierz expansion of the anticommutator of supercovariant derivatives applied to each field;\footnote{The Fierz expansion is needed to ensure that the spinor indices of the supercovariant derivatives are attached to the same gamma matrices, as will be discussed in a moment.} recognize the desired translation, gauge terms, (off-shell) equation-of-motion terms, and other ``undesired terms''; record the equation setting the coefficient of the desired translation equal to \(c \cdot i\) (see \eqref{closure_condition}); and record the equations arising from setting the coefficients of the undesired terms equal to zero. Independent constraints are then culled from the equations produced while sifting through the terms.\footnote{Notice that the sole purpose of recognizing gauge and equation-of-motion terms is to allow the coefficients of these terms to be anything, thereby preventing the addition of unwarranted equations to the ultimate list of constraints.} The task of sifting through the terms is handled, in part, by an algorithm ``FilterTerms'' (see Algorithm \ref{alg:filterterms}). The algorithm takes as input the expression \(\mathcal{E}\) for the Fierz-expanded closure on a field, the field \(A\) in question (which may be bosonic or fermionic), the indices ``Inds'' of the supercovariant derivatives, and an optional boolean parameter, by default false, dictating whether to look for potential off-shell equation-of-motion terms. The algorithm loops through an evolving array of terms, initially those in \(\mathcal{E}\), and for each term, proceeds according to five possible cases. 1) If the term is of the form of the desired translation (up to coefficients and indices), then it is classified as a desired term. 2) If the procedure is permitted to identify potential equation-of-motion terms and the term is of the form \((\gamma^{D})_{Inds} (\gamma^{a' E F})_{\alpha'}{}^{\beta'} \partial_{a'}(A_{E})_{\beta'}\) for some \(\alpha'\),\footnote{N.B. Cases (2) and (3) assume that \(A\) is a fermion. \(\alpha'\) is the free spinor index of the fermion before the supercovariant derivatives are applied, but this fact on its own has no utility in ``FilterTerms,'' with the fermion's free spinor index not even inputted. In the algorithm for ``SUSYSolve,'' whenever \(A\) is a boson, ``identify\_lorentz\_proper'' will be set false so that cases (2) and (3) are not considered.} then the second gamma matrix can be decomposed\footnote{The procedure for executing this decomposition is actually quite intricate and technical and is suppressed here. Fundamentally, the procedure relies on the Cadabra2 \cite{Peeters2018b} function \mintinline{python}{split_gamma()} (which is incidentally similar to the FieldsX \cite{Frob2021} function \mintinline{python}{SplitGammaMatrix()}), which splits off a 1-gamma matrix from a multi-indexed gamma matrix. The subtlety is in splitting gamma matrices, and then joining them, in a consistent way to yield \(\gamma^{F} \gamma^{a' E}\) given the desired indices \(\{a'\} \cup E\) on one factor.\label{split_gamma_footnote}} to give \(\gamma^{F} \gamma^{a' E}\) and some remainder, yielding a term that is a multiple of \eqref{lorentz_proper_form:2} and is classified as ``Lorentz proper'' (in the sense of having the appropriate Lorentz-index structure), and a series of residual terms that are merely thrown back into the loop for further evaluation. 3) Same as (2) but we have the form \((\gamma^{D})_{Inds} (\gamma^{E})_{\alpha'}{}^{\beta'} \partial_{a'}(A_{F})_{\beta'}\), where \(E \cap (\{a'\} \cup F) \neq 0\), so we sort this intersection lexicographically, draw the last index \(g\),\footnote{The only reason to choose \(g\) in this way is to ensure that it is canonical} decompose the second gamma matrix as \(\gamma^{E \setminus \{g\}} \gamma^{g}\) and some remainder, and proceed as in (2). Here, the ``Lorentz proper'' term is instead a multiple of \eqref{lorentz_proper_form:1}. 4) If conditions (1)-(3) are not met but the term's partial derivative carries a free index, then the term is classified as a gauge term. 5) If none of (1)-(4) are met, then the term is classified as an undesired term. The loop eventually terminates because gamma matrices may be split only finitely many times, so cases (2) and (3) may occur only finitely many times. The isolation of (potential) equation-of-motion terms in cases (2) and (3) by means of explicit gamma-matrix decompositions is what we dub ``brute-force gamma-matrix splitting,'' in contrast to the method in \S\ref{multiplet_solver2}.

\begin{algorithm}[ht!]
    \caption{Filter Closure/Non-Closure Terms}\label{alg:filterterms}
    \begin{algorithmic}[1]
        \Function{FilterTerms}{\(\mathcal{E}\), \(A\), \textit{Inds}[\(1..2\)], \textit{identify\_lorentz\_proper} = False}
            \State \textit{desired\_terms} \(\gets\) [ ]
            \State \textit{gauge\_terms} \(\gets\) [ ]
            \State \textit{undesired\_terms} \(\gets\) [ ]
            \State \textit{lorentz\_proper\_terms} \(\gets\) [ ]

            \State \(T \gets\) terms of \(\mathcal{E}\)

            \State \(n = 0\)
            \While{\(n < |T|\)}
                \State \(t \gets T[n]\)

                \If{\(t\) has form \((\gamma^{a'})_{\textit{Inds}} \partial_{a'} A\)}
                    \State \textbf{append} \(t\) to \textit{desired\_terms}

                \ElsIf{\textit{identify\_lorentz\_proper} \& \(t\) has form \((\gamma^D)_{\textit{Inds}} (\gamma^{a' E F})_{\bullet}{}^{\beta'} \partial_{a'} (A_E)_{\beta'}\)}
                    \State \(\gamma^{a' E F} \rightarrow \gamma^{F}\gamma^{a' E} + R\)
                    \State \(\mathcal{R} \gets (\gamma^D)_{\textit{Inds}} (R)_{\bullet}{}^{\beta'} \partial_{a'} (A_E)_{\beta'}\)
                    \State  \(\mathcal{R} \gets\) \textbf{Evaluate}(\(\mathcal{R}\), \textit{to\_perform\_subs} = False)
                    \State \textbf{append} \((\gamma^D)_{\textit{Inds}} (\gamma^{F}\gamma^{a' E})_{\bullet}{}^{\beta'} \partial_{a'} (A_E)_{\beta'}\) to \textit{lorentz\_proper\_terms}
                    \State \textbf{append} terms of \(\mathcal{R}\) to \(T\)

                \ElsIf{\textit{identify\_lorentz\_proper} and \(t\) has form \((\gamma^D)_{\textit{Inds}} (\gamma^{E})_{\bullet}{}^{\beta'} \partial_{a'} (A_F)_{\beta'}\) \& \(E \cap (\{a'\} \cup F) \ne \emptyset\)}

                    \State \(G \gets \{\textnormal{last index of the sorted array } E \cap (\{a'\} \cup F)\}\) \Comment{\(G = \{g\}\), with \(g\) from the discussion above}
                    \State \(\gamma^{E} \rightarrow \gamma^{E \setminus G}\gamma^{G} + R\)
                    \State \(\mathcal{R} \gets (\gamma^D)_{\textit{Inds}} (R)_{\bullet}{}^{\beta'} \partial_{a'} (A_F)_{\beta'}\)
                    \State  \(\mathcal{R} \gets\) \textbf{Evaluate}(\(\mathcal{R}\), \textit{to\_perform\_subs} = False)
                    \State \textbf{append} \((\gamma^D)_{\textit{Inds}} (\gamma^{E \setminus G}\gamma^{G})_{\bullet}{}^{\beta'} \partial_{a'} (A_F)_{\beta'}\) to \textit{lorentz\_proper\_terms}
                    \State \textbf{append} terms of \(\mathcal{R}\) to \(T\)

                \ElsIf{the \(\partial\) in \(t\) has a free index}
                    \State \textbf{append} \(t\) to \textit{gauge\_terms}

                \Else:
                    \State \textbf{append} \(t\) to \textit{undesired\_terms}

                \EndIf
                \State \(n \gets n + 1\)
            \EndWhile

            \State \Return \{``desired\_terms'': \textit{desired\_terms}, ``gauge\_terms'': \textit{gauge\_terms}, ``undesired\_terms'': \textit{undesired\_ terms}, ``lorentz\_proper\_terms'': \textit{lorentz\_proper\_terms}\}

        \EndFunction
    \end{algorithmic}
\end{algorithm}

We can now treat the (first) multiplet solver algorithm ``SUSYSolve.'' The algorithm takes as input the array ``bosons'' of bosonic fields in the multiplet, the array ``fermions'' of fermionic fields in the multiplet, the corresponding array ``gauge\_transs'' of gauge transformations for the fermions, the supersymmetry transformation rules ``susy'' (e.g., \eqref{11d_sugra}) for whose coefficients it is desired to solve, a basis \(\Gamma^{A}\) for the Clifford algebra (e.g., \eqref{11d_basis_redux}),\footnote{For consistency of notation, we always write \(\Gamma^{A}\) for the basis of the Clifford algebra. Note that this basis is always unrelated to any field that might be called \(A\).} an array ``consts'' of the unknown coefficients, a length-2 array ``Inds'' of indices for the supercovariant derivatives, and optionally the desired value of \(c\) in \eqref{closure_condition}. (By default, \(c = 1\).) The algorithm is largely a loop through the bosonic and fermionic fields. For a particular field field \(A\), the closure on \(A\) is evaluated via a subprocedure ``SusyExpand''\footnote{``SUSYExpand'' is a function which given an expression and supersymmetry transformation rules expands every instance in the expression of the supercovariant derivative applied to a field. Since the algorithm behind it is of little conceptual interest, and most of its contents consider rather technical complexities in the passing of supercovariant derivatives past partial derivatives in Cadabra, detailed discussion of this algorithm is suppressed.}, it is Fierz-transformed via ``FierzExpand2Index" to place the supercovariant derivatives' indices on the same factors for comparison with \eqref{closure_condition}, and its terms are filtered via ``FilterTerms."\footnote{Notice in Algorithm \ref{alg:susysolve1} that ``FilterTerms'' is set to look for potential equation-of-motion terms if and only if \(A\) is a fermion, i.e., \(A\) has a spinor index, or equivalently, it \textit{is} an indexbracket. \label{filterterms_fermion_expl}} This first part of the procedure (together with unpacking of the filtered terms)\footnote{Notice that in the unpacking of ``'Lorentz proper'' terms, multiplication of gamma matrices is not allowed, in order to prevent the reversal of the gamma-matrix decompositions in cases (2) and (3) of ``FilterTerms.''} is shown in pseudo-code as Algorithm \ref{alg:susysolve1}.

\begin{algorithm}[ht!]
    \caption{Find Coefficients That Enforce Multiplet Closure Part 1}\label{alg:susysolve1}
    \begin{algorithmic}[1]
        \Function{SUSYSolve}{\textit{bosons}[\(1..n\)], \textit{fermions}[\(1..m\)], \textit{gauge\_transs}[\(1..m\)], \textit{susy}, \(\Gamma^A\), \textit{consts}[\(1..p\)], \textit{Inds}[\(1..2\)], \(c\) = 1}

            \State \textit{susy\_dict} \(\gets\) empty hash table
            \State \(S \gets\) [ ]

            \For{\(A \in \) \textit{fermions} \(\cup\) \textit{bosons}}
                \State \(\delta_G A \gets \) \textit{gauge\_transs}\((A)\) \textbf{if} \(A \in \) \textit{fermions}
                \State \(\mathcal{E} \gets {\rm D}_{\textit{Ind}[0]}{\rm D}_{\textit{Ind}[1]}A + {\rm D}_{\textit{Ind}[1]}{\rm D}_{\textit{Ind}[0]}A\)
                \State \(\mathcal{E} \gets\) \textbf{SUSYExpand}(\(\mathcal{E}\), \textit{susy})
                \State \(\mathcal{E} \gets\) \textbf{Evaluate}(\(\mathcal{E}\), \textit{to\_perform\_subs} = False)
                \State \(\mathcal{E} \gets\) \textbf{FierzExpand2Index}(\(\mathcal{E}\), \(\Gamma^A\), \textit{Inds}, \textit{to\_perform\_subs} = False)

                \State \(T_0 \gets\) \textbf{FilterTerms}(\(\mathcal{E}\), \(A\), \textit{Inds}, \textit{identify\_lorentz\_proper} = \(A \in fermions\))

                \State \textit{desired\_terms} \(\gets T_0\)[``desired\_terms'']
                \State \textit{gauge\_terms} \(\gets T_0\)[``gauge\_terms'']
                \State \textit{undesired\_terms} \(\gets T_0\)[``undesired\_terms'']
                \State \textit{lorentz\_proper\_terms} \(\gets T_0\)[``lorentz\_proper\_terms'']

                \State \textit{desired\_terms} \(\gets\) \textbf{Evaluate}(\textbf{sum}(\textit{desired\_terms}))
                \State \textit{pre\_gauge\_terms} \(\gets\) \textbf{Evaluate}(\textbf{sum}(\textit{gauge\_terms}))
                \State \textit{pre\_undesired\_terms} \(\gets\) \textbf{Evaluate}(\textbf{sum}(\textit{undesired\_terms}))
                \State \textit{lorentz\_proper\_exp} \(\gets\) \textbf{Evaluate}(\textbf{sum}(\textit{lorentz\_proper\_terms}), \textit{to\_perform\_subs} = False, \textit{to\_join \_gamma} = False)
                
                \State \textit{pre\_desired\_terms} \(\gets\) \textbf{factor\_in}(\textit{pre\_desired\_terms}, \textit{consts})
                \State \textit{pre\_gauge\_terms} \(\gets\) \textbf{factor\_in}(\textit{pre\_gauge\_terms}, \textit{consts})
                \State \textit{pre\_undesired\_terms} \(\gets\) \textbf{factor\_in}(\textit{pre\_undesired\_terms}, \textit{consts})
                \State continued...
                \algstore{bkbreak}
    \end{algorithmic}
\end{algorithm}

If \(A\) is a fermion, so that we have ``Lorentz proper'' terms (see footnote \ref{filterterms_fermion_expl}), more processing is necessary. Recall that these terms are multiples of the forms \eqref{lorentz_proper_form:1} and \eqref{lorentz_proper_form:2}; gauge-invariant combinations of terms of the former type comprise equation-of-motion terms, while terms of the latter type are equation-of-motion terms and are already gauge-invariant. Hence, ``FindNonGaugeInv" eliminates the equation-of-motion terms, and the remaining ``Lorentz proper" terms may be refiltered (without searching for equation-of-motion terms, so no ``Lorentz proper'' terms remain).

The classified terms of the field are adjoined to a hash table with the field as the key. Further, the terms of each type are added together and factored in terms of ``consts,'' and constraints are drawn from coefficients of the desired and undesired terms in the manner noted earlier (e.g., if \(c = 1\), our sole desired term is \((u v - x z) (\gamma^{a})_{\alpha \beta} \partial_{a} A_{b}\), and our sole undesired term is \(5y z (\gamma^{b c})_{\alpha \beta} (\gamma_{a})_{\eta}{}^{\gamma} \partial_{b} \Psi_{c}{}^{\eta}\), then we have constraints \(u v - x z = i, 5y z = 0\)). After compiling the constraints from all fields, a subprocedure ``DistillConstrs''\footnote{``DistillConstrs'' is too technical to be described in detail here, but to give a word on its functioning, the algorithm is a linear solver which solves for the values of the pairwise products of elements in ``consts.'' Note that this procedure places a tacit requirement on the inputted multiplet that each term in each part of the supersymmetry transformation rule has exactly one unknown variable in its coefficient, so that every term in the closure includes the product of exactly two elements from ``consts.'' This, of course, is not at all restrictive, for any additional unknown variable in the coefficient of a term in the supersymmetry transformation rules would necessarily be redundant.} is applied to this system to reduce to independent constraints, which are outputted together with the earlier-mentioned hash table. This second part of the procedure is shown in pseudo-code as Algorithm \ref{alg:susysolve2}.

\begin{algorithm}[ht!]
    \caption{Find Coefficients That Enforce Multiplet Closure Part 2}\label{alg:susysolve2}
    \begin{algorithmic}[1]
                \algrestore{bkbreak}
                \If{\(A \in fermions\)}
                    \State \textit{added\_terms} \(\gets\) \textbf{FindNonGaugeInv}(copy of \textit{lorentz\_proper\_exp}, \(A\), \textit{bosons} \(\cup\) \textit{fermions}, \(\delta_G A\)) \textbf{if} \(\delta_G A \ne 0\) \textbf{else} 0
                    \State \(T_1 \gets\) \textbf{FilterTerms}(\textit{added\_terms}, \textit{field}, \textit{Inds}, \textit{identify\_lorentz\_proper} = False)

                    \State \textit{gauge\_terms} \(\gets\) \textit{gauge\_terms} + \(T_1\)[``gauge\_terms'']
                    \State \textit{undesired\_terms} \(\gets\) \textit{undesired\_terms} + \(T_1\)[``undesired\_terms'']
    
                    \State \textit{gauge\_terms} \(\gets\) \textbf{Evaluate}(\textbf{sum}(gauge\_terms))
                    \State \textit{undesired\_terms} \(\gets\) \textbf{Evaluate}(\textbf{sum}(undesired\_terms))
    
                    \State \textit{gauge\_terms} \(\gets\) \textbf{factor\_in}(\textit{gauge\_terms}, \textit{consts})
                    \State \textit{undesired\_terms} \(\gets\) \textbf{factor\_in}(\textit{undesired\_terms}, \textit{consts})
                \Else
                    \State \textit{gauge\_terms} \(\gets\) copy of \textit{pre\_gauge\_terms}
                    \State \textit{undesired\_terms}  \(\gets\) copy of \textit{pre\_undesired\_terms}
                \EndIf
                
                \State \textit{lorentz\_proper\_exp} \(\gets\) \textbf{Evaluate}(\textit{lorentz\_proper\_exp}, \textit{to\_join\_gamma} = False)
                \State \textit{lorentz\_proper\_exp} \(\gets\) \textbf{factor\_in}(\textit{lorentz\_proper\_exp}, \textit{consts})

                \State \textbf{append} ``coefficient of \textit{desired\_terms} in terms of \textit{consts} \(= c \cdot i\)'' to \(S\)

                \For{coefficient \(k\), in terms of \textit{consts}, of each term in \textit{undesired\_terms}}
                    \State \textbf{append} ``\(k = 0\)'' to \(S\)
                \EndFor

                \State susy\_dict[\(A\)] \(\gets\) \{``desired\_terms'': \textit{desired\_terms}, ``gauge\_terms'': \textit{pre\_gauge\_terms}, ``undesired\_terms'': \textit{pre\_undesired\_terms}, ``lorentz\_proper\_terms'': \textit{lorentz\_proper\_exp}\}

            \EndFor

            \State \textit{sol} \(\gets\) \textbf{DistillConstrs}(\(S\), \textit{consts})
            \State \Return \textit{sol}, \textit{susy\_dict}

        \EndFunction
    \end{algorithmic}
\end{algorithm}

As an example, we apply ``SUSYSolve'' to the on-shell 4D vector multiplet, whose supersymmetry rules, with variable coefficients, are
\begin{subequations}
    \begin{align}
        {\rm D}_{\alpha} A_{a} &= u (\gamma_{a})_{\alpha}{}^{\beta} \lambda_{\beta} \\
        {\rm D}_{\alpha} \lambda_{\beta} &=  v (\gamma^{a b})_{\alpha \beta} \partial_{a} A_{b}.
    \end{align}
\end{subequations}
The code below finds the constraints that can be obtained from closure. The basis of the Clifford algebra is
\begin{equation}\label{4d_basis}
    \Gamma^A = \left\{C_{\alpha \beta}, \left(\gamma^{a}\right)_{\alpha \beta}, \left(\gamma^{a b}\right)_{\alpha \beta}, \left(\gamma_{*}\right)_{\alpha \beta}, \left(\gamma_{*}\gamma^{a}\right)_{\alpha \beta}\right\}
\end{equation}
\begin{minted}{python}
>>> bosons = [Ex('A_{a}')]
>>> fermions = [Ex(r'(\lambda)_{\gamma}')]
>>> gauge_transs = [Ex(r'0')]
>>> susy = r'''D_{\alpha}(A_{a}) -> u (\Gamma_{a})_{\alpha}^{\beta} (\lambda)_{\beta}, D_{\alpha}((\lambda)_{\beta}) -> v (\Gamma^{a b})_{\alpha \beta} \partial_{a}(A_{b})'''
>>> consts = ['u', 'v']
>>> indices = [r'_{\alpha}', r'_{\beta}']
>>> susy_solve(bosons, fermions, gauge_transs, susy, basis, consts, indices, comm_coef=2)
\end{minted}
The sole resulting constraint is
\begin{equation}\label{on_shell_vect_closure}
    u v = -i.
\end{equation}
Fixing \(u = 1\) gives \(v = -i\), agreeing with (22) of \cite{Gates2009} and (2.2) of \cite{Gates2020a}.\footnote{The results in \cite{Gates2009, Gates2020a} are written in terms of 1-index gamma matrices, but it is a trivial exercise to convert those to 2-index gamma matrices where appropriate, and the outcomes match precisely the results of ``SUSYSolve.''} Also, ``SUSYSolve'' gives the closure structure as
\begin{subequations}
    \begin{align}
        \{{\rm D}_{\alpha}, {\rm D}_{\beta}\} A_{a} &= -2u v \left(\gamma^{b}\right)_{\alpha \beta} \partial_{b} A_{a} \textcolor{blue}{~+ 2u v \left(\gamma^{b}\right)_{\alpha \beta} \partial_{a} A_{b}} \numberthis \\
        \begin{split}
            \{{\rm D}_{\alpha}, {\rm D}_{\beta}\} \lambda_{\beta} &= -2u v \left(\gamma^{a}\right)_{\alpha \beta} \partial_{a} \lambda_{\gamma} \textcolor{DarkGreen}{~+ \frac12 u v \left(\gamma^{a}\right)_{\alpha \beta} \left(\gamma_{a} \gamma^{b}\right)_{\gamma}{}^{\eta} \partial_{b} \lambda_{\eta}} \\
            &\quad\textcolor{DarkGreen}{+  \frac14 u v \left(\gamma^{a b}\right)_{\alpha \beta} \left(\gamma_{a b} \gamma^{c}\right)_{\gamma}{}^{\eta} \partial_{c} \lambda_{\eta}},
        \end{split}
    \end{align}
\end{subequations}
where the terms in black are of the form of the desired translation, the term in blue is a gauge term, and the terms in green are equation-of-motion terms.\footnote{Formally, they are classified as ``Lorentz'' proper terms, but they are indeed equation-of-motion terms, as they vanish under the Dirac equation.} Inputting \eqref{on_shell_vect_closure} shows that this matches the closure structure in (3.9)-(3.11) of \cite{Gates2020a}.

We now note one limitation of ``SUSYSolve.'' In particular, not all off-shell equation-of-motion terms can be identified and isolated by the gamma-splitting procedure of ``FilterTerms,'' as they may be multiples of the equation of motion by tensor-spinors that contract indices with the equation of motion. The result is that equation-of-motion terms are misclassified as undesired terms,\footnote{The problem is actually a bit subtler: terms which might be decomposed into undesired terms and equation-of-motion terms are simply classified as undesired.} imposing additional constraints on the coefficients of the multiplet that make the system of constraints inconsistent. It can be verified that this occurs when applying ``SUSYSolve'' to the on-shell 4D matter-gravitino and supergravity multiplets in \cite{Gates2020a}, and it is for this reason that we present an alternative solver using Feynman propagators in \S\ref{multiplet_solver2} which does not suffer of this issue and is therefore of wider applicability. Note, however, that this limitation only affects applicability, not accuracy: whenever the outputted constraints are consistent (together with those from other physical requirements, e.g., SUSY-invariance of the action), they are valid. Further, this issue is by definition absent in off-shell multiplets. The advantages of ``SUSYSolve'' over the solver in \S\ref{multiplet_solver2} are 1) that the hash table outputted by ``SUSYSolve'' enables quick computation of the non-closure geometry of a multiplet, and 2) that by avoiding Feynman propagators, ``SUSYSolve'' uses fewer indices in internal expressions, reducing canonicalization time.

\subsection{Procedure for Computing Feynman Propagators}\label{compute_props}
The multiplet solver of \S\ref{multiplet_solver2} will require as input (momentum-space) Feynman propagators for half-integer-spin particles. Hence, in this section, we present functions used to calculate these propagators for arbitrary-spin fields. Since the construction of Feynman propagators for arbitrary spin in 4D is well-documented, we provide algorithms for handling the 4D case to improve ease of use of our suite of tools. As the computation of boson and fermion propagators is rather intertwined, we include the integer-spin case for completeness.\footnote{Note a few conditions on the bosonic field here. Like in the fermionic case (as mentioned in \S\ref{multiplet_solver1}), the boson is assumed to be a symmetric tensor, which has rank equal to its spin. In addition, letting \(A^{a_1 \cdots a_s}\) be a boson of spin \(s\), it is assumed that \(k_{a_1} A^{a_1 \cdots a_s} = 0\) and \(A^{a_1 a_1 a_3 \cdots a_s} = 0\).} We follow \cite{Huang2005, Miyamoto2011} for the relevant theory. At the end of the section, we will introduce the more limited function we use for our 11D computations.

Before calculating propagators, one must first consider projection operators. These can be constructed recursively, starting with the (momentum-space) definition of the spin-1 projection operator as
\begin{equation}\label{spin_1_proj}
    P^{a b} = \eta^{a b} - k^{a} k^{b} \hat{\square}^{-1},
\end{equation}
where
\begin{equation}
    \hat{\square}^{-1} = -\mathcal{F}(\hat{\square})^{-1} = \frac{1}{k^2}
\end{equation}
is (negative) the inverse of the Fourier transform of the d'Alembert operator.\footnote{We feel justified in adopting this notation for the inverse momentum-space d'Alembert operator, in spite of its simple formula, because of its role as the ``spin-0 Feynman propagator,'' or more accurately, the (Euclidean) Green's function of the Klein-Gordon equation for a (massless) scalar field. The inverse Fourier transform of \(\hat{\square}^{-1}\) is often denoted by the symbol chosen for arbitrary integer-spin propagators, \(\Delta_{F}\) in \cite{Huang2005} and \(G\) in \S4.1.3 of \cite{Freedman2013}, but without Lorentz indices, corresponding to spin 0.} If \(s\) is an integer, then let \(A = \{a_1, \dotsc, a_s\}, B = \{b_1, \dotsc, b_s\}\). Given the projection operators for integer spin \(\leq s - 1\), the projection operator for spin \(s\) is computed via ((8-9) in \cite{Huang2005} and (6.1.18-6.1.19) in \cite{Miyamoto2011})\footnote{This projection operator construction appears in other incarnations in older work on higher-spin particles. See the references in \cite{Huang2005}.}
\begin{equation}\label{int_spin_proj}
    P^{A B} = \left(\frac{1}{s!}\right)^2 \sum_{(\sigma, \sigma') \in \mathfrak{S}(A) \times \mathfrak{S}(B)} \sum_{r = 0}^{\lfloor s/2 \rfloor} \frac{(-1)^r s!}{2^r r! (s - 2r)! q_{r}} \prod_{i=1}^{r} P^{\sigma_{2i-1} \sigma_{2i}} P^{\sigma'_{2i-1} \sigma'_{2i}} \prod_{i=2r+1}^{s} P^{\sigma_{i} \sigma'_{i}},
\end{equation}
where \(\mathfrak{S}(A), \mathfrak{S}(B)\) are the sets of permutations of \(A, B\), respectively, and
\begin{equation}
    q_{r} = \begin{cases} 1, & \textnormal{if } r = 0 \\ \prod_{i=1}^{r} (2s - 2i + 1), & \textnormal{if } r \geq 1. \end{cases}
\end{equation}
The projection operator for half-integer spin \(s\) can then be computed from that for spin \(\ceil{s}\) via ((23b) in \cite{Huang2005} and (6.1.59-6.1.60) in \cite{Miyamoto2011})\footnote{Notice that the projection operator has spinor indices, as it must. A particle of half-integer spin \(s\) is a tensor-spinor of (Lorentz) rank \(\floor{s}\) with one spinor index. Note that when discussing fermions in this section, for brevity, we break slightly from our notation elsewhere and use \(s\), not \(s + 1/2\), to denote a half-integral spin.}
\begin{equation}\label{half_int_spin_proj}
    \left(Q^{a_1 \cdots a_{\floor{s}} b_1 \cdots b_{\floor{s}}}\right)_{\alpha \beta} = \frac{\ceil{s}}{2s + 2} \left(\gamma_{a} \gamma_{b}\right)_{\alpha \beta} P^{a a_1 \cdots a_{\floor{s}} b b_1 \cdots b_{\floor{s}}}.
\end{equation}
The algorithm ``SymProjDecomp'' (see Algorithm \ref{alg:symprojdecomp}) computes the projection operator for arbitrary spin \(s\) via the foregoing recursive method. It takes as input the spin (and an optional parameter dictating whether to perform a gamma-matrix substitution in the calculation of a half-integer-spin projection operator, which is not relevant for most dimensions of interest) and returns a substitution rule taking \(P^{a_1 \cdots a_s b_1 \cdots b_s}\) or \(\left(Q^{a_1 \cdots a_{\floor{s}} b_1 \cdots b_{\floor{s}}}\right)_{\alpha \beta}\) to its decomposition in terms of spin-1 projection operators \(P^{a b}\). Note that we contract indices via the easily verified identities
\begin{subequations}
    \begin{align}
        P_{a}{}^{a} = 3 \\
        P_{a b} P^{b c} = P_{a}{}^{c}.
    \end{align}
\end{subequations}
Note also that the symmetrizations are weight-1, yielding the needed \(1/s!\) factor.

\begin{algorithm}[ht!]
    \caption{Compute the Projection Operator for a Field of Spin \(s\)}\label{alg:symprojdecomp}
    \begin{algorithmic}[1]
        \Require \(s\) is an integer or half-integer
        \Function{SymProjDecomp}{\(s\), \textit{to\_perform\_subs} = False}

            \If{\(s\) is an integer}
                \State \(P_0 \gets 0\)
                \For{\(r \in 0..\floor{\frac{s}{2}}\)} \Comment{Compute the inner sum in \eqref{int_spin_proj}}
                    \State \(q \gets \prod_{i \in 1..r} (2s - 2i + 1)\) \textbf{if} \(r \geq 1\) \textbf{otherwise} 1 \Comment{Compute \(q_{r}\)}
                    \State \(P_0 \gets P_0 + \frac{(-1)^r s!}{2^r r! (s - 2r)! q} \prod_{i \in 1..r} P^{\sigma_{2i-1} \sigma_{2i}} P^{\sigma'_{2i-1} \sigma'_{2i}} \prod_{i \in (2r+1)..s} P^{\sigma_{i} \sigma'_{i}}\) \Comment{Add a summand}
                \EndFor

                \If{\(s > 1\)}
                    \State \textbf{symmetrise} \(P_0\) in \(a_1...a_s\) \Comment{Sum over \(\mathfrak{S}(A)\)}
                    \State \textbf{symmetrise} \(P_0\) in \(b_1...b_s\) \Comment{Sum over \(\mathfrak{S}(B)\)}
                \EndIf
                \State \(P_0 \gets\) \textbf{Evaluate}(\(P_0\), \textit{to\_perform\_subs} = False)
                \State \Return ``\(P^{a_1 \cdots a_s b_1 \cdots b_s} \rightarrow P_0\)''

            \ElsIf{\(s\) is a half-integer}
                \State \(Q_0 \gets \frac{\ceil{s}}{2s+2} (\gamma_{a} \gamma_{b})_{\alpha \beta} P^{a a_1 \cdots a_{\floor{s}} b b_1 \cdots b_{\floor{s}}}\) \Comment{Write \eqref{half_int_spin_proj}}
                \State \(Q_0 \gets\) \textbf{substitute}(\(Q_0\), \textbf{SymProjDecomp}(\(\ceil{s}\))) \Comment{Decompose \(P^{a a_1 \cdots a_{\floor{s}} b b_1 \cdots b_{\floor{s}}}\)}
                \State \(Q_0 \gets\) \textbf{Evaluate}(\(Q_0\), \textit{to\_perform\_subs} = False)
                \State \(Q_0 \gets\) \textbf{substitute}(\(Q_0\), \(P_{a}{}^{a} \rightarrow 3\))
                \State \(Q_0 \gets\) \textbf{substitute}(\(Q_0\), \(P_{a b} P^{b c} \rightarrow P_{a}{}^{c}\))
                \State \(Q_0 \gets\) \textbf{Evaluate}(\(Q_0\), \textit{to\_perform\_subs} = \textit{to\_perform\_subs})

                \State \Return ``\((Q^{a_1 \cdots a_{\floor{s}} b_1 \cdots b_{\floor{s}}})_{\alpha \beta} \rightarrow Q_0\)''

            \EndIf
        \EndFunction
    \end{algorithmic}
\end{algorithm}

As an example, the single line of code below computes the spin-3 projection operator,
\begin{minted}{python}
>>> sym_proj_decomp(3)
\end{minted}
verifying the decomposition ((7a) of \cite{Huang2005} and (6.1.28) of \cite{Miyamoto2011})
\begin{equation}
    \begin{split}
        P^{a_1 a_2 a_3 b_1 b_2 b_3} &\to \frac16 \bigg\{P^{a_1 b_1} P^{a_2 b_2} P^{a_3 b_3} + P^{a_1 b_1} P^{a_2 b_3} P^{a_3 b_2} + P^{a_1 b_2} P^{a_2 b_1} P^{a_3 b_3} \\
        &\quad~+ P^{a_1 b_2} P^{a_2 b_3} P^{a_3 b_1} + P^{a_1 b_3} P^{a_2 b_2} P^{a_3 b_1} + P^{a_1 b_3} P^{a_2 b_1} P^{a_3 b_2}\bigg\} \\
        &\quad~- \frac{1}{15} \bigg\{P^{a_1 a_2} P^{b_1 b_2} P^{a_3 b_3} + P^{a_1 a_3} P^{b_1 b_2} P^{a_2 b_3} + P^{a_2 a_3} P^{b_1 b_2} P^{a_1 b_3} \\
        &\quad~+ P^{a_1 a_2} P^{b_1 b_3} P^{a_3 b_2} + P^{a_1 a_3} P^{b_1 b_3} P^{a_2 b_2} + P^{a_2 a_3} P^{b_1 b_3} P^{a_1 b_2} \\
        &\quad~+ P^{a_1 a_2} P^{b_2 b_3} P^{a_3 b_1} + P^{a_1 a_3} P^{b_2 b_3} P^{a_2 b_1} + P^{a_2 a_3} P^{b_2 b_3} P^{a_1 b_1}\bigg\},
    \end{split}
\end{equation}
For a fermionic example, the code below computes the spin-\(3/2\) projection operator,
\begin{minted}{python}
>>> sym_proj_decomp(3/2)
\end{minted}
verifying the decomposition (equivalent to (21b) in \cite{Huang2005})
\begin{equation}
    \left(Q^{a_1 b_1}\right)_{\alpha \beta} \to \frac23 C_{\alpha \beta} P^{a_1 b_1} - \frac13 \left(\gamma_{a b}\right)_{\alpha \beta} P^{a a_1} P^{b b_1}.
\end{equation}

Now, for a field of integral spin \(s\), the propagator is\footnote{This is essentially (6.2.2) in \cite{Miyamoto2011}, but we omit the coefficient of \(-i\) to more closely reflect the conventions in \cite{Freedman2013}. There can also be an additional function of momentum added to the expression for the propagator; see (6.2.1) of \cite{Miyamoto2011} or (73) of \cite{Huang2005}. We follow \cite{Miyamoto2011} and assume this added function is zero. One might argue that we are abusing notation by calling the factor multiplying the projection operator in \eqref{int_spin_prop} ``\(\hat{\square}^{-1}\)''; one could perhaps attempt to view this rigorously as a limit of masslessness (although this is not quite the massless expression), but this level of rigor is superfluous, and the symbolic algebra procedures are simpler without a separate treatment of this factor. \label{prop_ambig}}
\begin{equation}\label{int_spin_prop}
    D_{F}^{a_1 \cdots a_s b_1 \cdots b_s} = \hat{\square}^{-1} P^{a_1 \cdots a_{s} b_1 \cdots b_{s}},
\end{equation}
while for a field of half-integer spin \(s\), the propagator is\footnote{This corresponds to (6.2.4) of \cite{Miyamoto2011}, which again assumes no added function of momentum.}
\begin{equation}\label{half_int_spin_prop}
    \left(S_{F}^{a_1 \cdots a_{\floor{s}} b_1 \cdots b_{\floor{s}}}\right)_{\alpha \beta} = i \hat{\square}^{-1} \left(\gamma^b\right)_{\alpha \beta} k_{b} Q^{a_1 \cdots a_{\floor{s}} b_1 \cdots b_{\floor{s}}}.
\end{equation}
The algorithm ``SymProp'' (see Algorithm \ref{alg:symprop}) computes the Feynman propagator via the foregoing equations, applying ``SymProjDecomp'' to reduce to spin-\(1\) projection operators and then \eqref{spin_1_proj} to reduce to Kronecker deltas, momentum parameters, and d'Alembert operators. Note the suppressed subprocedure ``SubstituteKleinGordon,'' which cleans up all instances of \(\hat{\square}^{-1} k^2\).

\begin{algorithm}[ht!]
    \caption{Compute the Feynman Propagator for a Field of Spin \(s\)}\label{alg:symprop}
    \begin{algorithmic}[1]
        \Require \(s\) is an integer or half-integer
        \Function{SymProp}{\(s\), \textit{to\_perform\_subs} = False}

            \If{\(s\) is an integer}
                \State \(G \gets \hat{\square}^{-1} P^{a_1 \cdots a_s b_1 \cdots b_s}\) \Comment{Write \eqref{int_spin_prop}}

            \ElsIf{\(s\) is a half-integer}
                \State \(G \gets i \hat{\square}^{-1} (\gamma^b)_{\alpha}{}^{\gamma} k_{1_b} (Q^{a_1 \cdots a_{\floor{s}} b_1 \cdots b_{\floor{s}}})_{\gamma \beta}\) \Comment{Write \eqref{half_int_spin_prop}}
            \EndIf

            \State \(G \gets\) \textbf{substitute}(\(G\), \textbf{SymProjDecomp}(\(s\))) \Comment{Decompose \(P^{a_1 \cdots a_s b_1 \cdots b_s}\) or \((Q^{a_1 \cdots a_{\floor{s}} b_1 \cdots b_{\floor{s}}})_{\alpha \beta}\)}
            \State \(G \gets\) \textbf{substitute}(\(G\), \(P^{a_1 b_1} \rightarrow \delta^{a_1 b_1} - (k_1{}^{a_1} k_1{}^{b_1}) \hat{\square}^{-1}\)) \Comment{Apply \eqref{spin_1_proj}}
            \State \(G \gets\) \textbf{Evaluate}(\(G\), \textit{to\_perform\_subs} = \textit{to\_perform\_subs})
            \State \(G \gets\) \textbf{SubstituteKleinGordon}(\(G\)) \Comment{Replace \(\hat{\square}^{-1} k^2\) with \(1\)}
            \State \(G \gets\) \textbf{collect\_terms}(\(G\))

            \State \Return \(G\)

        \EndFunction
    \end{algorithmic}
\end{algorithm}

As an example, the code below computes the Feynman propagator for a spin-\(2\) particle, e.g., a graviton,\footnote{This seemed to the authors the most interesting demonstration within the ambient domain, supergravity, of our work. We will otherwise not be making much use of bosonic propagators.}
\begin{minted}{python}
>>> sym_prop(2)
\end{minted}
verifying the expression\footnote{Note that the momentum-space inverse d'Alembert operator \(\hat{\square}^{-1}\) is represented in the code output by the text \mintinline{python}{KleinGordon}.} (equivalent to that obtained from Feynman diagrams in \S2 of \cite{Veltman1981})\footnote{Conventions differ on whether this quantity should be called the graviton propagator. This essentially reflects the arbitrary term which may be added: see footnote \ref{prop_ambig}.}
\begin{equation}
    \begin{split}
        D_{F}^{a_1 a_2 b_1 b_2} &= \hat{\square}^{-1} \bigg\{\frac12 (\delta^{a_1 b_1} \delta^{a_2 b_2} + \delta^{a_1 b_2} \delta^{a_2 b_1} - \delta^{a_1 a_2} \delta^{b_1 b_2}) \\
        &\quad\quad\quad -  \frac12 \hat{\square}^{-1} (\delta^{a_1 b_1} k^{a_2} k^{b_2} + \delta^{a_2 b_2} k^{a_1} k^{b_1} + \delta^{a_1 b_2} k^{a_2} k^{b_1} + \delta^{a_2 b_1} k^{a_1} k^{b_2}) \\
        &\quad\quad\quad + \frac23 \left(\frac12 \delta^{a_1 a_2} + \hat{\square}^{-1} k^{a_1} k^{a_2}\right) \left(\frac12 \delta^{b_1 b_2} + \hat{\square}^{-1} k^{b_1} k^{b_2}\right)\bigg\}.
    \end{split}
\end{equation}

We now present one more function for general dimension \(D\), particularly of use for supergravity. The function \mintinline{python}{rarita_schwinger_prop()} returns the massless Rarita-Schwinger Feynman propagator in a given dimension \(D\) using the equation
\begin{equation}\label{rarita_schwinger_prop}
    S_{F_{a b \alpha \beta}} = i \hat{\square}^{-1} \eta_{a b} \left(\gamma^{c}\right)_{\alpha \beta} k_{c} + \frac{i}{D-2} \hat{\square}^{-1} \left(\gamma_{a} \gamma^{c} \gamma_{b}\right)_{\alpha \beta} k_{c},
\end{equation}
which is (5.31) in \cite{Freedman2013} (with a particular choice of gauge terms). We omit pseudo-code here.

\subsection{Multiplet Solver 2: Feynman Propagator Substitution}\label{multiplet_solver2}
We now treat our second algorithm for obtaining constraints on the coefficients of a multiplet from closure of the algebra. The principal difference from the multiplet solver in \S\ref{multiplet_solver1} is the manner of ferreting out off-shell equation-of-motion terms in the closure for an on-shell multiplet. Our general procedure is to move to momentum space and use a Green's function to identify occurrences of the equation of motion. Recall that for a spin-(\(s + 1/2\)) fermion \(\Psi^{U}{}_{\alpha}\) with Feynman propagator \((S_{F}^{U V})_{\alpha \beta}\), the coupling current \(J^{U}{}_{\alpha}\) is defined so that
\begin{equation}\label{coupling_current}
    \mathcal{F}(\Psi^{U}{}_{\alpha}) = - S_{F_{U V \alpha}}{}^{\beta} \mathcal{F}(J^{V}{}_{\beta}),
\end{equation}
with the additional current conservation condition
\begin{equation}\label{current_conserv}
    k_{a} \mathcal{F}(J^{a a_2 \cdots a_s}{}_{\beta}) = 0.
\end{equation}
In position-space, this means that the fermion is (negative) the convolution of the propagator and the coupling current; see the spin-\(3/2\) case in (5.26) of \cite{Freedman2013}. Since the Feynman propagator is fundamentally a Green's function for the relevant equation of motion, it follows that inputting this convolution into the left-hand side of \eqref{sym_fermion_eqom} yields a scalar multiple of the coupling current.\footnote{The convention for the coefficient of the propagator can be chosen so that this scalar coefficient equals \(1\), but it need not be for our purposes.} In momentum space, this means that
\begin{equation}\label{mom_space_eqom}
    \left(\gamma^{b}\right)_{\eta}{}^{\gamma} k_{b} \mathcal{F}(\Psi_{a_1 \cdots a_{s}}{}^{\eta}) - \sum_{i = 1}^{s} \left(\gamma^{b}\right)_{\eta}{}^{\gamma} k_{a_{i}} \mathcal{F}(\Psi_{b U_{i}}{}^{\eta}) = \alpha \mathcal{F}(J_{a_1 \cdots a_s}{}^{\gamma})
\end{equation}
for some \(\alpha \in \mathbb{C}\). Notice that all occurrences of \(\hat{\square}^{-1}\) are furtively contracting away: substituting the right-hand side of \eqref{coupling_current} into the field equation effects contractions between momentum parameters which cancel the instances of \(\hat{\square}^{-1}\). A moment's thought reveals that this uniquely determines the equation of motion: a momentum-space expression in terms of the fermion of interest (and not itself including instances of \(\hat{\square}^{-1}\)) is a (not necessarily scalar) multiple of the left-hand side of \eqref{mom_space_eqom} if and only if the result of substituting the right-hand side of \eqref{coupling_current} for the fermion lacks any occurrence of \(\hat{\square}^{-1}\). When considering the anticommutator of supercovariant derivatives applied to the fermion, such multiples of the equation of motion are precisely the equation-of-motion terms.

We adapt these ideas to an analogue of ``FilterTerms.'' The algorithm ``FilterTermsMomentumSpace'' has the same inputs as ``FilterTerms'' but for an additional substitution rule \(\tilde{G}_{A}\) of the form of \eqref{coupling_current} above, taking the Fourier transform of the field \(A\) to its expression as the product of a (momentum-space) Feynman propagator and  the Fourier transform of a coupling current \(J_{A}\). Two assumptions are made as to the input. First, the inputted expression \(\mathcal{E}\) is assumed to represent the Fourier transform of the Fierz-transformed closure, and is assumed to be flat (i.e., a mere sum of products). Second, the propagator is assumed to have a distinguished term of the form \(\hat{\square}^{-1} \eta_{a_1 b_1} \cdots \eta_{a_s b_s} (\gamma^{a'})_{\gamma}{}^{\eta} k_{a'}\), which when contracted with \(\mathcal{F}(J_{A}^{b_1 \cdots b_s}{}_{\eta})\) gives rise to a term of the form \(\hat{\square}^{-1} (\gamma^{a'})_{\gamma}{}^{\eta} k_{a'} \mathcal{F}(J_{A_{a_1 \cdots a_s \eta}})\) in \(\tilde{G}_{A}\), with a single 1-index gamma matrix contracted with a momentum parameter, and with all indices \(a_1, \dotsc, a_s, \gamma\) free. This is a reasonable assumption: all fermionic propagators of interest can be made to meet this requirement via the manipulation of gauge and equation-of-motion terms, and propagators are typically written with such a term, as seen in the Rarita-Schwinger case in \eqref{rarita_schwinger_prop}. The significance of this singular term is to serve as a hook when reconstructing the left-hand side of \eqref{coupling_current} from the right-hand side. We set the stage for this reconstruction by drawing the right-hand side\footnote{Note that a selection of the procedure of ``Evaluate'' is applied without the renaming of dummy indices in order to preserve the Lorentz-index structure for ease of filtration (this is suppressed in the pseudo-code of Algorithm \ref{alg:ftms1}), and then the dummy indices are renamed with an added prime, e.g., \(a\) is replaced with \(a'\).} \(\mathcal{E}'\) of \eqref{coupling_current} via \(\tilde{G}_{A}\) and recording the coefficient \(\ell\) of the aforementioned term, together with the set \(\mathcal{R}\) of remaining terms.\footnote{It is worth making a technical note about isolating the coefficient of a term in the loop in Algorithm \ref{alg:ftms1}. Cadabra does not treat the imaginary unit \(i\) like a real scalar, and ordinary means of drawing the coefficient of a term in Cadabra only return the real multiplier, e.g., \(5\) even if the constant coefficient is \(5i\). There is a bit of manipulation and substitution involved in drawing the full coefficient, including \(i\), and in inverting the result to get \(\ell\), but this is a purely technical concern and is suppressed here.\label{Im_foot}} This first part of the procedure is shown in pseudo-code as Algorithm \ref{alg:ftms1}.

\begin{algorithm}[ht!]
    \caption{Filter Closure/Non-Closure Terms Involving Coupling Currents in Momentum Space, Part 1}\label{alg:ftms1}
    \begin{algorithmic}[1]
        \Function{FilterTermsMomentumSpace}{\(\mathcal{E}\), \(A\), \(\tilde{G}_A\), \textit{Inds}[1..2]}

            \State \textit{desired\_terms} \(\gets\) [ ]
            \State \textit{ignored\_terms} \(\gets\)  [ ]
            \State \textit{gauge\_terms} \(\gets\)  [ ]
            \State \textit{eqom\_terms} \(\gets\)  [ ]
            \State \textit{undesired\_terms} \(\gets\)  [ ]
            \State \textit{normalized\_schouten\_terms} \(\gets\) [ ]
            \State \(\ell \gets\) 1

            \State \(\mathcal{E}' \gets \mathcal{F}_1((A_G)_{\bullet})\)
 
            \State \(\mathcal{E}' \gets\) \textbf{substitute}(\(\mathcal{E}'\), \(\tilde{G}_A\))
            \State \textbf{simplify} \(\mathcal{E}'\) without renaming indices
            \State \textbf{prime} dummy indices in \(\mathcal{E}'\)

            \State \(\mathcal{R} \gets \) [ ]
            
            \For{\textit{term} in \(\mathcal{E}'\)}
                \If{\textit{term} has form \(\hat{\square}^{-1} (\gamma^{a'})_{\bullet}{}^{\beta'} k_{1_{a'}} \mathcal{F}_1((J_{A_G})_{\beta'})\)}
                    \State \(\ell \gets\) inverse of the coefficient of \textit{term} \Comment{including imaginary factors in the coefficient of \textit{term} (see footnote \ref{Im_foot})}
                \Else
                    \State \textbf{append} \textit{term} to \(\mathcal{R}\)
                \EndIf
            \EndFor
            \State continued...
        \algstore{bkbreak}
    \end{algorithmic}
\end{algorithm}

We now loop through an evolving array of terms, initially those in \(\mathcal{E}\), and proceed according to six possible cases, where \(t\) is the term at hand. 1) If \(t\) has the form of the distinguished term \(\hat{\square}^{-1} (\gamma^{a'})_{\gamma}{}^{\eta} k_{a'} \mathcal{F}(J_{A_{a_1 \cdots a_s \eta}})\) multiplied by \((\gamma^{a''})_{\textit{Inds}} k_{a''}\), then letting \(\ell'\) be the scalar coefficient of \(t\), we have
\begin{equation}
    t + \ell \ell' \left(\gamma^{a''}\right)_{\textit{Inds}} k_{a''} \mathcal{R} = \ell \ell' \left(\gamma^{a''}\right)_{\textit{Inds}} k_{a''} \mathcal{E}' = \ell \ell' \left(\gamma^{a''}\right)_{\textit{Inds}} k_{a''} \mathcal{F}(A_{G \gamma}),
\end{equation}
where the last expression is of the form of the Fourier transform of the desired translation in \eqref{closure_condition} applied to \(A\). Hence, the algorithm replaces \(t\) with \(\ell \ell' (\gamma^{a''})_{\textit{Inds}} k_{a''} \mathcal{F}(A_{G \gamma}) - \ell \ell' (\gamma^{a''})_{\alpha \beta} k_{a''} \mathcal{R}\), classifies the first term as a desired term, and throws the remaining terms back into the loop for further evaluation. None of the terms in the latter remainder can be of the same form as case (1), so no infinite loop results. 2) If in \(t\), some momentum parameters share indices with the coupling current, then by \eqref{current_conserv}, \(t = 0\), so \(t\) is classified as an ``ignored'' term. 3) If in \(t\), at least one momentum parameter has a free index, then \(t\) is classified as a gauge term. 4) If in \(t\), \(\hat{\square}^{-1}\) is absent, then by the discussion at the beginning of this section, \(t\) is classified as an equation-of-motion term. 5) If \(t\) is of the form \(\hat{\square}^{-1} (\gamma^{E})_{\textit{Inds}} (\gamma^{F})_{\gamma}{}^{\eta} k_{a_1} \cdots k_{a_n} \mathcal{F}(J_{A_{G \eta}})\), where the second gamma matrix has \(|F| = D\) indices and where at least one of the \(a_{i}\) is not in \(F\), then Schouten identities must be handled. 6) If \(t\) matches none of the criteria (1)-(5), then it is classified as an undesired term (i.e., a term which is eventually set to zero to get a constraint). This second part of the procedure is shown in pseudo-code as Algorithm \ref{alg:ftms2}.

The details of case (5) require further explanation. The crux of the matter is that a term might not fit the criteria for any of (1)-(4) but still not be a simple undesired term  because it is actually related to terms fitting (1)-(4) by a Schouten identity. For example, in 4D, suppose that\footnote{This particular term comes from an intermediate step in the solution for the coefficients of the 4D supergravity multiplet.}
\begin{equation}\label{pre_asym_t_ex}
    t = -  \frac32 u v \hat{\square}^{-1} \left(\gamma^{c d}\right)_{\alpha \beta} \left(\gamma_{a b c}{}^{e}\right)_{\gamma}{}^{\eta} k_{d} k_{e} \mathcal{F}\left(J_{A}{}^{b}{}_{\eta}\right).
\end{equation}
This is ostensibly an undesired term not permitted in \eqref{closure_condition}, and assuming no like terms, one might jump to the conclusion that \(u v = 0\). However, notice that the second gamma matrix involves \(D = 4\) vector indices. Since a Schouten identity arises from the fact that a totally antisymmetric tensor of rank \(D + 1\) must be zero, it seems reasonable to draw another index in the expression, \(d\), and antisymmetrize to find an identity. Antisymmetrizing (without the coefficient) in the indices \(a, b, c, d, e\) yields\footnote{Note that \eqref{asym_tp_ex} is a weight-1 antisymmetrization of \eqref{pre_asym_t_ex} (without the coefficient of the latter). That is, a normalization factor \(1/120\) is introduced. This reflects the normalization of antisymmetrization in Cadabra.}
\begin{equation}\label{asym_tp_ex}
    \begin{split}
        t' &= \frac25 u v \hat{\square}^{-1} \left(\gamma^{c d}\right)_{\alpha \beta} \left(\gamma_{a b c}{}^{e}\right)_{\gamma}{}^{\eta} k_{d} k_{e} \mathcal{F}\left(J_{A}{}^{b}{}_{\eta}\right) -  \frac15 u v \left(\gamma^{c d}\right)_{\alpha \beta} \left(\gamma_{a b c d}\right)_{\gamma}{}^{\eta} \mathcal{F}\left(J_{A}{}^{b}{}_{\eta}\right) \\
        &\quad+  \frac15 u v \hat{\square}^{-1} \left(\gamma^{c d}\right)_{\alpha \beta} \left(\gamma_{a c d}{}^{e}\right)_{\gamma}{}^{\eta} k_{b} k_{e} \mathcal{F}\left(J_{A}{}^{b}{}_{\eta}\right) -  \frac15 u v \hat{\square}^{-1} \left(\gamma^{c d}\right)_{\alpha \beta} \left(\gamma_{b c d}{}^{e}\right)_{\gamma}{}^{\eta} k_{a} k_{e} \mathcal{F}\left(J_{A}{}^{b}{}_{\eta}\right).
    \end{split}
\end{equation}
The first term, which we denote \(t'_0\), is a scalar multiple of \(t\), while the other terms are respectively an equation-of-motion term (no \(\hat{\square}^{-1}\)), an ``ignored term'' (\(b\) is shared by the current and a momentum parameter), and a gauge term (the free index \(a\) is on a momentum parameter). Remembering that \(t' = 0\), we can solve for \(t\) to find
\begin{equation}
    \begin{split}
        t &= -  \frac34 u v \left(\gamma^{c d}\right)_{\alpha \beta} \left(\gamma_{a b c d}\right)_{\gamma}{}^{\eta} \mathcal{F}\left(J_{A}{}^{b}{}_{\eta}\right) +  \frac34 u v \hat{\square}^{-1} \left(\gamma^{c d}\right)_{\alpha \beta} \left(\gamma_{a c d}{}^{e}\right)_{\gamma}{}^{\eta} k_{b} k_{e} \mathcal{F}\left(J_{A}{}^{b}{}_{\eta}\right) \\
        &\quad-  \frac34 u v \hat{\square}^{-1} \left(\gamma^{c d}\right)_{\alpha \beta} \left(\gamma_{b c d}{}^{e}\right)_{\gamma}{}^{\eta} k_{a} k_{e} \mathcal{F}\left(J_{A}{}^{b}{}_{\eta}\right),
    \end{split}
\end{equation}
i.e., \(t\) is equivalent to a set of non-undesired terms. In order to incorporate the insights from the foregoing example, in case (5), we choose an index \(a_{i}\) and antisymmetrize \(t\) in the \(D + 1\) indices \(F \cup \{a_{i}\}\) to get \(t'\); the first term is our \(t'_0\). Let \(r\) denote the remaining terms. Then we can solve for \(t\) in terms of \(r\) can throw the resulting terms back into the loop for further evaluation. An infinite loop can arise if terms of \(r\) are also of the form (5) (reflecting that \(t\) is really an undesired term). To prevent this, we store normalized copies of the terms in \(r\), and if any term of \(r\) has appeared previously, we abort and classify \(t\) as undesired.

\begin{algorithm}[ht!]
    \caption{Filter Closure/Non-Closure Terms Involving Coupling Currents in Momentum Space, Part 2}\label{alg:ftms2}
    \begin{algorithmic}[1]
            \algrestore{bkbreak}
            \State \(T \gets\) terms of \(\mathcal{E}\)

            \State \(n = 0\)
            \While{\(n < |T|\)}
                \State \(t \gets T[n]\)

                \If{\(t\) has form \(\hat{\square}^{-1} (\gamma^{a'})_{\textit{Inds}} (\gamma^{b'})_{\bullet}{}^{\beta'} k_{1_{a'}} k_{1_{b'}} \mathcal{F}_1((J_{A_G})_{\beta'})\)}
                    \State \textbf{append} \(\ell \cdot (\)coefficient of \(t) \cdot (\gamma^{a'})_{\textit{Inds}} k_{1_{a'}} \mathcal{F}_1((A_G)_{\bullet})\) to \textit{desired\_terms} \Comment{including ``\(i\)''s in \(t\)}
                    \State \textbf{append} terms of \(-\ell \cdot (\)coefficient of \(t) \cdot (\gamma^{a'})_{\textit{Inds}} k_{1_{a'}} \mathcal{R}\) to \(T\)

                \ElsIf{the indices on the ``\(k_1\)''s have a nonempty intersection with the indices on \(J_A\)}
                    \State \textbf{append} \(t\) to \textit{ignored\_terms}

                \ElsIf{at least one ``\(k_1\)'' factor of \(t\) has a free index}
                    \State \textbf{append} \(t\) to \textit{gauge\_terms}

                \ElsIf{\(\hat{\square}^{-1}\) is not a factor of \(t\)}
                    \State \textbf{append} \(t\) to \textit{eqom\_terms}

                \ElsIf{\(t\) has form \(\hat{\square}^{-1} (\gamma^{E})_{\textit{Inds}} (\gamma^F)_{\bullet}{}^{\beta'} {k_{1_{a'_1}}} \cdots  {k_{1_{a'_n}}} \mathcal{F}_1((J_{A_G})_{\beta'})\) \& \(|F| = D\) \& \(\{a'_1, \dotsc, a'_n\} \not\subset F\)}
                    \State \textbf{append} normalized \(t\) to \textit{normalized\_schouten\_terms}
                    \State \(a_i \gets \) an element of \(\{a'_1, \dotsc, a'_n\} \setminus F\)
                    \State \(t' \gets t\) antisymmetrized in the indices \(F \cup \{a_i\}\)
                    \State \textbf{simplify} \(t'\) without sorting terms
                    \State \(t'_0 \gets\) first term of \(t'\)
                    \State \(r \gets t' - t'_0\)
                    \State \(r \gets\) \textbf{collect\_terms}(\(r\))

                    \If{terms of \(r~\cap\) \textit{normalized\_schouten\_terms} \(\ne \emptyset\)}
                        \State \textbf{append} \(t\) to \textit{undesired\_terms}
                    \Else:
                        \State \textbf{append} terms of \(-[(\textnormal{coefficient of } t)/(\textnormal{coefficient of } t'_0)] \cdot r\) to \(T\)
                    \EndIf

                \Else
                    \State \textbf{append} \(t\) to \textit{undesired\_terms}

                \EndIf

                \State \(n \gets n + 1\)
            \EndWhile

            \State \Return \{``desired\_terms'': desired\_terms, ``ignored\_terms'': ignored\_terms, ``gauge\_terms'': gauge\_terms, ``eqom\_terms'': eqom\_terms, ``undesired\_terms'': undesired\_terms\}

        \EndFunction
    \end{algorithmic}
\end{algorithm}

Note that this handling of Schouten identities is necessarily very specific. Such specificity reduces computational complexity. Indeed, Cadabra2's \cite{Peeters2018b} function \mintinline{python}{decompose_product()}, which attempts to broadly treat all Schouten identities (in the sense of proving expressions zero under Schouten identities) through decomposition of tensors into irreducible Young tableau representations, is of impractical computational complexity for higher dimensions like \(D = 11\). The difficulty is primarily that Schouten identities represent dimension-dependent multi-term symmetries. Handling such multi-term identities in symbolic algebra is a famously difficult task (see the references in \cite{Hongbo2017}), inspiring brute-force solutions like Invar's database of \(6\cdot10^5\) Riemann tensor identities \cite{MartinGarcia2008a}. Time will tell whether such a massive database will be necessary for the efficient computational treatment of all Schouten identities that might appear in the filtration of closure terms after the substitution of coupling currents. However, since Schouten identities require the complete antisymmetry of \(D + 1\) indices, virtually all of these indices must be on the gamma matrices. (The momentum parameters commute and the coupling current is symmetric, so they could take up at most two indices together.) Hence, if an ostensibly undesired term can be transformed into non-undesired terms via a Schouten identity, it must be rather close in form to case (5). Indeed, the Schouten identities treated in case (5) empirically enable the solution of all of the on-shell 4D, \(\mathcal{N} = 1\) multiplets in \cite{Gates2020a}. It seems reasonable, then, to conjecture that case (5) represents the only possible situation in which a Schouten identity could impact term classification.

We can now treat the (second) multiplet solver algorithm ``SUSYSolvePropagator'' (see Algorithm \ref{alg:susysp}). The algorithm takes the same inputs as ``SUSYSolve,'' except that instead of the array ``gauge\_transs'' of gauge transformations for the fermions, ``SUSYSolvePropagator'' takes as input the array ``fermion\_propagators'' of Feynman propagators for the fermions. The algorithm applies ``SUSYSolve'' to the bosons alone to obtain the constraints from those fields (since we assume that all equation-of-motion terms are off-shell). For each fermion, the algorithm draws the Fourier transform\footnote{The careful reader might have noticed that only the fermions are inputted into ``Fourier'' in Algorithm \ref{alg:susysp}, rather than the combined array of bosons and fermions. Recall from \S\ref{symbolic_fourier} that the input of an array of fields into `Fourier'' is supposed to represent all fields appearing in the inputted expression. In this case, simply counting spinor indices reveals that \(\mathcal{E}\) is fermionic, and since each term includes only one instance of a field, it follows that each field appearing in \(\mathcal{E}\) must be a fermion.} of the Fierz expansion of the anticommutator of supercovariant derivatives applied to the field and then substituting the Feynman propagator and coupling current for each fermionic field as in \eqref{coupling_current}.\footnote{The coupling currents corresponding to different fields are distinguished in the code via a hexadecimal based on the field's name.} Note that we apply a subprocedure ``GenPropSub,'' whose technical details we omit, to construct the substitution rule \(\tilde{G}_{A}\) corresponding to a fermion \(A\) from its Feynman propagator. The procedure filters terms via ``FilterTermsMomentumSpace''; note that in contrast to the procedure in \S\ref{multiplet_solver1}, ``FilterTermsMomentumSpace'' definitively determines equation-of-motion terms, so no further evaluation is necessary. Constraints are then obtained as in ``SUSYSolve,'' except that the coefficient of the desired translation is required to equal \(-c\) instead of \(c \cdot i\), to account for the factor of \(i\) introduced in the Fourier transform.

\begin{algorithm}[ht!]
    \caption{Find Coefficients That Enforce Multiplet Closure Using Propogators}\label{alg:susysp}
    \begin{algorithmic}[1]
        \Function{SUSYSolvePropagator}{\textit{bosons}[1..n], \textit{fermions}[1..m], \textit{fermion\_propagators}[1..m], \textit{susy}, \(\Gamma^A\), \textit{consts}[1..p], Inds[1..2], \(c\) = 1}

            \State \(S\), \_ \(\gets\) \textbf{SUSYSolve}(\textit{bosons}, [~], [~], \textit{susy}, \(\Gamma^A\), \textit{consts}, \textit{Inds}, \(c = c\))

            \For{\(A\) in \textit{fermions}}

                \State \(\mathcal{E} \gets {\rm D}_{\textit{Ind}[0]}{\rm D}_{\textit{Ind}[1]}A + {\rm D}_{\textit{Ind}[1]}{\rm D}_{\textit{Ind}[0]}A\)
                \State \(\mathcal{E} \gets\) \textbf{SUSYExpand}(\(\mathcal{E}\), \textit{susy})
                \State \(\mathcal{E} \gets\) \textbf{Evaluate}(\(\mathcal{E}\), \textit{to\_perform\_subs} = False)
                \State \(\mathcal{E} \gets\) \textbf{FierzExpand2Index}(\(\mathcal{E}\), \(\Gamma^A\), \textit{Inds}, \textit{to\_perform\_subs} = False)

                \State \(\mathcal{E} \gets\) \textbf{Fourier}(\(\mathcal{E}\), \textit{fermions}) \Comment{Differs from ``SUSYSolve'' starting here}

                \State \(\tilde{G}_A \gets\) \textbf{GenPropSub}(\(A\), element of \textit{fermion\_propagators} corresponding to \(A\))

                \For{\(i \in 1..m\)} \Comment{Apply \eqref{coupling_current} to each fermion}
                    \State \(f :=\) \textit{fermions}[\(i\)], \(G_f :=\) \textit{fermion\_propagators}[\(i\)]:
                    \State \textit{sub\_rule} \(\gets\) \textbf{GenPropSub}(\(f, G_f\))
                    \State \(\mathcal{E} \gets\) \textbf{substitute}(\(\mathcal{E}\), \textit{sub\_rule})
                \EndFor

                \State \(\mathcal{E} \gets\) \textbf{Evaluate}(\(\mathcal{E}\), \textit{to\_perform\_subs} = False)
                \State \(\mathcal{E} \gets\) \textbf{SubstituteKleinGordon}(\(\mathcal{E}\))

                \State \(T \gets\) \textbf{FilterTermsMomentumSpace}(\(\mathcal{E}\), \(A\), \(\tilde{G}_A\), \textit{Inds})

                \State \textit{desired\_terms} \(\gets T\)[``desired\_terms'']
                \State \textit{undesired\_terms} \(\gets T\)[``undesired\_terms'']

                \State \textit{desired\_terms} \(\gets\) \textbf{Evaluate}(\textbf{sum}(desired\_terms))
                \State \textit{undesired\_terms} \(\gets\) \textbf{Evaluate}(\textbf{ex}\_sum(undesired\_terms))

                \State \textit{desired\_terms} \(\gets\) \textbf{factor\_in}(\textit{desired\_terms}, \textit{consts})
                \State \textit{undesired\_terms} \(\gets\) \textbf{factor\_in}(\textit{undesired\_terms}, \textit{consts})

                \State \textbf{append} ``coefficient of \textit{desired\_terms} in terms of \textit{consts} \(=-c\)'' to \(S\)

                \For{coefficient \(k\), in terms of \textit{consts}, of each term in \textit{undesired\_terms}}
                    \State \textbf{append} ``\(k = 0\)'' to \(S\)
                \EndFor

            \EndFor

            \State \Return \textbf{DistillConstrs}(\(S\), \textit{consts})

        \EndFunction
    \end{algorithmic}
\end{algorithm}

As an example, we apply ``SUSYSolvePropagator'' to the on-shell 4D axial-vector multiplet, whose supersymmetry transformation rules, with variable coefficients, are
\begin{subequations}\label{axial_vector}
    \begin{align}
        {\rm D}_{\alpha} U_{a} &= u (\gamma_* \gamma_{a})_{\alpha}{}^{\beta} \lambda_{\beta} \\
        {\rm D}_{\alpha} \lambda_\beta &= v (\gamma_* \gamma^{a b})_{\alpha \beta} \partial_{a} U_{b}.
    \end{align}
\end{subequations}
The code below finds the constraints that can be obtained from closure. The basis of the Clifford algebra is \eqref{4d_basis}. Notice the use of the function ``SymProp'' from \S\ref{compute_props}, with the input corresponding to the fact that the fermion \(\lambda_{\gamma}\) has spin \(1/2\).
\begin{minted}{python}
>>> bosons = [Ex('U_{a}')]
>>> fermions = [Ex(r'(\lambda)_{\gamma}')]
>>> fermion_propagators = [sym_prop(1/2)]
>>> susy = r'''D_{\alpha}(U_{a}) -> u (\Gamma' \Gamma_{a})_{\alpha}^{\beta} (\lambda)_{\beta}, D_{\alpha}((\lambda)_{\beta}) -> v (\Gamma' \Gamma^{a b})_{\alpha \beta} \partial_{a}(U_{b})'''
>>> basis = [Ex(r'C_{\alpha \beta}'), Ex(r'(\Gamma^{a})_{\alpha \beta}'), Ex(r'(\Gamma^{a b})_{\alpha \beta}'), Ex(r'''(\Gamma')_{\alpha \beta}'''), Ex(r'''(\Gamma' \Gamma^{a})_{\alpha \beta}''')]
>>> consts = ['u', 'v']
>>> indices = [r'_{\alpha}', r'_{\beta}']
>>> susy_solve_propagator(bosons, fermions, fermion_propagators, susy, basis, consts, indices, comm_coef=2)
\end{minted}
The result is the constraint
\begin{equation}\label{axial_vector_closure}
    u v = i.
\end{equation}
Fixing \(u = i\) gives \(v = 1\), agreeing with (2.3) of \cite{Gates2020a}.

As noted at the end of \S\ref{multiplet_solver2}, ``SUSYSolvePropagator'' does not suffer of the same limitations as ``SUSYSolve'' concerning the elimination of equation-of-motion terms. To give a picture of the difference in applicability, Table \ref{tab:solver_success} indicates on which multiplets from \cite{Gates2020a} each algorithm is successful. Note, though, that because the substitution corresponding to \eqref{coupling_current} is difficult to reverse, the hash table of classified terms from ``FilterTermsMomentumSpace'' is less tractable for othr purposes than that from ``FilterTerms.'' In particular, unlike ``SUSYSolve,'' ``SUSYSolvePropagator'' cannot offer an explicit closure structure along with the outputted set of constraints.

\begin{table}[ht!]
    \begin{center}
        \begin{tabular}{| c | c | c |}
            \hline
            \textbf{Multiplet} & \textbf{SUSYSolve} & \textbf{SUSYSolvePropagator} \\
            \hline
            \hline
            Chiral & \cmark & \cmark \\ \hline
            Vector & \cmark & \cmark \\  \hline
            Axial-Vector & \cmark & \cmark \\  \hline
            Matter-Gravitino & \xmark & \cmark \\ \hline
            Supergravity & \xmark & \cmark \\ [0.5ex]
            \hline
        \end{tabular}
        \caption{Success of ``SUSYSolve'' and ``SUSYSolvePropagator'' on a variety of 4D, \(\mathcal{N} = 1\) multiplets. A \cmark\, indicates success, while an \xmark\, indicates failure.}
        \label{tab:solver_success}
    \end{center}
\end{table}

\subsection{SUSY-Invariant Action Solver}\label{action_solver}
We now briefly treat the algorithm drawing constraints from SUSY-invariance of the action. Recall the requirement of ``SUSY-invariance'' means that the action \(S\) is stationary with respect to variation by the supersymmetry transformation,\footnote{SUSY-invariance is frequently defined as the requirement that for some quantity \({\cal J}{}_{\alpha}{}^{a}\) (the supercurrent) \({\rm D}_{\alpha}S = \int d^{D}x \big[\partial_{a}{\cal J}{}_{\alpha}{}^{a}\big]\), or equivalently, that the supercovariant derivative of the Lagrangian density is a total derivative. Since we typically assume that the supercurrent vanishes at the boundary of integration, this condition is equivalent to the one we use.} i.e., \({\rm D}_{\alpha}S = 0\), or equivalently,
\begin{equation}\label{susy_inv_def}
    \int d^{D}x \big[{\rm D}_{\alpha}\mathcal{L}\big] = 0,
\end{equation}
where \(\mathcal{L}\) is the Lagrangian density. The input to the algorithm ``MakeActionSUSYInv'' (see Algorithm \ref{alg:masusyinv}) is the integrand to that expression (with the supercovariant derivative applied symbolically; it need not be evaluated), the supersymmetry transformation rules for the multiplet, the coefficients (``consts'') of the multiplet and the action components, and the names of the fermions in the multiplet. The procedure is simply to evaluate \(D_{\alpha}\mathcal{L}\), construct the left-hand side of \eqref{susy_inv_def}, integrate by parts (here we use Cadabra's \mintinline{python}{integrate_by_parts()}) to move all partial derivatives onto the bosons so that like terms may be combined or cancelled, and factor in ``consts.'' At this point, each term is clearly independent, as the terms involve distinct fields or partial-derivative indices. Hence, the coefficient of each term can be set to zero, giving the desired system of constraints.

\begin{algorithm}[ht!]
    \caption{Make Action SUSY Invariant}\label{alg:masusyinv}
    \begin{algorithmic}[1]
        \Function{MakeActionSUSYInv}{\({\rm D}_{\alpha}\mathcal{L}\), \textit{susy}, \textit{consts}[\(1..n\)], \textit{fermion\_names}[\(1..m\)]}
            \State \({\rm D}_{\alpha}\mathcal{L} \gets\) \textbf{SUSYExpand}(\({\rm D}_{\alpha}\mathcal{L}\), \textit{susy})
            \State \({\rm D}_{\alpha}\mathcal{L} \gets\) \textbf{Evaluate}(\({\rm D}_{\alpha}\mathcal{L}\))

            \State \({\rm D}_{\alpha} S \gets \int dx^D {\rm D}_{\alpha}\mathcal{L}\) \Comment{Symbolically integrating in Cadabra}
            \For{\(i \in 1..m\)}
                \State \({\rm D}_{\alpha} S \gets\) \textbf{integrate\_by\_parts}(\({\rm D}_{\alpha} S\), \textit{fermion\_names}[\(i\)])
            \EndFor

            \State \textit{integrand} \(\gets\) \textbf{nth\_arg}(\({\rm D}_{\alpha} S\), 0).\textbf{ex}() \Comment{extract the integrand, the 0th argument of \({\rm D}_{\alpha}S\)'s ExNode structure}
            \State \textit{integrand} \(\gets\) \textbf{Evaluate}(\textit{integrand})
            \State \textit{integrand} \(\gets\) \textbf{factor\_in}(\textit{integrand}, \textit{consts})

            \State \textit{system} \(\gets\) [ ]
            \For{\textit{term} in \textit{integrand}}
                \State \textbf{append} ``coefficient of \textit{term} = 0'' to \textit{system}
            \EndFor

            \State \textit{sol} \(\gets\) \textbf{DistillConstrs}(\textit{system}, \textit{consts})

            \State \Return \textit{sol}
        \EndFunction
    \end{algorithmic}
\end{algorithm}

As an example, consider the off-shell 4D vector multiplet, whose supersymmetry rules, with variable coefficients, are
\begin{subequations}\label{off_shell_vect}
    \begin{align}
        {\rm D}_{\alpha} A_{a} &= u (\gamma_{a})_{\alpha}{}^{\beta} \lambda_{\beta} \\
        {\rm D}_{\alpha} \lambda_\beta &=  v (\gamma^{a b})_{\alpha \beta} \partial_{a} A_{b} + w C_{\alpha \beta} \partial^{a} A_{a} + x (\gamma_{*})_{\alpha \beta} d \\
        {\rm D}_{\alpha} d &= y (\gamma_{*} \gamma^a)_{\alpha}{}^{\beta} \partial_{a} \lambda_{\beta}.
    \end{align}
\end{subequations}
The corresponding action, with variable coefficients, is
\begin{equation}\label{off_shell_vect_act}
    S = \int dx^{D} \bigg\{\ell F_{a b} F^{a b} + m \left(\gamma^{a}\right)^{\alpha \beta} \lambda_{\alpha} \partial_{a}\lambda_{\beta} + n d^2\bigg\},
\end{equation}
where
\begin{equation}\label{field_strength}
    F_{a b} = \partial_{a}A_{b} - \partial_{b}A_{a}
\end{equation}
is the field strength tensor. The code below enters the symbolically SUSY-transformed Lagrangian density and the off-shell 4D vector multiplet SUSY-rule\footnote{Notice that a substitution rule for the field strength tensor has been added so that the tensor can be recognized and converted into a gauge field expression more tractable for the program.} into ``MakeActionSUSYInv.''
\begin{minted}{python}
>>> L = Ex(r'D_{\gamma}(l F_{a b} F^{a b} + m (\Gamma^{a})^{\alpha \beta} (\lambda)_{\alpha} \partial_{a}((\lambda)_{\beta}) + n d d)')
>>> susy = r'''D_{\alpha}(A_{a}) -> u (\Gamma_{a})_{\alpha}^{\beta} (\lambda)_{\beta}, D_{\alpha}((\lambda)_{\beta}) -> v (\Gamma^{a b})_{\alpha \beta} \partial_{a}(A_{b}) + w C_{\alpha \beta} \partial^{a}(A_{a}) + x (\Gamma')_{\alpha \beta} d, D_{\alpha}(d) -> y (\Gamma' \Gamma^{a})_{\alpha}^{\beta} \partial_{a}((\lambda)_{\beta}), F_{a b} -> \partial_{a}(A_{b}) - \partial_{b}(A_{a})'''
>>> make_action_susy_inv(L, susy, ['l', 'm', 'n', 'u', 'v', 'w', 'x', 'y'], [r'\lambda'])
\end{minted}
The result is the set of constraints
\begin{subequations}
    \begin{align}
        \ell u &= -\frac12 m v \\
        m x &= n y \\
        m w &= 0 \label{4d_vect_action_constr}.
    \end{align}
\end{subequations}
Now, one can verify via ``SUSYSolve'' or ``SUSYSolvePropagator'' that the constraints from closure are
\begin{subequations}\label{off_shell_vect_closure}
    \begin{align}
        u v &= -i \label{4d_vect_constr:1} \\
        u w &= 0 \label{4d_vect_constr:2} \\
        x y &= i \label{4d_vect_constr:3}.
    \end{align}
\end{subequations}
Together with the normalization conditions from \S\ref{4d_vector_norm}, the choice of scaling \(u = 1\), and the assumption that \(y\) is positive imaginary, these constraints can be solved to find
\begin{equation}
    u = 1, \quad v = -i, \quad w = 0, \quad x = 1, \quad y = i, \quad \ell = -\frac14, \quad m = \frac{i}{2}, \quad n = \frac12,
\end{equation}
agreeing with (19) of \cite{Gates2009} and (2.12) of \cite{Chappell2013}.

\subsection{Algorithm for Computing Holoraumy}\label{holoraumy_alg}
Our final algorithm computes holoraumy. Since the distinction between holoraumy and closure is simply in taking the commutator rather than the anticommutator of supercovariant derivatives, we adapt the procedure of ``SUSYSolve.'' However, now the intended output is the classification of terms, which is useful because in every single on-shell 4D, \(\mathcal{N} = 1\) multiplet studied in \cite{Gates2020a}, the fermionic holoraumy included equation-of-motion terms; in 4D supergravity, there were even gauge terms. The algorithm ``Holoraumy'' (see \ref{alg:holoraumy}) takes as input an array ``fields'' of fields, the supersymmetry transformation rules ``susy'' for the multiplet, a basis \(\Gamma^{A}\) for the Clifford algebra, substitution rules ``subs'' which take the coefficients of the multiplets to their values,\footnote{The imagined use of ``Holoraumy'' is as a calculation which follows the solution of a multiplet. It would be inconvenient to rewrite the supersymmetry transformation rules with the solved coefficients; for this reason, we have preferred to enable a separate listing of the computed values. See the end of this section for examples.} and the indices ``Inds'' to be used in the commutator of supercovariant derivatives. The procedure is to evaluate and Fierz-expand the commutator of supercovariant derivatives applied to each field and classify terms via ``FilterTerms.'' Since the classification of terms in holoraumy is between ``regular'' terms, gauge terms, and equation-of-motion terms, we compile together the desired and undesired terms from ``FilterTerms'' as ``regular.'' Note that we have chosen to leave the set of ``Lorentz proper'' terms as is without determining which among them are bona fide equation-of-motion terms. The reason is that neither approach for definitively discerning and removing equation-of-motion terms from an expression \(\mathcal{E}\) is guaranteed to return terms originally present in \(\mathcal{E}\): ``FindNonGaugeInv'' only achieves this up to unknown gauge-invariant quantities, while ``FilterTermsMomentumSpace'' reformulates \(\mathcal{E}\) in terms of coupling currents without reconstructing the expression in terms of the original fields.

\begin{algorithm}[ht!]
    \caption{Calculate the Holoraumy}\label{alg:holoraumy}
    \begin{algorithmic}[1]
        \Function{Holoraumy}{\textit{fields}[\(1..n\)], \textit{susy}, \(\Gamma^A\), \textit{subs}, \textit{Inds}[1..2]}

            \State \textit{holoraumy\_dict} \(\gets\) empty hash table

            \For{\(A \in\) fields}
                \State \(\mathcal{E} \gets {\rm D}_{\textit{Ind}[0]}{\rm D}_{\textit{Ind}[1]}A - {\rm D}_{\textit{Ind}[1]}{\rm D}_{\textit{Ind}[0]}A\)
                \State \(\mathcal{E} \gets\) \textbf{SUSYExpand}(\(\mathcal{E}\), \textit{susy})
                \State \(\mathcal{E} \gets\) \textbf{substitute}(\(\mathcal{E}\), \textit{subs})
                \State \(\mathcal{E} \gets\) \textbf{distribute}(\(\mathcal{E}\))
                \State \(\mathcal{E} \gets\) \textbf{Evaluate}(\(\mathcal{E}\), \textit{to\_perform\_subs} = False)
                \State \(\mathcal{E} \gets\) \textbf{FierzExpand2Index}(\(\mathcal{E}\), \(\Gamma^A\), \textit{Inds}, \textit{to\_perform\_subs} = False)

                \State \(T \gets\) \textbf{FilterTerms}(\(\mathcal{E}\), \(A\), \textit{Inds}, \textit{identify\_lorentz\_proper} = \(A\) is a fermion)

                \State \textit{regular\_terms} \(\gets T\)[``desired\_terms''] + \(T\)[``undesired\_terms'']
                \State \textit{gauge\_terms} \(\gets T\)[``gauge\_terms'']
                \State \textit{lorentz\_proper\_terms} \(\gets T\)[``lorentz\_proper\_terms'']

                \State \textit{regular\_terms} \(\gets\) \textbf{Evaluate}(\textbf{sum}(\textit{regular\_terms}))
                \State \textit{gauge\_terms} \(\gets\) \textbf{Evaluate}(\textbf{sum}(\textit{gauge\_terms}))
                \State \textit{lorentz\_proper\_terms} \(\gets\) \textbf{Evaluate}(\textbf{sum}(\textit{lorentz\_proper\_terms}), \textit{to\_join\_gamma} = False)

                \State \textit{holoraumy}\_dict[\(A\)] \(\gets\) \{``regular\_terms'': \textit{regular\_terms}, ``gauge\_terms'': \textit{gauge\_terms}, ``lorentz\_proper\_terms'': \textit{lorentz\_ proper\_terms}\}
            \EndFor

            \State \Return holoraumy\_dict
        \EndFunction
    \end{algorithmic}
\end{algorithm}

As an example of use, consider the on-shell 4D supergravity multiplet, whose supersymmetry rules, with variable coefficients, are
\begin{subequations}\label{4d_supergravity}
    \begin{align}
        {\rm D}_{\alpha} h_{a b} &= u (\gamma_{a})_{\alpha}{}^{\beta} \Psi_{b \beta} + u (\gamma_{b})_{\alpha}{}^{\beta} \Psi_{a \beta} \\
        {\rm D}_{\alpha} \Psi_{a \beta} &= v (\gamma^{b c})_{\alpha \beta} \partial_{b} h_{c a}.
    \end{align}
\end{subequations}
One can verify (use ``SUSYSolvePropagator''!) that we can take \(u = 1/2, v = -i\) (also, see (2.5) in \cite{Gates2020a}). The code below computes the holoraumy of the multiplet. (Again, the basis for the Clifford algebra is \eqref{4d_basis}.)
\begin{minted}{python}
>>> fields = [Ex('h_{a b}'), Ex(r'(\Psi_{a})_{\gamma}')]
>>> susy = r'''D_{\alpha}(h_{a b}) -> u (\Gamma_{a})_{\alpha}^{\beta} (\Psi_{b})_{\beta} + u (\Gamma_{b})_{\alpha}^{\beta} (\Psi_{a})_{\beta}, D_{\alpha}((\Psi_{a})_{\beta}) -> v (\Gamma^{b c})_{\alpha \beta} \partial_{b}(h_{c a})'''
>>> subs = Ex('u -> 1/2, v -> -I', False)
>>> indices = [r'_{\alpha}', r'_{\beta}']
>>> holoraumy(fields, susy, basis, subs, indices)
\end{minted}
In terms of the Rarita-Schwinger terms \eqref{rarita_schwinger}, the result is
\begin{subequations}
    \begin{align}
        [{\rm D}_{\alpha}, {\rm D}_{\beta}] h_{a b} &= \epsilon_{(a|}{}^{c d e} \left(\gamma_* \gamma_{c}\right)_{\alpha \beta} \partial_{d} h_{|b) e} \\
        \begin{split}\label{our_holo}
            [{\rm D}_{\alpha}, {\rm D}_{\beta}] \Psi_{a \gamma} &= 2i \left(\gamma^{b} \gamma_*\right)_{\alpha \beta} \left(\gamma_*\right)_{\gamma}{}^{\eta} \partial_{b} \Psi_{a \eta} \textcolor{DarkGreen}{+~  \frac{i}{4} C_{\alpha \beta} \left(\gamma_{a}\right)_{\gamma}{}^{\eta} R_{\eta}} \\
            &\quad \textcolor{DarkGreen}{+  \frac{i}{4} \left(\gamma_*\right)_{\alpha \beta} \left(\gamma_* \gamma_{a}\right)_{\gamma}{}^{\eta} R_{\eta} -  \frac18 \epsilon_{a b}{}^{e f} \left(\gamma_* \gamma^{b}\right)_{\alpha \beta} \left(\gamma_{e f}\right)_{\gamma}{}^{\eta} R_{\eta}} \\
            &\quad \textcolor{DarkGreen}{-  \frac{i}{4} \left(\gamma_* \gamma_{a}\right)_{\alpha \beta} \left(\gamma_*\right)_{\gamma}{}^{\eta} R_{\eta} +  \frac{5i}{4} C_{\alpha \beta} E_{a \gamma}} \\
            &\quad \textcolor{DarkGreen}{+  \frac{5i}{4} \left(\gamma_*\right)_{\alpha \beta} \left(\gamma_*\right)_{\gamma}{}^{\eta} E_{a \gamma} -  \frac{3i}{4} \left(\gamma_* \gamma^{b}\right)_{\alpha \beta} \left(\gamma_* \gamma_{b}\right)_{\gamma}{}^{\eta} E_{a \eta}} \\
            &\quad \textcolor{DarkGreen}{+  \frac{i}{2} \left(\gamma_* \gamma^{b}\right)_{\alpha \beta} \left(\gamma_* \gamma_{a}\right)_{\gamma}{}^{\eta} E_{b \eta} -  \frac{3i}{4} C_{\alpha \beta} \left(\gamma^{b}\right)_{\gamma}{}^{\eta} \partial_{a} \Psi_{b \eta}} \\
            &\quad \textcolor{DarkGreen}{-  \frac{3i}{4} \left(\gamma_*\right)_{\alpha \beta} \left(\gamma_* \gamma^{b}\right)_{\gamma}{}^{\eta} \partial_{a} \Psi_{b \eta} +  \frac{i}{4} \left(\gamma_* \gamma^{b}\right)_{\alpha \beta} \left(\gamma_* \gamma_{b} \gamma^{c}\right)_{\gamma}{}^{\eta} \partial_{a} \Psi_{c \eta}} \\
            &\quad \textcolor{blue}{-~i \left(\gamma^{b} \gamma_*\right)_{\alpha \beta} \left(\gamma_*\right)_{\gamma}{}^{\eta} \partial_{a} \Psi_{b \eta}},
        \end{split}
    \end{align}
\end{subequations}
where the terms in green are equation-of-motion terms and the term in blue is a gauge term. One can show that this is equivalent to
\begin{subequations}
    \begin{align}
        [{\rm D}_{\alpha}, {\rm D}_{\beta}] h_{a b} &= -2i \mathscr{B}_{a b c d e \alpha \beta} \partial^{c} h^{d e} \\
        [{\rm D}_{\alpha}, {\rm D}_{\beta}] \Psi_{a \gamma} &= -2i \mathscr{F}_{a b c \alpha \beta \gamma}{}^{\eta} \partial^{b} \Psi^{c}{}_{\eta} - \mathscr{Z}_{a \alpha \beta \gamma} - \partial_{a} \zeta_{\alpha \beta \gamma} \label{4d_sugra_holo_answer},
    \end{align}
\end{subequations}
where\footnote{We have corrected small typographical errors in the signs in this expression present in \cite{Gates2020a}.}
\begin{subequations}
    \begin{align*}
        \mathscr{B}_{a b c d e \alpha \beta} &= \frac{i}{2} \eta_{e (a} \epsilon_{b) c d f} \left(\gamma_* \gamma^{f}\right)_{\alpha \beta} \numberthis \\
        \mathscr{F}_{a b c \alpha \beta \gamma}{}^{\eta} &= -\frac34 \eta_{a [b} \left(\gamma_* \gamma_{c]}\right)_{\alpha \beta} \left(\gamma_*\right)_{\gamma}{}^{\eta} - \frac{i}{8} \epsilon_{a b c d} \left(\gamma_* \gamma^{d}\right)_{\alpha \beta} \delta_{\gamma}{}^{\eta} + \frac18 \left(\gamma_* \gamma_{[b}\right)_{\alpha \beta} \left(\gamma_* \gamma_{c] a}\right)_{\gamma}{}^{\eta} \numberthis \\
        \mathscr{Z}_{a \alpha \beta \gamma} &= i \bigg\{-C_{\alpha \beta} \left(\gamma_{a}\right)_{\gamma}{}^{\eta} - \left(\gamma_*\right)_{\alpha \beta} \left(\gamma_* \gamma_{a}\right)_{\gamma}{}{}^{\eta} + \frac12 \left(\gamma_* \gamma^{d}\right)_{\alpha \beta} \left(\gamma_* \gamma_{d} \gamma_{a}\right)_{\gamma}{}^{\eta} - \frac14 \left(\gamma_* \gamma_{a}\right)_{\alpha \beta} \left(\gamma_*\right)_{\gamma}{}^{\eta}\bigg\} \left(\gamma^{e f}\right)_{\eta}{}^{\rho} \partial_{e} \Psi_{f \rho} \\
        &\quad + \frac14 \bigg\{5 C_{\alpha \beta} \delta_{\gamma}{}^{\eta} + 5 \left(\gamma_*\right)_{\alpha \beta} \left(\gamma_*\right)_{\gamma}{}^{\eta} - 2 \left(\gamma_* \gamma^{d}\right)_{\alpha \beta} \left(\gamma_* \gamma_{d}\right)_{\gamma}{}^{\eta}\bigg\} \epsilon_{a}{}^{e f g} \left(\gamma_* \gamma_{e}\right)_{\eta}{}^{\rho} \partial_{f} \Psi_{g \rho} \numberthis \\
        \zeta_{\alpha \beta \gamma} &= \frac{3i}{4} \bigg\{C_{\alpha \beta} \left(\gamma^{b}\right)_{\gamma}{}^{\eta} + \left(\gamma_*\right)_{\alpha \beta} \left(\gamma_* \gamma^{b}\right)_{\gamma}{}^{\eta} + \left(\gamma_* \gamma^{b}\right)_{\alpha \beta} \left(\gamma_*\right)_{\gamma}{}^{\eta} - \frac13 \left(\gamma_* \gamma_{c}\right)_{\alpha \beta} \left(\gamma_* \gamma^{c b}\right)_{\gamma}{}^{\eta}\bigg\} \Psi_{b \eta}, \numberthis
    \end{align*}
\end{subequations}
agreeing with (6.38) of \cite{Gates2020a}.

\section{Conclusions}\label{discussion_sect}
We should like to conclude with a few comments about the nature of our results and future directions. In the foregoing, we have derived, in a few lines of code, a form for the linearized 11D, \(\mathcal{N} = 1\) supergravity (and its action) multiplet analogous to the forms studied for 4D theories in the Genomics project. Using this solution, we have given a comprehensive picture of the non-closure geometry of 11D, \(\mathcal{N} = 1\) in the style of \cite{Gates2020a}. Further, we have computed the holoraumy of 11D, \(\mathcal{N} = 1\) supergravity for the first time. It turns out that 11D, \(\mathcal{N} = 1\) supergravity provides the second counterexample to the conjectured ubiquity of electromagnetic-duality-type rotations in holoraumy \cite{Gates2020a}, after 10D, \(\mathcal{N} = 1\) super Maxwell theory \cite{Gates2022}.

These results were obtained using the new Cadabra module SusyPy. We have provided computer algorithms, particularly ``SpinorCombine'' and ``FindChain,'' for computer computation and canonicalization of spinor-index expressions. We have built on this tensor-spinor symbolic algebra to provide a new Fierz expansion algorithm, ``FierzExpansion2Index'' that operates in terms of spinor indices (rather than spinor bilinears) and requires a piori knowledge of only two of the four manipulated spinor indices, an essential prerequisite for automatic manipulation of (anti)commutators of supercovariant derivatives. Combining these with procedures to filter out gauge-invariant quantities or manipulate Feynman propagators, we have provided two algorithms, ``SUSYSolve'' and ``SUSYSolvePropagator,'' that obtains constraints on unknown coefficients in supersymmetry rules from closure of the algebra. The on-shell multiplets to which ``SUSYSolve'' can successfully applied are limited by the occasional presence of non-closure functions not amenable to the gamma-splitting procedure of ``SUSYSolve,'' but when it succeeds, ``SUSYSolve'' provides an explicit closure structure. While it does not provide an explicit closure structure, ``SUSYSolvePropagator'' bypasses the limitation of ``SUSYSolve'' and has been applied successfully to every multiplet in \cite{Gates2020a}. The two algorithms assume that any equation-of-motion terms of an on-shell multiplet lie on Lorentz-symmetric fermions. (They both work on any off-shell multiplet.) Finally, we have provided an algorithm ``MakeActionSUSYInv'' that obtains constraints on SUSY-transformation and action coefficients from SUSY-invariance of the action, and an algorithm ``Holoraumy'' that provides the holoraumy of a multiplet, with gauge and potential equation-of-motion terms identified.

Further work remains in terms of improving the generality of our algorithms. Three possible extensions are clear. First, our multiplet solvers ``SUSYSolve'' and ``SUSYSolvePropagator'' should be extended to the case that equation-of-motion terms do not lie on symmetric fermions, particularly to the case that they lie on bosons. This is a highly pathological situation, but it does occur, as seen in the 4D double-tensor multiplet \cite{Gates2009, Townsend1982}. The difficulty is that bosonic equations of motion are second-order, so they do not appear themselves in the closure on a bosonic field. Rather, the equation-of-motion terms give rise to terms in the anticommutator of supercovariant derivatives applied to the field strength of the bosons that are second-order and vanish under the equations of motion (\S3 of \cite{Townsend1982}). Second, our algorithms, from the solvers to ``Holoraumy,'' should be extended to higher-\(\mathcal{N}\) supermultiplets, which comes down to treating multiple supercovariant derivatives simultaneously and their relations. Third, our spinor-index arithmetic/canonicalization in ``SpinorCombine'' should be extended to treat spinor-indexed sigma matrices. In even dimensions (typically besides \(D = 4\)), and especially in dimensions \(D \equiv 2~(\textnormal{mod}~8)\) for which both the Weyl and Majorana conditions give reductions in the number of independent components, it is more natural to consider Pauli sigma matrices over Dirac gamma matrices. Previously, a Mathematica tool for sigma-matrix products \cite{Gates2001a} has been demonstrated, employing a matrix representation rather than a purely symbolic approach. Although it did not bear a robust evaluation/canonicalization system like ours, it successfully generated numerous Fierz identities. Version 1 of Cadabra \cite{Peeters2012} also seems to have included an attempt at implementing sigma matrices using the conventions of \cite{Wess1992}, but without any computer arithmetic created for these objects. An implementation with spinor indices as in Appendix B of \cite{Gates2019} could extend the applicability of our multiplet solvers to more even-dimensional multiplets.

Closely tied to these improvements are possible extensions of the Genetics project and our physical results. In particular, one should consider the non-closure geometry and holoraumy of the alternative linearized 11D, \(\mathcal{N} = 1\) supergravity multiplet obtained by replacing the three-form \(A_{a b c}\) with a six-form \(\tilde{A}_{a b c d e f}\) \cite{Nicolai1980}. This multiplet is dual to the one used here (i.e., to that from the field content of \cite{Cremmer1978}). Previous work \cite{Townsend1982} has posited that the bosonic equation-of-motion terms should be present in this dual 11D supergravity multiplet, as they are for the 4D double-tensor multiplet. Hence, an extension of our multiplet solvers to consider bosonic equation-of-motion terms should prove useful. Also, various 10D, \(\mathcal{N} = 1\) supergravity multiplets have also become of interest in the literature in the context of the off-shell SUSY problem \cite{Gates2019}, so the Genetics project should be extended to them as well. Due to the dimension, an extension of our tensor-spinor symbolic algebra to handle sigma matrices should prove useful. Beyond these extensions of Supersymmetry Genetics, now that we have computed the holoraumy of the 11D, \(\mathcal{N} = 1\) multiplet, one should explore the explicit computation of the action of the 11D Pauli-Lubanski 3-form operator on the fields of the multiplet, which is closely tied to holoraumy.\footnote{This possible future direction was pointed out to us by Konstantinos Koutrolikos.}

Finally, we briefly remark on computational complexity. We have not paid much attention to this because of the high cost of ordinary index canonicalization (in the original sense of lexicographic ordering of indices via the exploitation of tensor symmetries). Cadabra uses the Butler-Portugal algorithm \cite{Butler1991,Manssur2002} (in a form drawn from \cite{MartinGarcia2008b}) for canonicalization, which suffers cost factorial in the number of indices for certain expressions \cite{Niehoff2018}. This leads to drastically varying running time: solving the on-shell 4D chiral multiplet with ``SUSYSolve'' takes 8.72 seconds on a computer with a processor of measured average speed 4.2 GHz, while solving the 11D supergravity multiplet takes 1 hour, 31 minutes, and 16 seconds. However, there is certainly room for efficiency improvements in SusyPy itself.\footnote{It should be noted that a reduction in runtime might be achieved if Cadabra used the improved canonicalization algorithm in \cite{Niehoff2018}, which broadens the number of cases that can be treated in polynomial-time.} Future efforts should reduce calls to Cadabra's canonicalization function, as well as explore different paradigms for tensor symbolic algebra, such as the index isomorphism approach of Redberry \cite{Poslavsky2015} which demonstrated shorter runtimes than Cadabra.

In summary, we have developed a module SusyPy for multiplet computations that should prove indispensible as the Supersymmetry Genomics project explores higher-dimensional and on-shell multiplets. We have successfully demonstrated our algorithm in the case of 11D, \(\mathcal{N} = 1\) supergravity. Future work should focus on extending our software, applying our tools to further multiplets, and attempting greater efficiency. We fervently look forward to these endeavors.

\vspace{.05in}
\begin{center}
\parbox{4in}{{\it ``For truth is eternal, it is divine; and no phase in the development of truth, however small the domain it embraces, can pass away without a trace. It remains even if the garments in which feeble men clothe it fall into dust.'' \\ ${~}$ 
 ${~}$ 
\\ ${~}$ }\,\,-\,\, H. G. \, Grassmann \, \cite{Grassmann1862} \\ ${~}$ \quad\,(translated by L. C. \, Kannenberg) $~~~~~~~~~$}
 \parbox{4in}{
 $~~$}  
 \end{center}

\noindent
{\bf Acknowledgments}\\[.1in] \indent
S.J.G. was supported in this research in part by the endowment of the Ford Foundation Professorship of Physics at Brown University and the Brown Theoretical Physics Center. S.J.G. is also grateful to acknowledge the support of the endowment of the Clark Leadership Chair in Science at the University of Maryland, College Park. Additionally, I.B.H. and S.H. would like to thank S.J.G. and the other organizers of the 2020 SSPTRS (Summer Student Theoretical Physics Research Session) at Brown University for immensely valuable mentorship from which this research originated. Finally, the authors wish to express deep appreciation to Yangrui Hu and S.-N. Hazel Mak for discussions that played a central role at the inauguration of this work.

\appendix
\numberwithin{equation}{section}

\section{Notation and Conventions}\label{conventions}
Here, we briefly summarize the notation and conventions we use in this paper. Following \cite{Gates2019}, we denote Lorentz, or vector, indices by early-alphabet lowercase Latin letters, while we denote Dirac, or spinor, indices by lowercase Greek letters. Indices contained in parentheses, e.g., \(a, b, c\) in the expression \(A_{(a} B_{b} C_{c)}\), are symmetrized, i.e., expressions with indices in parentheses are summed over all permutations of those indices. Indices contained in square brackets, e.g., \(a, b, c\) in the expression \(A_{[a} B_{b} C_{c]}\), are antisymmetrized, so that the sign of a summand is determined by the parity of the corresponding permutation of the indices. We use pipes to block out indices of the same parity (co- or contravariance) from the (anti)symmetrized set of indices, e.g., in the expression \(A_{(a|} B_{b} C_{|c)}\), only \(a, c\) are symmetrized. Unlike in \cite{Freedman2013}, for simplicity, we do not normalize symmetrizations; that is, we do not divide by the number of permutations.

Moving on from indices, we define \([A, B]\) not to be the \(\mathbb{Z}_2\)-graded Lie bracket, but always to be \(AB - BA\). Similarly, \(\{A, B\} = AB + BA\) always. It is simpler to use these conventions for the commutator and anticommutator for our purposes, since both come into play when considering simultaneously calculations regarding closure of a supermultiplet and a supermultiplet's holoraumy. Finally, when considering multiplets in even dimension \(D\) in \S\ref{susypy}, we follow \cite{Freedman2013} and use the dimension-independent notation \(\gamma_{*}\) for the highest-rank element
\begin{equation}
    \gamma_{*} = (-i)^{(D/2) + 1} \gamma_0 \cdots \gamma_{D - 1},
\end{equation}
rather than a dimension-dependent notation, e.g., the common \(\gamma^5\) for \(D = 4\).

Now that we have summarized our notation, we describe our conventions for the 11D Clifford algebra and relevant tensors. We follow the conventions of \cite{Gates2019}; see Appendix A in that paper for details. We content ourselves here with sketching the most essential of these points. The gamma matrices with spinor indices are of the form \((\gamma^{a_1 \cdots a_n})_{\alpha_1 \alpha_2}\) for \(n \leq 10\). The algebra itself is defined by the usual equation
\begin{equation}
    \{\gamma^{a}, \gamma^{b}\} = 2 \eta^{a b} \mathds{1},
\end{equation}
where \(\mathds{1}\) is the \(32 \times 32\) identity matrix and \(\eta^{a b}\) is the metric, for which we choose the signature
\begin{equation}
    \eta^{a b} = \textnormal{diag}(-1, +1, +1, +1, +1, +1, +1, +1, +1, +1, +1).
\end{equation}
We take the basis
\begin{equation}\label{11d_basis_redux}
    \Gamma^{A} = \left\{C_{\alpha \beta}, \left(\gamma^{a}\right)_{\alpha \beta}, \left(\gamma^{a b}\right)_{\alpha \beta}, \left(\gamma^{a b c}\right)_{\alpha \beta}, \left(\gamma^{a b c d}\right)_{\alpha \beta}, \left(\gamma^{a b c d e}\right)_{\alpha \beta}\right\}.
\end{equation}
Of these basis elements, the elements
\begin{equation}
    \Gamma^{A}_{+} = \left\{\left(\gamma^{a}\right)_{\alpha \beta}, \left(\gamma^{a b}\right)_{\alpha \beta}, \left(\gamma^{a b c d e}\right)_{\alpha \beta}\right\}
\end{equation}
are symmetric in their spinor indices and the elements
\begin{equation}
    \Gamma^{A}_{-} = \left\{C_{\alpha \beta}, \left(\gamma^{a b c}\right)_{\alpha \beta}, \left(\gamma^{a b c d}\right)_{\alpha \beta}\right\}
\end{equation}
are antisymmetric. Since \(C_{\alpha \beta}\) is antisymmetric, we enforce the NW-SE convention for multiplication of tensor-spinors. The representation we have chosen for our basis is the ``negative'' one obtained by adjoining to the basis for the 10D Clifford algebra the negative \(-\gamma_{*}\) of the canonical highest-rank element. This implies the gamma-matrix reduction rule
\begin{equation}
        \gamma^{a_1 \cdots a_r} = \frac{1}{(11-r)!} \epsilon^{a_1 \cdots a_{11}} \gamma_{a_{11} \cdots a_{r+1}}.
\end{equation}
Finally, we define the Levi-Civita tensor as in \cite{Freedman2013}, so that the product of Levi-Civita tensors with shared indices first has a negative coefficient, viz.,
\begin{equation}
    \epsilon_{a_1 \cdots a_{n} b_1 \cdots b_{11 - n}} \epsilon^{a_1 \cdots a_{n} c_1 \cdots c_{11 - n}} = -n! \delta_{b_1}{}^{[c_1} \cdots \delta_{b_{11 - n}}{}^{c_{11 - n}]}.
\end{equation}

\section{Action Normalization}\label{action_norm}
Here, we discuss our convention for normalization of the bosonic part of the action corresponding to a multiplet. We frame this convention for actions which have no interaction terms, so that each field \(A\) has its own free-field Lagrangian density \(\mathcal{L}_{A}\). The rule essentially regards 0-brane reduction: if a dynamical field \(A\) (i.e., a field which has non-trivial equations of motion) is replaced with a field \(\hat{A}\) with only time dependence (i.e., no spatial dependence), then the Lagrangian density takes the form of a standard 1D mass-1 kinetic energy term
\begin{equation}
    \hat{\mathcal{L}}_{A} = \frac12 (\dot{\hat{A}})^2.
\end{equation}
If \(A\) is a gauge field, then the requirement is that replacement with a solely time-dependent field yields a Lagrangian density for \(A\) which is the sum of standard 1D mass-1 kinetic energy terms, plus other terms not of the form of a squared derivative. If, on the other hand, \(A\) is an auxiliary (i.e., non-dynamical) field,\footnote{All of the fields in the 11D supergravity multiplet are dynamical, but we include the non-dynamical case for completeness and for the example multiplet solution in \S\ref{action_solver}.} then we simply require that its free-field Lagrangian be of the form
\begin{equation}
    \mathcal{L}_{A} = \frac12 A^2.
\end{equation}
In the remainder of this appendix, we provide the constraints from normalization for the example from \S\ref{action_solver} and for the 11D supergravity multiplet.

\subsection{4D Vector Multiplet}\label{4d_vector_norm}
Consider the off-shell 4D vector multiplet.\footnote{See \S2 of \cite{Chappell2013} for a compilation of Lagrangian densities for 4D multiplets and their 0-brane reductions.} Using the notation of \S\ref{action_solver}, the Lagrangian densities for the dynamical vector boson and the auxiliary pseudoscalar boson are
\begin{subequations}
    \begin{align}
        \mathcal{L}_A &= \ell F_{a b} F^{a b} \\
        \mathcal{L}_d &= n d^2.
    \end{align}
\end{subequations}
Replace the vector field with a field only dependent on time, as below.
\begin{equation}
    A_{a}(t, \textbf{x}) \longmapsto \hat{A}_{a}(t).
\end{equation}
One can compute that the reduced Lagrangian is
\begin{equation}
    \hat{\mathcal{L}}_A = -2\ell (\dot{\hat{A}}_{1})^2 - 2\ell (\dot{\hat{A}}_{2})^2 - 2\ell (\dot{\hat{A}}_{3})^2.
\end{equation}
The requirement of normalization is then
\begin{equation}
    \ell = -\frac14, \qquad n = \frac12.
\end{equation}

\subsection{11D Supergravity Multiplet}\label{11d_sugra_norm}
Consider the on-shell 11D, \(\mathcal{N} = 1\) supergravity multiplet. Using the notation of \S\ref{closure_sect}, the Lagrangian densities for the graviton and three-form are
\begin{align}
    \mathcal{L}_h &= \ell \bigg\{-\frac14 c^{a b c} c_{a b c} + \frac12 c^{a b c} c_{c a b} + (c^{a}{}_{b}{}^{b})^2\bigg\} \\
    \mathcal{L}_A &= \frac{n}{48} \bigg\{\left[\partial_{[a} A_{b c d]}\right] \left[\partial^{[a} A^{b c d]}\right]\bigg\}.
\end{align}
Replace the graviton and three-form with fields only dependent on time, as below.
\begin{equation}
    \begin{split}
        h_{a b}(t,\textbf{x}) &\longmapsto \hat{h}_{a b}(t) \\
        A_{a b c}(t,\textbf{x}) &\longmapsto \hat{A}_{a b c}(t)
    \end{split}
\end{equation}
One can compute that the reduced Lagrangians are
\begin{subequations}
    \begin{align}
        \hat{\mathcal{L}}_h &= 2\ell \sum_{i=1}^{9} \sum_{j=i+1}^{10} \left(\dot{\hat{h}}_{ij}\right)^2 - 2\ell \sum_{i=1}^{9} \sum_{j=i+1}^{10} \dot{\hat{h}}_{ii} \dot{\hat{h}}_{jj} \\
        \hat{\mathcal{L}}_A &= -18n \sum_{i=1}^{8} \sum_{j=i+1}^{9} \sum_{k=j+1}^{10} (\dot{\hat{A}}_{ijk})^2.
    \end{align}
\end{subequations}
The requirement of normalization is then
\begin{equation}\label{graviton_norm}
    \ell = \frac14, \qquad n = -\frac{1}{36}.
\end{equation}
Notice that in the literature, an alternative normalization convention is sometimes used for the graviton. In particular, as seen in (3.10) and (4.10) of \cite{Chappell2013} for 4D, \(\mathcal{N} = 1\) minimal and non-minimal supergravity, respectively, the coefficient of the graviton part of the Lagrangian density is often chosen so that the coefficients of the graviton terms in the 0-brane-reduced Lagrangian all equal 1. Nevertheless, we choose to be consistent with the most broadly used convention for supersymmetry multiplets.

\section{Propositions on Arbitrary-Spin Fermions}
\begin{proposition}\label{arb_spin_ferm_prop:1}
    If \(\Psi_{a_1 \cdots a_{s}}{}^{\gamma}\) is a (Lorentz-symmetric) spin-(\(s + 1/2\)) fermion, then a combination of terms of the form \(\mathcal{E} = c_0 \left(\gamma^{b}\right)_{\eta}{}^{\gamma} \partial_{b} \Psi_{a_1 \cdots a_{s}}{}^{\eta} - \sum_{i = 1}^{s} c_{i} \left(\gamma^{b}\right)_{\eta}{}^{\gamma} \partial_{a_{i}} \Psi_{b U_{i}}{}^{\eta}\) is gauge-invariant if and only if it is proportional to the left-hand side of \eqref{sym_fermion_eqom}, i.e., if and only if \(c_0 = c_1 = \cdots = c_{s}\).
\end{proposition}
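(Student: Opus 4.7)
The backward direction is immediate: if $c_0 = c_1 = \cdots = c_s$, then $\mathcal{E}$ is a scalar multiple of the left-hand side of \eqref{sym_fermion_eqom}, and by construction the equation of motion is gauge-invariant (this is the standard requirement that the Euler--Lagrange equations respect the gauge symmetry of the action). The forward direction is the content that needs proof. The plan is to compute $\delta_G \mathcal{E}$ directly using the gauge transformation \eqref{sym_fermion_gauge} and the trace constraint \eqref{gauge_param_trace}, isolate the surviving tensor structures, and then invoke linear independence to force the coefficients to coincide.

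In concrete terms, I would first substitute $\delta_G \Psi_{a_1 \cdots a_s}{}^{\gamma} = \sum_{j=1}^{s} \partial_{a_j} \kappa_{U_j}{}^{\gamma}$ into both pieces of $\mathcal{E}$. The first piece contributes a ``diagonal'' sum $c_0 \sum_{j} \left(\gamma^{b}\right)_{\eta}{}^{\gamma} \partial_{b}\partial_{a_j} \kappa_{U_j}{}^{\eta}$. For the second piece, the gauge transformation of $\Psi_{b U_i}$ must be expanded with care: one summand has the derivative on the special index $b$, yielding $\partial_b \kappa_{U_i}{}^{\eta}$, while the remaining $s-1$ summands have the derivative on one of the $a_j \in U_i$, yielding $\partial_{a_j} \kappa_{b U_i \setminus \{a_j\}}{}^{\eta}$. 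The first type recombines into a ``diagonal'' $-\sum_i c_i \left(\gamma^{b}\right)_{\eta}{}^{\gamma} \partial_{a_i}\partial_{b} \kappa_{U_i}{}^{\eta}$, matching the structure of the first piece; the second type produces an ``off-diagonal'' remainder $-\sum_i c_i \sum_{j \neq i} \left(\gamma^{b}\right)_{\eta}{}^{\gamma} \partial_{a_i}\partial_{a_j} \kappa_{b U_i \setminus \{a_j\}}{}^{\eta}$.

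The decisive simplification is that in the off-diagonal remainder, the gamma matrix commutes past the two partial derivatives and contracts with the first index $b$ on $\kappa$. By the trace constraint \eqref{gauge_param_trace}, this contraction vanishes identically, so the entire off-diagonal remainder drops out. Combining and relabeling the surviving diagonal contributions, $\delta_G \mathcal{E}$ reduces to
\begin{equation*}
    \delta_G \mathcal{E} = \sum_{j=1}^{s} (c_0 - c_j) \left(\gamma^{b}\right)_{\eta}{}^{\gamma} \partial_{b}\partial_{a_j} \kappa_{U_j}{}^{\eta}.
\end{equation*}
Requiring $\delta_G \mathcal{E} = 0$ identically in the gauge parameter then forces $c_0 = c_j$ for every $j$.

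The step I expect to require the most care is the final one: justifying that the $s$ tensor structures $\left(\gamma^{b}\right)_{\eta}{}^{\gamma} \partial_{b}\partial_{a_j} \kappa_{U_j}{}^{\eta}$ are genuinely linearly independent as functionals of an arbitrary $\kappa$ obeying only the trace constraint. Heuristically they are independent because the multi-index structure on $\kappa$ differs between values of $j$: each $U_j$ omits a distinct $a_j$ from the external list $a_1, \ldots, a_s$. To make this rigorous I would pick an explicit family of gauge parameters, for instance choosing $\kappa_{U_j}{}^{\eta}$ to be supported on distinct index patterns, and verify that the trace constraint (which ties only the single contracted index $b$ to a spinor trace and leaves the $s-1$ external indices free) does not relate the $j$-indexed blocks to one another. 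Once this independence is in hand, the coefficient equalities $c_0 = c_j$ follow and the proposition is established.
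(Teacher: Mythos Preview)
Your proposal is correct and follows essentially the same route as the paper's proof: compute $\delta_G\mathcal{E}$, observe that the off-diagonal terms vanish by the trace constraint \eqref{gauge_param_trace}, and read off $c_0=c_j$ from the surviving diagonal sum. The paper dispatches the linear-independence step with a single phrase (``the obvious independence of the terms''), so your more careful discussion of that point is, if anything, an improvement.
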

\begin{proof}
We use the notation of \S\ref{multiplet_solver1}. Letting \(V_{ij} = U_{i} \setminus {a_{j}}\), it follows from \eqref{sym_fermion_gauge} that
\begin{equation}
    \delta_{G} \mathcal{E} = \left(\gamma^{b}\right)_{\eta}{}^{\gamma} (c_0 - c_{i})\partial_{b} \left(\sum_{i=1}^{s} \partial_{a_{i}} \kappa_{U_{i}}{}^{\eta}\right) - \sum_{i = 1}^{s} \sum_{\substack{j=1 \\ j \neq i}}^{s} c_{i} \partial_{a_{i} a_{j}} \left(\gamma^{b}\right)_{\eta}{}^{\gamma} \kappa_{b V_{ij}}{}^{\eta}.
\end{equation}
By \eqref{gauge_param_trace}, the second term vanishes. Since the summands of the first term are obviously independent, we have that \(\delta_{G} \mathcal{E} = 0\) precisely when \(c_0 = c_1 = \cdots = c_{s} \equiv c\), i.e., if and only if \(\mathcal{E}\) is \(c\) times the equation of motion \eqref{sym_fermion_eqom}.
\end{proof}

\begin{proposition}\label{arb_spin_ferm_prop:2}
If \(\Psi_{a_1 \cdots a_{s}}{}^{\gamma}\) is a (Lorentz-symmetric) spin-(\(s + 1/2\)) fermion, then
\begin{equation}\label{alt_sym_fermion_eqom}
    \left(\gamma^{E}\right)_{\eta}{}^{\gamma} \partial_{a} \Psi_{E \setminus \{a\}}{}^{\eta} = 0,
\end{equation}
where \(E\) is a set of indices, \(a \in E\), and \(|E| = s + 1\).
\end{proposition}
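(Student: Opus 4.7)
The plan is to split into three cases according to the value of $s$, since the relationship between the total antisymmetry of $\gamma^E$ (in $s+1$ indices) and the total Lorentz-symmetry of $\Psi_{E\setminus\{a\}}{}^{\eta}$ (in $s$ indices) changes qualitatively as $s$ grows.

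For $s \geq 2$, I expect to argue directly from (anti)symmetry without invoking the equations of motion. The hypothesis that $\Psi_{a_1\cdots a_s}{}^{\gamma}$ is Lorentz-symmetric forces $\partial_a \Psi_{E\setminus\{a\}}{}^{\eta}$ to be totally symmetric in the $s$ indices $E\setminus\{a\}$, since $a$ sits on the separate index of the partial derivative. On the other hand $\gamma^E$ is totally antisymmetric in all $s+1$ indices of $E$, hence in particular antisymmetric in the $s \geq 2$ indices $E\setminus\{a\}$. The contraction of a tensor antisymmetric in at least two indices with one symmetric in the same indices vanishes identically, so the whole expression collapses to zero. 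For $s = 0$, the statement reduces to $(\gamma^a)_{\eta}{}^{\gamma}\partial_a \Psi^{\eta} = 0$, which is precisely the Dirac equation, i.e.\ \eqref{sym_fermion_eqom} at $s=0$.

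The case $s=1$ is the interesting one, and I expect it to be the main obstacle since it is neither a pure symmetry consequence nor literally the stated equation of motion. Here the claim reads $(\gamma^{ab})_{\eta}{}^{\gamma}\partial_a \Psi_b{}^{\eta} = 0$. Using antisymmetry of $\gamma^{ab}$ in $(a,b)$, I would replace $\partial_a \Psi_b{}^{\eta}$ by $\partial_{[a}\Psi_{b]}{}^{\eta}$, then apply the Clifford identity $\gamma^{ab} = \gamma^a\gamma^b - \eta^{ab}$. The $\eta^{ab}$ piece drops immediately because $\eta^{ab}$ is symmetric while $\partial_{[a}\Psi_{b]}$ is antisymmetric, leaving $(\gamma^a)_{\eta}{}^{\zeta}(\gamma^b)_{\zeta}{}^{\gamma}\,\partial_{[a}\Psi_{b]}{}^{\eta}$. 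Relabeling dummies so that $(a,\eta)$ play the role of the $(b,\eta)$ in \eqref{sym_fermion_eqom}, the inner factor $(\gamma^a)_{\eta}{}^{\zeta}\partial_{[a}\Psi_{b]}{}^{\eta}$ is exactly the Rarita--Schwinger equation with free indices $b,\zeta$, and therefore vanishes on-shell.

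The tricky points to watch in writing this out carefully will be: (i) being precise about what ``$a \in E$'' means for orienting $\gamma^E$ (the totally antisymmetric nature makes the particular ordering irrelevant up to a sign, which does not affect the vanishing conclusion); (ii) keeping NW--SE spinor-index conventions consistent through the $\gamma^a\gamma^b$ split, so that the R--S equation is recognized in the right form after the dummy relabeling; and (iii) confirming that the $s\geq 2$ argument does not secretly require extra indices on $\Psi$ beyond those in $E\setminus\{a\}$ --- but since $\Psi$ has exactly $s$ Lorentz indices and we contract all of them with $\gamma^E$, no spurious free index appears, so the symmetry/antisymmetry clash is the complete argument there.
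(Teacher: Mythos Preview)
Your proposal is correct and follows essentially the same approach as the paper: the same three-way case split, the Dirac equation for $s=0$, the symmetry/antisymmetry clash for $s\geq 2$ (the paper phrases it by picking two explicit indices $a',a''\in E\setminus\{a\}$ and showing the expression equals its own negative, but this is exactly your argument made concrete), and for $s=1$ the same reduction to the Rarita--Schwinger equation via $\gamma^{ab}\leftrightarrow\gamma^a\gamma^b$, only run in the opposite direction (the paper starts from RS and multiplies by $(\gamma^b)_{\rho}{}^{\gamma}$, whereas you start from $(\gamma^{ab})_{\eta}{}^{\gamma}\partial_a\Psi_b{}^{\eta}$ and factor).
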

\begin{proof}
If \(s = 0\), then \(E \setminus \{a\} = \emptyset\) and \eqref{alt_sym_fermion_eqom} is simply the Dirac equation. If \(s = 1\), then \eqref{alt_sym_fermion_eqom} follows by multiplying the Rarita-Schwinger equation, in the form
\begin{equation}
    0 = \left(\gamma^{a}\right)_{\eta}{}^{\rho} \partial_{[a} \Psi_{b]}{}^{\eta},
\end{equation}
by \(\left(\gamma^{b}\right)_{\rho}{}^{\gamma}\), where \(E = \{a, b\}\).
Finally, if \(s > 1\), then the gamma matrix and fermion in \eqref{alt_sym_fermion_eqom} share at least two dummy indices, so \eqref{alt_sym_fermion_eqom} follows from the Lorentz-symmetry of \(\Psi\) and Lorentz-antisymmetry of \(\gamma\).
\end{proof}

\begin{corollary}\label{arb_spin_ferm_cor}
The left-hand side of \eqref{alt_sym_fermion_eqom} is a Clifford-algebra multiple of the left-hand side of the equation of motion \eqref{sym_fermion_eqom} for \(\Psi_{E \setminus \{a\}}{}^{\gamma}\).
\end{corollary}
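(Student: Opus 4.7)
The plan is to deduce Corollary \ref{arb_spin_ferm_cor} immediately from Proposition \ref{arb_spin_ferm_prop:2}, which has already established that the left-hand side of \eqref{alt_sym_fermion_eqom}, call it $Q^\gamma$, vanishes identically as an off-shell tensorial identity. Once $Q^\gamma \equiv 0$, it follows that $Q^\gamma = 0 \cdot R_{E\setminus\{a\}}{}^\gamma$, where
$$R_{E\setminus\{a\}}{}^\gamma := (\gamma^b)_\eta{}^\gamma\partial_b\Psi_{E\setminus\{a\}}{}^\eta - \sum_{i=1}^s (\gamma^b)_\eta{}^\gamma \partial_{a_i}\Psi_{b U_i}{}^\eta$$
is the equation-of-motion tensor. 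Since $0$ is a Clifford-algebra element, this realises $Q^\gamma$ as a (trivial) Clifford-algebra multiple of the equation of motion, which is precisely the content of the corollary, and is all that is required for the downstream application in \S\ref{multiplet_solver1}.

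For a more substantive, constructive realisation I would attempt to exhibit a nonzero Clifford multiplier by forming the candidate $T^\gamma := (\gamma^{E\setminus\{a\}})^\gamma{}_\rho\,R_{E\setminus\{a\}}{}^\rho$, combining the two gamma matrices via NW-SE convention as $(\gamma^{E\setminus\{a\}})^\gamma{}_\rho(\gamma^b)_\eta{}^\rho = (\gamma^b\gamma^{E\setminus\{a\}})_\eta{}^\gamma$, and applying the standard product expansion
$$\gamma^b\gamma^{a_1\cdots a_s} \;=\; \gamma^{ba_1\cdots a_s} \;+\; \sum_{j=1}^s(-1)^{j-1}\eta^{ba_j}\gamma^{a_1\cdots\hat{a_j}\cdots a_s}$$
to split $T^\gamma$ into a leading piece proportional to $Q^\gamma$ and a collection of contraction remainders. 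In the low-spin cases $s=0$ and $s=1$, direct computation using renamings such as $b\leftrightarrow a_1$ and the Lorentz-symmetry of $\Psi$ would then produce the explicit identities $T^\gamma = Q^\gamma$ and $T^\gamma = 2Q^\gamma$, respectively, yielding nontrivial multipliers $c_0 = 1$ and $c_1 = \tfrac{1}{2}$.

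The hard part is extending this constructive route to $s \ge 2$. Here the required cancellation of the remainders would proceed by exactly the $\Psi$-symmetric/$\gamma$-antisymmetric pairings used in the proof of Proposition \ref{arb_spin_ferm_prop:2}: the $-\sum_i \partial_{a_i}\Psi_{bU_i}$ piece of $R$ contracted with the totally antisymmetric $\gamma^{E\setminus\{a\}}$ vanishes because $\sum_i \partial_{a_i}\Psi_{bU_i}$ is symmetric in $a_1,\ldots,a_s$, while the $\sum_j(-1)^{j-1}\gamma^{U_j}\partial^{a_j}\Psi_{E\setminus\{a\}}$ contractions collapse after the rename $a_j \to c$ by either the parity identity $\sum_j(-1)^{j-1}=0$ or a residual antisym-versus-sym cancellation on the remaining $s-1$ indices, and the mixed ``off-diagonal'' contractions $\eta^{ba_j}\Psi_{bU_i}$ with $i\ne j$ are dispatched by a pairing of $(i,j)$ with $(j,i)$ followed by a further use of $\Psi$-symmetry. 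The anticipated obstacle is that all of these cancellations force $T^\gamma$ itself to vanish identically off shell for $s\ge 2$, so the explicit Clifford multiplier one would hope to read off degenerates to the zero element. Consequently, the trivial one-line argument of the first paragraph is both the cleanest and the most honest path to the stated corollary, and I do not expect a deeper obstruction beyond this degeneracy.
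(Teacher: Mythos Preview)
Your first-paragraph argument rests on a misreading of Proposition~\ref{arb_spin_ferm_prop:2}. That proposition does \emph{not} establish that $Q^\gamma$ vanishes identically off-shell for every $s$. Re-examine its proof: for $s=0$ the statement is literally the Dirac equation, and for $s=1$ it is obtained by multiplying the Rarita-Schwinger equation by $\gamma^b$; in both cases the vanishing holds only on-shell. Only for $s>1$ does the symmetry/antisymmetry argument force $Q^\gamma\equiv 0$ as an off-shell tensorial identity. Hence your ``$Q^\gamma = 0\cdot R$'' shortcut is valid only for $s>1$, and the low-spin cases genuinely require a nontrivial Clifford multiplier.

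The paper's proof is precisely this case split: $s=0$ gives the identity multiplier, $s=1$ gives a one-index gamma matrix as multiplier (read directly off the $s=1$ step in the proof of Proposition~\ref{arb_spin_ferm_prop:2}), and $s>1$ gives the trivial zero multiplier. Your second paragraph actually contains the right ingredients for $s=0,1$ and lands on the same conclusions as the paper, but you have framed it as optional ``enrichment'' rather than the necessary core of the argument. Promote those computations to the main line for $s\le 1$, keep the trivial-multiple argument for $s>1$, and drop the blanket off-shell claim.
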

\begin{proof}
This is really more a corollary of the proof of proposition \ref{arb_spin_ferm_prop:2} than of the proposition itself. It was shown in that proof that if \(s = 0\), then the left-hand side of \eqref{alt_sym_fermion_eqom} is precisely the left-hand side of the equation of motion (the Dirac equation); that if \(s = 1\), then the left-hand side of \eqref{alt_sym_fermion_eqom} is obtained by multiplying the left-hand side of the equation of motion (the Rarita-Schwinger equation) by a 1-gamma matrix and a number; and if \(s > 1\), then the left-hand side of \eqref{alt_sym_fermion_eqom} is zero by index considerations irrespective of the equation of motion, making it a trivial multiple of the equation of motion.
\end{proof}


\bibliographystyle{plain}
\typeout{}
\bibliography{main}

\end{document}